\newlist{abbrv}{itemize}{1}
\setlist[abbrv,1]{label=,labelwidth=1.2in,align=parleft,itemsep=0.1\baselineskip,leftmargin=!}
\numberwithin{equation}{section}
\newcommand{\tr}{\text{Tr}}
\newcommand{\abs}[1]{\left| #1 \right|}
\newcommand{\scp}[2]{\big\langle #1 , #2 \big\rangle}
\newcommand{\SCP}[2]{\big\langle #1 , #2 \big\rangle}
\newcommand{\bra}[1]{\langle #1 |}
\newcommand{\ket}[1]{| #1 \rangle}
\newcommand{\norm}[1]{\left\| #1 \right\| }
\renewcommand{\Re}{\mathrm{Re}}
\renewcommand{\Im}{\mathrm{Im}}
\newcommand{\id}{\mathds{1}}
\newcommand{\op}{\mathrm{op}}
\newcommand{\be}{\begin{equation}}
\newcommand{\ee}{\end{equation}}
\newtheorem{theorem}{Theorem}[section]
\newtheorem{lemma}{Lemma}[section]
\newtheorem{def:lemma}{Definition and Lemma}[section]
\newtheorem{proposition}{Proposition}[section]
\theoremstyle{definition}
\newtheorem{definition}{Definition}[section]
\theoremstyle{remark}
\newtheorem{remark}{Remark}[section]
\begin{document}


\title{Derivation of the Landau--Pekar equations in a many-body mean-field limit }

\author{Nikolai Leopold\footnote{Corresponding author, University of Basel, Department of Mathematics and Computer Science, 
Spiegelgasse 1, 4051 Basel, Switzerland,
E-mail address: \texttt{nikolai.leopold@unibas.ch} }, David Mitrouskas\footnote{Universit\"at Stuttgart, Fachbereich Mathematik, Pfaffenwaldring 57,
70569 Stuttgart, Germany. Current address: Institute of Science and Technology Austria (IST Austria), Am Campus 1, 3400 Klosterneuburg, Austria,
E-mail address: \texttt{dmitrous@ist.ac.at}}, and Robert Seiringer\footnote{Institute of Science and Technology Austria (IST Austria), Am Campus 1, 3400 Klosterneuburg, Austria,
E-mail address: \texttt{robert.seiringer@ist.ac.at}}
}

\maketitle

\frenchspacing

\begin{abstract}
We consider the Fr\"ohlich Hamiltonian in a mean-field limit where many bosonic particles weakly couple to the quantized phonon field.
For large particle number and suitably small coupling, we show that the dynamics of the system is approximately described by the Landau--Pekar equations. These describe a Bose--Einstein condensate interacting with a classical polarization field, whose dynamics is effected by the condensate, i.e., the back-reaction of the phonons that are created by the particles during the time evolution  is of leading order.

\end{abstract}

\section{Introduction}

We consider the dynamics of $N$ bosonic particles interacting with a quantized phonon field described by the Fr\"ohlich model in a mean field regime. The underlying Hilbert space is
\begin{align}
\mathcal{H}^{(N)} = L^2_s\left( \mathbb{R}^{3N} \right) \otimes \mathcal{F}_s ,
\end{align}
where the $N$ particles are described by states in $L^2_s(\mathbb R^{3N})$, the subspace of all complex-valued square integrable $N$-particle wave functions that are symmetric under the exchange of any pair of the coordinates $(x_1,...,x_N)$, and where the phonon field is represented by elements in the bosonic Fock space 
$
 \mathcal{F}_s  =  \bigoplus_{n \geq 0} L^2_s(\mathbb{R}^{3n}).
$
The time evolution of the system is governed by the Schr\"odinger equation
\begin{align}
\label{eq: Schroedinger equation}
 i \partial_t \Psi_{N,t} = H_{N,\alpha}^{\rm F} \Psi_{N,t}
\end{align}
with Fr\"ohlich Hamiltonian
\begin{align}
\label{eq: Froehlich Hamiltonian quadratic form}
H_{N,\alpha}^{\rm F} &= \sum_{j=1}^N  \left[ - \Delta_j  +  \sqrt{\frac{\alpha}{N}}  \int d^3k \, \abs{k}^{-1}
\left( e^{ikx_j} a_k + e^{-ikx_j } a^*_k  \right)  \right] + \mathcal{N} .
\end{align}
Here, $\Delta_j$ is the Laplacian acting on the $j$th particle with coordinate $x_j$, $a_k$ and $a_k^*$ denote the usual bosonic annihilation and creation operators satisfying the canonical commutation relations
\begin{align}
\label{eq: canonical commutation relation}
[a_k, a^*_l ] &=  \delta(k-l), \quad
[a_k, a_l ] = 
[a^*_k, a^*_l ] = 0,
\end{align}
and $\mathcal N$ is the number operator defined by $\mathcal N= \int d^3 k\, a_k^* a_k$. The coupling parameter $\sqrt{\alpha/N}$ is introduced to scale the strength of the interaction between the particles and the phonon field. If the number of phonons is of order $N$ and $\alpha >0$ is fixed, the factor $N^{-1/2}$ and the fact that the creation and annihilation operators scale like $\sqrt{N}$ (they are  bounded by $\left(\mathcal{N} + 1 \right)^{1/2}$, see \eqref{eq: bound for annihialtion and creation operators}) ensure that the kinetic and potential energy are of the same order for large $N$.

We note that the expression \eqref{eq: Froehlich Hamiltonian quadratic form} is somewhat formal, since the form factor $\vert k\vert^{-1}$ in the interaction term is not square integrable. By a well-known argument going back to Lieb and Yamazaki \cite{liebyamazaki} (cf. Lemma \ref{lemma: bounds for the interaction term}), the right side of \eqref{eq: Froehlich Hamiltonian quadratic form} defines a closed bounded from below quadratic form with domain given by the form domain of $H_{N,0}^{\rm F}$. The self-adjoint operator that corresponds to this form is called Fr\"ohlich Hamiltonian and denoted by $H_{N,\alpha}^{\rm F}$. We refer to \cite{griesemerwuensch} for a detailed description of its domain $\mathcal D(H_{N,\alpha}^{\rm F})$ (see also Lemma~\ref{lemma representation of H_F}). 

\allowdisplaybreaks

If the number $N$ of particles is large, we show for a particular class of initial states that the solution of the many-body Schr\"odinger equation \eqref{eq: Schroedinger equation} can be approximated by Pekar product states, i.e., states of the form
\begin{align}\label{eq: Pekar-State}
\Psi_{N,t} = \psi_t^{\otimes N}\otimes W(\sqrt N \varphi_t) \Omega,
\end{align}
where $\Omega$ is the vacuum state in $\mathcal F_s$, $W$ denotes the Weyl operator  and $(\psi_t,\varphi_t) \in L^2(\mathbb R^3) \times L^2(\mathbb R^3)$ solve the time-dependent Landau--Pekar equations
\begin{align}
\label{eq: Landau Pekar equations}
\begin{cases}
 i \partial_t \psi_t(x) & = \ \ \left[ - \Delta_x + \sqrt{ \alpha}  \Phi(x,t)  \right] \psi_t(x),  \\[2mm]
i \partial_t \varphi_t(k) & = \ \   \varphi_t(k) + \sqrt{ \alpha } \abs{k}^{-1}  \int d^3x \,  e^{-ikx} \abs{\psi_t(x)}^2  
\end{cases}
\end{align}
where 
\begin{align}
\Phi(x,t) &=
 \int d^3k \, \abs{k}^{-1}
\left(  e^{ikx} \varphi_t(k)  + e^{-ikx}  \overline{\varphi_t(k)}  \right).
\end{align}
The Weyl operator is defined for any $f\in L^2(\mathbb R^3)$ by
\begin{align}
\label{eq: Weyl operator}
W(f) = \exp \left(  \int d^3k \, \big( f(k) a^*_k - \overline{f(k)} a_k \big) \right).
\end{align} 
In the Pekar product state \eqref{eq: Pekar-State}, the phonons are in the coherent state $W (\sqrt N \varphi_t) \Omega$ with average number of excitations of order $N$, and the bosonic particles form a pure Bose--Einstein condensate with condensate wave function $\psi_t$. According to the Landau--Pekar equations, the one-particle condensate wave function $\psi_t$ evolves in the potential $\sqrt\alpha \Phi(x,t)$ created by the phonons, while the phonon field couples to the particles via the source term involving the  density $\abs{\psi_t(x)}^2$.

Our main result can be summarized as follows: Given an initial wave function $\Psi_{N,0}$ that is close to a Pekar product state $\psi_0^{\otimes N}\otimes W(\sqrt N \varphi_0)\Omega$ (close in an appropriate sense that will be specified in the next section), then the time evolved state $e^{-iH_{N,\alpha}^{\rm F} t}\Psi_{N,0}$ remains close to the time evolved Pekar state \eqref{eq: Pekar-State} when $N\gg 1$.

The Landau--Pekar equations were originally introduced in \cite{landaupekar} to approximate the time evolution of a single polaron in the strong coupling limit. In our notation, the strong coupling regime corresponds to the Hamiltonian $H_{1,\alpha}^{\rm F}$ with $\alpha \gg 1$. Partial results concerning a rigorous derivation of the Landau--Pekar equations in the strong coupling limit were obtained in \cite{frankschlein, frankgang, griesemer, LRSS} (for a detailed comparison between the different results we refer to \cite[Chapter 2]{LRSS}). In these works, the Landau--Pekar equations are justified for short times, namely at most for times of order $\alpha^{-\varepsilon}$ with $\varepsilon >0$ arbitrary small.\footnote{It should be noted that results about the polaron in the strong coupling limit are usually formulated in strong coupling units and that times of order $\alpha^2$ in the stated references correspond to times of order one in the units of the present paper.} A derivation for times of order one, the time scale in the strong coupling limit at which the back-reaction of the phonons that are created during the time evolution is of leading order, remains an open problem. The emergence of classical radiation in the strong coupling limit is expected to rely on the adiabatic decoupling between the relatively fast moving (w.r.t. $\alpha$) electron and the radiation field. For results on adiabatic theorems of the Landau--Pekar equations in one and three dimensions we refer to \cite{frankgang2} and \cite{LRSS}.

In the many-particle mean-field limit considered in this work, the creation of coherent radiation happens for a different reason than in the strong coupling regime, namely because there are many particles in the same quantum state that simultaneously create the phonons. In this regard, the present work is related to \cite{ammarifalconi, falconi,  leopoldpetrat, leopoldpickl, leopoldpickl2}, where  many-body mean-field limits of the renormalized Nelson model, the Nelson model with ultraviolet cutoff and the (bosonic) Pauli--Fierz model are considered. In particular, we mention \cite{ammarifalconi} where the Schr\"odinger--Klein--Gordon equations were derived by the Wigner measure approach as a limit of the renormalized Nelson model. 

In \cite{ginibrenironivelo, CCFO, CFO}, effective equations for the Nelson, Pauli--Fierz and Fr\"ohlich model were derived in a partially classical limit. There, the number of particles is kept fixed while the number of excitations of the quantum field tends to infinity and the coupling constant approaches zero in a suitable sense. The effect of the excitations that are created during time evolution is negligible in this limit and the quantum field can thus be approximated by a classical field that evolves freely or remains constant in time.

To the  best of our knowledge, the present work provides the first derivation of the Landau--Pekar equations in a limit in which the back-reaction of the phonons that are created during time evolution is of leading order. Moreover, our results include explicit error estimates. 

In order to derive our results, we follow \cite{leopoldpickl2}, which combines the methods from \cite{pickl1} and \cite{rodnianskischlein}. The new technical challenge in comparison with \cite{leopoldpickl2} is to show that the high momentum phonons do not obstruct the expected mean-field behavior. This requires several nontrivial modifications. First, it is crucial to introduce a measure for the excitations around the condensate resp.\ around the coherent state that involves the canonical transformation due to Gross and Nelson (see \eqref{eq: definition Gross transform}). In particular, we use the representation of the Fr\"ohlich Hamiltonian in \cite{griesemerwuensch}. The most difficult part is to control the interaction between the ultraviolet modes of the phonon field and the fraction of particles not in the condensate. To this end, we restrict our consideration to a subclass of the initial states which have small fluctuations in the energy per particle observable and combine estimates similar to \cite[Sect. VIII.1]{leopoldpickl} with an operator bound that is motivated by \cite[Lemma 10]{frankschlein}. The idea of using this restriction in order to treat the singular interaction between quantum fields and particles in the mean field regime was already used in \cite{leopoldpickl}.

The article is organized as follows. In the next section, we state our main results. In Theorem \ref{theorem: main theorem}, we consider initial states in the domain of the Fr\"ohlich Hamiltonian, while Theorem \ref{theorem: main theorem: 2} is about initial states in the domain of the noninteracting model (including, in particular, product states). In Section \ref{sec: Notation and basic estimates}, we introduce useful notation and discuss the representation of the Fr\"ohlich Hamiltonian via the Gross transformation. The key steps of the proof of our main result are summarized in Section \ref{section: Proof of the Main Theorem} in terms of several lemmas. The proofs of these are given in Sections \ref{section: variance of the energy bound}--\ref{sec: remaining proofs}.

\section{Main results}

For notational convenience, we set the coupling constant $\alpha = 1$ from now on and denote $H_N^{\rm F} = H_{N,1}^{\rm F}$. All statements and proofs that follow are, however, equally true for any $\alpha >0$ independent of $N$.

In order to state our main results we define for $\Psi_N\in \mathcal{H}^{(N)}$ the one-particle reduced density matrix
\begin{align}
\label{eq: definition reduced one-particle matrix charged particles}
\gamma_{\Psi_N}^{(1,0)} = \tr_{2,\ldots, N} \otimes \tr_{\mathcal{F}_s} \ket{\Psi_{N}} \bra{\Psi_{N}} 
\end{align}
 on the Hilbert space $L^2(\mathbb{R}^3)$. 
Here, $\tr_{2,\ldots, N}$  denotes the partial trace over the coordinates $x_2,\ldots, x_N$ and $\tr_{\mathcal{F}_s}$ the trace over Fock space. The particles of a many-body state $\Psi_{N}$ are said to exhibit complete  Bose--Einstein condensation if there exists 
$\psi \in L^2(\mathbb{R}^3)$ with $\norm{\psi}_{L^2(\mathbb{R}^3)}=1$ such that
\begin{align}
\label{eq: convergence reduced one-particle matrix charged particles}
\tr_{L^2(\mathbb{R}^3)} \abs{\gamma_{\Psi_N}^{(1,0)} - \ket{\psi} \bra{\psi}}  \rightarrow 0
\end{align}
as $N \rightarrow \infty$. In this case $\psi$ is called the condensate wave function. Moreover, we define (for $\left(\psi , \varphi \right) \in L^2(\mathbb{R}^3) \times  L^2(\mathbb{R}^3)$ and $\Psi_{N}\in \mathcal{D} \left( H_N^{\rm F} \right)$) 
\begin{align}
a(\Psi_{N},\psi) & =  \textnormal{Tr}_{L^2(\mathbb{R}^3)}   \abs{\gamma^{(1,0)}_{\Psi_{N }} - \ket{\psi } \bra{\psi}}  , \label{DEF:AN}\\[1mm]
b(\Psi_{N},\varphi) &= N^{-1} \scp{W^* (\sqrt{N} \varphi) \Psi_{N }}{\mathcal{N} \, W^* (\sqrt{N} \varphi) \Psi_{N}}  , \label{DEF:BN} \\[1mm]
c(\Psi_{N}) &=   \norm{N^{-1} \left( H_N^{\rm F} - \scp{\Psi_{N}}{ H_N^{\rm F} \Psi_{N}} \right) \Psi_{N}}^2 .  \label{DEF:CN}
\end{align} 

For $m \in \mathbb{N}$, let $H^m(\mathbb{R}^3)$ denote the Sobolev space of order $m$ and $L_m^2(\mathbb{R}^3)$ a weighted $L^2$-space with norm $\| \varphi\| _{L_m^2(\mathbb{R}^3)} = \| ( 1 + \abs{\,\cdot\,}^2 )^{m/2} \varphi \|_{L^2(\mathbb{R}^3)}$.
We will use the following result which was proven in \cite{frankgang}.

\begin{proposition}[Lemma~C.2 in \cite{frankgang}]\label{prop: solution theory for LPeq}
The Landau--Pekar equations \eqref{eq: Landau Pekar equations} are globally well-posed in $ H^2(\mathbb{R}^3) \times L_1^2(\mathbb{R}^3)$. For all $t \in \mathbb{R}$ we have
\begin{align}
\norm{\psi_t}_{H^2(\mathbb{R}^3)} &\leq C \left( 1 +  \abs{t} \right) 
\quad \text{and} \quad
\norm{\varphi_t}_{L_1^2(\mathbb{R}^3)} \leq C \left( 1 +  \abs{t} \right) \label{BOUNDS:LP:SOLUTIONS}
\end{align}
where $C$ is a constant depending only on the initial data.
\end{proposition}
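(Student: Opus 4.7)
I would follow the classical three-step recipe for semilinear Schr\"odinger systems: (i) local well-posedness in $X := H^2(\RRR^3)\times L^2_1(\RRR^3)$ by a contraction argument, (ii) uniform a priori bounds from mass and energy conservation, and (iii) a careful propagation of the top-order norms to obtain the claimed linear-in-$t$ growth. A crucial simplification is that the second Landau--Pekar equation is, pointwise in $k$, an inhomogeneous linear ODE with explicit solution
\begin{equation*}
\varphi_t(k) = e^{-it}\varphi_0(k) - i\sqrt{\alpha}\,|k|^{-1}\int_0^t e^{-i(t-s)}\,\widehat{|\psi_s|^2}(k)\,ds,
\end{equation*}
so the dynamics of $\varphi$ is entirely determined by $\psi$ via a Duhamel integral. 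Substituting into the Duhamel form of the $\psi$-equation and using that $\Phi[\varphi]$ is (up to a real part) the Coulomb-type convolution $c\,|x|^{-2}\ast\mathcal{F}^{-1}\varphi$, hence controlled in $W^{1,\infty}(\RRR^3)$ by $\|\varphi\|_{L^2_1}$ via Hardy--Littlewood--Sobolev and Sobolev embedding, yields a local contraction in $C([0,T];X)$.

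\textbf{A priori bounds.} Both the $L^2$-norm $\|\psi_t\|_{L^2}$ and the Pekar energy
\begin{equation*}
\mathcal{E}(\psi,\varphi) = \|\nabla\psi\|^2 + \|\varphi\|^2 + \sqrt{\alpha}\int\Phi[\varphi](x)|\psi(x)|^2\,dx
\end{equation*}
are conserved along the flow. The cross term satisfies $\big|\int\Phi[\varphi]|\psi|^2\big|\leq \tfrac12\|\varphi\|_{L^2}^2 + C\|\psi\|_{L^2}^3\|\nabla\psi\|_{L^2}$ via the identity $\||k|^{-1}\widehat{|\psi|^2}\|_{L^2}^2 \sim \iint|\psi(x)|^2|\psi(y)|^2/|x-y|\,dx\,dy$, HLS and Gagliardo--Nirenberg. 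Absorbing into the positive terms and using mass conservation yields the uniform bounds $\|\psi_t\|_{H^1}, \|\varphi_t\|_{L^2}\leq C$ depending only on the initial data, and in particular global existence in $H^1\times L^2$.

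\textbf{Linear growth of $H^2\times L^2_1$.} For $\varphi$ the bound is immediate from the Duhamel formula by placing $\langle k\rangle$ inside the integral:
\begin{equation*}
\|\varphi_t\|_{L^2_1} \leq \|\varphi_0\|_{L^2_1} + C\int_0^t\Big(\||k|^{-1}\widehat{|\psi_s|^2}\|_{L^2} + \||\psi_s|^2\|_{L^2}\Big)ds \leq C(1+|t|),
\end{equation*}
since both integrands are uniformly controlled by $\|\psi_s\|_{H^1}^2\leq C$. For $\psi_t$, I differentiate $\|\Delta\psi_t\|^2$ in time; self-adjointness of $H_t = -\Delta+\sqrt\alpha\Phi[\varphi_t]$ eliminates the kinetic part, leaving only the commutator
\begin{equation*}
\frac{d}{dt}\|\Delta\psi_t\|^2 = 2\sqrt\alpha\,\Im \SCP{\Delta\psi_t}{\big(2\nabla\Phi[\varphi_t]\cdot\nabla + \Delta\Phi[\varphi_t]\big)\psi_t}.
\end{equation*}
Estimating the right-hand side by appropriate $L^p$-norms of derivatives of $\Phi[\varphi_t]$, each controlled by $\|\varphi_t\|_{L^2_1}\leq C(1+|t|)$ via HLS/Sobolev, together with the uniform $H^1$-bound on $\psi_t$, and closing via a careful differential-inequality argument then yields $\|\psi_t\|_{H^2}\leq C(1+|t|)$.

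\textbf{Main obstacle.} The hard part is controlling $\Delta\Phi[\varphi_t]$: its Fourier transform is (up to signs) $|k|\varphi_t(k)$, which is in $L^2$ under the $L^2_1$-bound but \emph{not} in $L^\infty$. A naive pointwise estimate therefore forces the Sobolev embedding $\|\psi_t\|_{L^\infty}\lesssim\|\psi_t\|_{H^2}$ into the right-hand side and produces only a superexponential Gronwall bound. To recover the \emph{linear} growth stated in Proposition~\ref{prop: solution theory for LPeq} one must either work with a modified energy adapted to the time-dependent Hamiltonian $H_t$ (e.g.\ $\|H_t\psi_t\|^2$, where self-adjointness cancels the most singular piece and the remaining term $\Re\langle H_t\psi_t,(\partial_t\Phi)\psi_t\rangle$ can be handled using the cancellation of the Coulomb contribution in $\partial_t\Phi$ that follows from the $\varphi$-equation), or else redistribute derivatives via integration by parts, exploiting the antisymmetry of $\Im(\partial_i\bar\psi\,\partial_j\psi)$ against the symmetric Hessian $\partial_i\partial_j\Phi[\varphi_t]$ to kill the leading-order contribution.
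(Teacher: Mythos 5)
The paper does not contain its own proof of Proposition~\ref{prop: solution theory for LPeq}; it simply cites Lemma~C.2 of Frank and Gang \cite{frankgang}. So there is no in-paper argument to compare against, and your proposal has to be assessed on its own and against the reference.

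Your overall scheme -- local well-posedness by contraction, global a priori bounds from mass and Pekar-energy conservation with the HLS/Gagliardo--Nirenberg control of the cross term, and then propagation of the top-order norms -- is the standard one and matches the spirit of what \cite{frankgang} do. The parts you actually carry out are correct: the energy cross term does absorb into the positive terms, giving a uniform $H^1\times L^2$ bound; the Duhamel formula for $\varphi$ together with $\|\,|k|^{-1}\widehat{|\psi_s|^2}\,\|_{L^2}\lesssim\|\psi_s\|_{H^1}^2$ and $\|\,|\psi_s|^2\,\|_{L^2}\lesssim\|\psi_s\|_{H^1}^2$ yields $\|\varphi_t\|_{L^2_1}\le C(1+|t|)$ as claimed; and your diagnosis that the naive $\frac{d}{dt}\|\Delta\psi_t\|^2$ estimate fails because $\Delta\Phi[\varphi_t]$ is only in $L^2$ (its Fourier transform is $\sim|k|\varphi_t(k)$) is exactly right.

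The genuine gap is that you stop at the hardest point. You name the modified energy $\|H_t\psi_t\|^2$ and correctly point out the helpful cancellation that, since $|\psi_t|^2$ is real and hence $\widehat{|\psi_t|^2}(-k)=\overline{\widehat{|\psi_t|^2}(k)}$, the source term of the $\varphi$-equation drops out of $\partial_t\Phi[\varphi_t]$, leaving $\partial_t\Phi[\varphi_t](x)=2\,\Im\int|k|^{-1}e^{ikx}\varphi_t(k)\,dk$. But you do not close the argument, and your sketch is too optimistic: the only available bound from here is $\|\partial_t\Phi[\varphi_t]\|_{L^\infty}\lesssim\|\varphi_t\|_{L^2_1}\lesssim 1+|t|$, which together with $\frac{d}{dt}\|H_t\psi_t\|^2=2\sqrt\alpha\,\Re\langle H_t\psi_t,(\partial_t\Phi[\varphi_t])\psi_t\rangle$ and $\|\psi_t\|_{L^2}=1$ gives $\frac{d}{dt}\|H_t\psi_t\|\lesssim 1+|t|$ and hence only $\|H_t\psi_t\|\lesssim(1+|t|)^2$, i.e.\ quadratic, not linear, growth. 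You would need to either exhibit an additional cancellation in $\Re\langle H_t\psi_t,(\partial_t\Phi)\psi_t\rangle$ or carry out the integration-by-parts alternative in enough detail to see it actually beats the naive rate. As written, neither is done, and the key step -- the $H^2$ bound on $\psi_t$ at the stated rate -- is a conjecture rather than a proof.
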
 

We are now ready to state our main theorem.

\begin{theorem}
\label{theorem: main theorem}
Let $(\psi, \varphi) \in H^2(\mathbb{R}^3) \times L_1^2(\mathbb{R}^3)$ s.t. $\norm{\psi}_{L^2(\mathbb{R}^3)} = 1$, and $\Psi_{N} \in \mathcal{D} ( H_N^{\rm F} )$ s.t. $\norm{\Psi_{N}} = 1$ and $E_0 = \sup_{N \in \mathbb{N}} \big\vert N^{-1} \scp{\Psi_{N }}{H_N^{\rm F} \Psi_{N }} \big\vert   <\infty $.
Let $(\psi_t, \varphi_t)$ be the unique solution of \eqref{eq: Landau Pekar equations} with initial datum $(\psi, \varphi)$ and $\Psi_{N,t}= e^{-iH_N^{\rm F} t}\Psi_{N}$. Then, there exists a constant $C >0$ (depending only on $\norm{\varphi }_{L_1^2(\mathbb{R}^3)}$, $\norm{\psi }_{H^2(\mathbb{R}^3)}$ and $E_0$) such that
\begin{align}
\textnormal{Tr}_{L^2(\mathbb{R}^3)}   \abs{\gamma^{(1,0)}_{\Psi_{N,t}} - \ket{\psi_t} \bra{\psi_t}} & \leq \sqrt{ a(\Psi_{N},\psi)  + b(\Psi_{N},\varphi)  + c(\Psi_{N})  +  N^{-1/2}} e^{C (1+\abs{t})^3 } ,\label{MAIN:BOUND:TRACE:NORM:1:0}
\\[3mm]
&\hspace{-4cm} N^{-1} \scp{W^* (\sqrt{N} \varphi_t) \Psi_{N,t}}{\mathcal{N}  \, W^* (\sqrt{N} \varphi_t) \Psi_{N,t}} \label{MAIN:BOUND:TRACE:NORM:0:1} \\[1mm]
 & \leq \left( a(\Psi_{N},\psi)  + b (\Psi_{N },\varphi )  + c(\Psi_{N }) +  N^{-1/2} \right) e^{C ( 1 + \abs{t})^3 } .\nonumber
\end{align}
\end{theorem}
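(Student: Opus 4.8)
The plan is to introduce a suitable functional $\beta_N(t)$ that simultaneously measures the deviation of $\Psi_{N,t}$ from the Pekar product state in both the particle sector and the phonon sector, to prove a Gronwall-type differential inequality $\partial_t \beta_N(t) \le C(1+|t|)^2 \,\big(\beta_N(t) + N^{-1/2}\big)$, and then to close the argument by integrating. Following \cite{leopoldpickl2}, I would set $\beta_N(t) = \beta_N^a(t) + \beta_N^b(t)$, where $\beta_N^b(t) = N^{-1}\scp{W^*(\sqrt N\varphi_t)\Psi_{N,t}}{\mathcal N\, W^*(\sqrt N\varphi_t)\Psi_{N,t}}$ controls the phonon fluctuations around the coherent state, and $\beta_N^a(t)$ is the usual Pickl-type counting functional measuring the fraction of particles outside the condensate $\psi_t$ (so that $\beta_N^a(0) \lesssim a(\Psi_N,\psi)$ and $\beta_N^b(0) = b(\Psi_N,\varphi)$, up to controlled terms). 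The crucial point emphasized in the introduction is that the counting functional and the phonon number should be defined \emph{after} applying the Gross--Nelson canonical transformation from \eqref{eq: definition Gross transform}, using the Griesemer--W\"unsch representation of $H_N^{\rm F}$, so that the singular ultraviolet part of the interaction is renormalized away and only manifestly form-bounded remainders survive.

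The key steps, in order, would be: (1) Write the generator of the time evolution in the Gross-transformed frame and, using the Weyl-operator conjugation $W^*(\sqrt N\varphi_t)(\cdot)W(\sqrt N\varphi_t)$, derive the equation of motion for $\beta_N(t)$; the mean-field cancellations built into the Landau--Pekar equations \eqref{eq: Landau Pekar equations} kill the leading (order $N$) terms, leaving commutator/interaction terms of order $N$ times $\beta_N$ plus genuinely lower-order remainders. (2) Estimate each remaining term. The "easy" ones are handled by the bounds for annihilation/creation operators (Lemma \ref{lemma: bounds for the interaction term}), the Sobolev bounds $\norm{\psi_t}_{H^2}\le C(1+|t|)$ and $\norm{\varphi_t}_{L_1^2}\le C(1+|t|)$ from Proposition \ref{prop: solution theory for LPeq} (which is where the $(1+|t|)^3$ in the exponent originates, after integrating the $(1+|t|)^2$ rate), and Cauchy--Schwarz to bring everything back to $\beta_N + N^{-1/2}$. (3) Control the problematic term coupling the ultraviolet modes of the phonon field to the particles \emph{not} in the condensate; this is where $c(\Psi_N)$ enters. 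Here I would combine an energy-per-particle estimate in the spirit of \cite[Sect. VIII.1]{leopoldpickl} with an operator bound modeled on \cite[Lemma 10]{frankschlein}, exploiting that on the restricted class of initial states $c(\Psi_N)$ is finite and (by an a priori propagation estimate) $N^{-1}\norm{(H_N^{\rm F} - \langle H_N^{\rm F}\rangle)\Psi_{N,t}}^2$ stays controlled along the flow, so the fluctuation of the energy per particle remains small. (4) Assemble the differential inequality, apply Gronwall, and finally translate the bound on $\beta_N(t)$ back into the trace-norm statement \eqref{MAIN:BOUND:TRACE:NORM:1:0} using the standard comparison between the counting functional and $\textnormal{Tr}|\gamma^{(1,0)}_{\Psi_{N,t}} - \ketbr{\psi_t}|$ (which costs at most a square root, explaining why the right side of \eqref{MAIN:BOUND:TRACE:NORM:1:0} carries $\sqrt{\;\cdot\;}$ while \eqref{MAIN:BOUND:TRACE:NORM:0:1} does not), and reading off \eqref{MAIN:BOUND:TRACE:NORM:0:1} directly from $\beta_N^b(t)$.

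The main obstacle is step (3): bounding the interaction between the high-momentum phonon modes and the non-condensed particles. The ultraviolet form factor $|k|^{-1}$ is not square integrable, so a naive Cauchy--Schwarz on the creation/annihilation operators fails, and even after the Gross transformation the renormalized interaction still contains a piece that is only operator-bounded relative to $\sum_j(-\Delta_j) + \mathcal N$, not relative to $\beta_N$. The resolution is precisely the restriction to initial data with small energy-variance per particle: the operator bound à la \cite[Lemma 10]{frankschlein} lets one absorb the dangerous term into $N^{-1}(H_N^{\rm F} - \langle H_N^{\rm F}\rangle)$, whose square is exactly $c(\Psi_{N,t})$, and a separate (comparatively soft) argument shows $c(\Psi_{N,t})$ stays bounded in terms of $c(\Psi_N)$ and the other data. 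Getting the powers of $N$ and the time-dependence in this estimate to match up so that the Gronwall argument closes with only a $(1+|t|)^3$ loss is the delicate bookkeeping that distinguishes this work from \cite{leopoldpickl2}.
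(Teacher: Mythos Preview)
Your outline matches the paper's strategy closely: Gross-transformed counting and phonon-fluctuation functionals, a Gr\"onwall inequality driven by the Landau--Pekar cancellations and the $(1+|t|)$ growth of Proposition~\ref{prop: solution theory for LPeq}, the Frank--Schlein operator bound for the ultraviolet term, and the standard square-root loss in passing to the trace norm. Three points deserve sharpening.

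First, $c(\Psi_{N,t})$ is not merely ``controlled by a propagation estimate'': since $\Psi_{N,t}=e^{-iH_N^{\rm F}t}\Psi_N$, the quantity $\beta^c(t)=\|N^{-1}(H_N^{\rm F}-\langle H_N^{\rm F}\rangle)\Psi_{N,t}\|^2$ is \emph{exactly conserved}. The paper therefore includes $\beta^c$ as a third summand in the Gr\"onwall functional $\beta_K=\beta_K^a+\beta_K^b+\beta^c$ from the outset; it enters the estimates not directly but through Lemma~\ref{lemma: variance of the energy bound}, which bounds $\|\nabla_2\,q_1^{\psi}U_K\Psi_N\|^2$ by $\beta_K$ plus small errors. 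That quantity is exactly what the symmetrization-plus-Frank--Schlein argument on the dangerous term \eqref{eq: time derivative beta-b  4} produces, so this lemma is the bridge between the UV estimate and the energy variance.

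Second, the Gross transform $U_K$ carries a free momentum cutoff $K$, and the Gr\"onwall inequality the paper actually proves (Lemma~\ref{lemma: Gronwall bound for beta}) has error $KN^{-1}+K^{-1}$, not $N^{-1/2}$. The $N^{-1/2}$ only appears after optimizing $K=\widetilde K N^{1/2}$ at the very end; this choice, not bookkeeping, is what fixes the rate.

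Third, because $\beta_K^a$ and $\beta_K^b$ are defined on $U_K\Psi_{N,t}$ rather than $\Psi_{N,t}$, a separate lemma (Lemma~\ref{LEMMA:CLOSENESS:GAMMA}) is needed both at $t=0$ and at time $t$ to compare $\gamma^{(1,0)}_{\Psi_{N,t}}$ with $\gamma^{(1,0)}_{U_K\Psi_{N,t}}$ and $b(\Psi_{N,t},\varphi_t)$ with $\beta_K^b(t)$; the error is $O(K^{-3/2})$, harmless for $K\sim N^{1/2}$.
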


The proof is given in Section \ref{section: Proof of the Main Theorem}.

\begin{remark}
If one considers initial many-body states in which the particles are in a Bose--Einstein condensate, the phonons are in a coherent states and the energy has small fluctuations around its mean value, i.e.
\begin{align}\label{eq: a+b+c to 0}
\lim_{N \rightarrow \infty} \left( a(\Psi_{N},\psi) + b(\Psi_{N},\varphi) + c(\Psi_{N}) \right) = 0
\end{align}
it follows from Theorem \ref{theorem: main theorem} that
\begin{align}
\lim_{N \rightarrow \infty} \textnormal{Tr}_{L^2(\mathbb{R}^3)} \abs{\gamma^{(1,0)}_{\Psi_{N,t}} - \ket{\psi_t} \bra{\psi_t}} 
&=0  \quad \text{and}
\nonumber  \\
\lim_{N \rightarrow \infty}
N^{-1} \scp{W^* (\sqrt{N} \varphi_t) \Psi_{N,t}}{\mathcal{N} \, W^* (\sqrt{N} \varphi_t) \Psi_{N,t}} &=0 .
\end{align}
Our result consequently shows the stability of the condensate and the coherent state during the time evolution.
\end{remark}

\begin{remark}
The condition $c(\Psi_{N}) \rightarrow 0$ as $N \rightarrow \infty$ restricts the initial data to many-body states $\Psi_{N}$ whose energy per particle has small fluctuations around its mean value. In our proof, this is important to obtain sufficient control on the singular ultraviolet behavior of the interaction term in $H_N^{\rm F}$. We give a detailed explanation of this point in Section \ref{section: variance of the energy bound}. In the presence of an ultraviolet cutoff in the Fr\"ohlich Hamiltonian, the estimates \eqref{MAIN:BOUND:TRACE:NORM:1:0} and \eqref{MAIN:BOUND:TRACE:NORM:0:1} hold without the appearance of $c(\Psi_{N})$ on the right hand side, but with a cutoff dependent constant $C$. In this simpler case, the statement could be proven in close analogy to \cite{falconi, leopoldpickl2} where the Nelson model was considered with ultraviolet cutoff.
\end{remark}

Next, we give examples of initial states that satisfy \eqref{eq: a+b+c to 0}. The quantities $a(\Psi_N,\psi)$ and $b(\Psi_N,\varphi)$ are identically zero for Pekar product states $\Psi_{N}=\psi^{\otimes N} \otimes W(\sqrt{N} \varphi) \Omega$ with $(\psi, \varphi) \in H^2(\mathbb{R}^3) \times L_1^2(\mathbb{R}^3)$. However, such Pekar states are in the domain $\mathcal D(H_N^0) = \big( H_s^2(\mathbb{R}^{3N}) \otimes \mathcal{F}_s  \big) \cap \mathcal{D} (\mathcal{N} )$ of the free Hamiltonian
\begin{align}\label{def: free hamiltonian}
H_N^0 = -\sum_{j=1}^N \Delta_{j} + \mathcal N,
\end{align}
and thus, as shown in \cite{griesemerwuensch}, can not be elements of $\mathcal{D} ( H_N^{\rm F} )$. As a consequence, $c(\Psi_{N})$ would be infinite in this case. 
To specify states that satisfy \eqref{eq: a+b+c to 0}, we introduce the Gross transform
\begin{align}
\label{eq: definition Gross transform}
U_{K} = \exp \Bigg[    N^{-1/2} \sum_{j=1}^N \int d^3k \, 
\left(  \overline{B_{K,x_j}(k)} a_k   - B_{K,x_j}(k) a_k^* \right)   \Bigg],
\end{align}
where
\begin{align}
\label{eq: definition B-K-N}
B_{K,x }(k)  &= \frac{-1}{\abs{k} (1 + k^2)} e^{-ikx }   \mathds{1}_{\abs{k} \geq K}(k)  
\end{align}
for $0 < K< \infty$. The Gross transform, which goes back to Gross and Nelson \cite{gross,nelson}, relates the domains of $H_N^0$ and $H_N^{\rm F}$ to each other.\footnote{The Gross transform adds correlations between the bosons and phonon modes with momentum $\vert k \vert \ge  K$. This leads to a better ultraviolet behavior of the radiation field.}
In Lemma \ref{lemma representation of H_F} we show that there is a $\widetilde K>0$ such that for all $K\ge \widetilde K$ and all $N\ge 1$, the domains satisfy
\begin{align}\label{DOMAIN:H:NF}
\mathcal{D} \left( H^{\rm F}_N \right) &= U_K^* \mathcal{D} \left( H_N^0 \right).
\end{align}
If we choose $K$ as an $N$-dependent sufficiently rapidly growing sequence $(K_N)_{N\ge 1}$, then the Gross transform $U_{K_N}$ has negligible effect on the condensate and the coherent state structure. This is summarized in the next proposition.


\begin{proposition}
\label{proposition: initial bounds for gross transformed product}
Assume $K\ge c $ for some $c > 0 $ and consider the state $\Psi_{N} = U_K^*\ \big( \psi^{\otimes N}  \otimes W(\sqrt N \varphi) \Omega\big)$
with $(\psi, \varphi) \in H^2(\mathbb{R}^3) \times L_1^2(\mathbb{R}^3)$ and $\norm{\psi}_{L^2(\mathbb{R}^3)} = 1$. Then there exists a $C>0$ such that $\sup_{N\in \mathbb N} \big\vert N^{-1}\scp{\Psi_{N}}{H_N^{\rm F} \Psi_{N}} \big\vert \le C$ and
\begin{align}
a(\Psi_{N},\psi) & \le   \frac{C}{K^{3/2}},\quad  b(\Psi_{N},\varphi) \le \frac{C}{K^3}, \quad  c(\Psi_{N}) \le C  \Big( K^{-1} + N^{-1} + \frac{K}{N^2} \Big) \label{INITIAL:BOUND:CN}
\end{align} 
with $a(\Psi_{N},\psi)$, $b(\Psi_{N},\varphi)$ and $c(\Psi_N)$ defined as in Theorem \ref{theorem: main theorem}. 
\end{proposition}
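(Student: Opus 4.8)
The plan is to estimate the three quantities $a,b,c$ separately for the Gross-transformed Pekar state $\Psi_N = U_K^*(\psi^{\otimes N}\otimes W(\sqrt N\varphi)\Omega)$. The starting observation is that $U_K$ is the exponential of an antisymmetric operator built from one-body kernels $B_{K,x_j}$ with $\|B_{K,\cdot}\|_{L^2}^2 = \int_{|k|\ge K} |k|^{-2}(1+k^2)^{-2}\,d^3k = O(K^{-3})$, so $U_K$ is a product of $N$ commuting Weyl-type operators each displacing the field by an amount of size $N^{-1/2}K^{-3/2}$. First I would record the basic commutation identities: $U_K^* a_k U_K = a_k - N^{-1/2}\sum_j B_{K,x_j}(k)$ (and the adjoint relation), and $U_K^* x_j U_K = x_j$; from these one gets $U_K^* \mathcal N U_K \le 2\mathcal N + 2\,\|B_{K,\cdot}\|^2$ on the relevant domain, and more importantly $W^*(\sqrt N\varphi)U_K^*\,\mathcal N\,U_K W(\sqrt N\varphi)$ applied to $\psi^{\otimes N}\otimes\Omega$ can be written out explicitly. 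Since the displacement introduced by $U_K$ on $\psi^{\otimes N}\otimes\Omega$ has squared norm $N\cdot N^{-1}\|B_{K,\cdot}\|^2 = O(K^{-3})$ (using that the $x_j$-dependent phases still yield an $L^2$ kernel of that size after testing against $|\psi|^2$-type densities), we obtain $b(\Psi_N,\varphi)=N^{-1}\langle\ldots\mathcal N\ldots\rangle \le C K^{-3}$.

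For $a(\Psi_N,\psi)$, I would compute $\gamma^{(1,0)}_{\Psi_N}$ by tracing out particles $2,\dots,N$ and the Fock space. Because $U_K$ couples each particle to the field, after tracing out the field one is left with a rank-one-plus-error structure: the leading term is $\ket\psi\bra\psi$ and the correction comes from the field overlap $\langle\Omega, W^*(B_{K,x})W(B_{K,y})\Omega\rangle$-type factors, which differ from $1$ by $O(\|B_{K,\cdot}\|^2)=O(K^{-3})$ in the relevant sense; a Cauchy–Schwarz / $\sqrt{\cdot}$ argument on the trace norm (as is standard, e.g. $\mathrm{Tr}|\gamma-|\psi\rangle\langle\psi||\le 2\sqrt{1-\langle\psi,\gamma\psi\rangle}$ for states with $\mathrm{Tr}\,\gamma=1$) then gives $a(\Psi_N,\psi)\le C K^{-3/2}$, with the square root being the origin of the weaker exponent. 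The energy bound $\sup_N|N^{-1}\langle\Psi_N,H_N^{\rm F}\Psi_N\rangle|\le C$ follows from $\langle\Psi_N,H_N^{\rm F}\Psi_N\rangle = \langle\psi^{\otimes N}\otimes W\Omega,\, U_K H_N^{\rm F}U_K^*\,\psi^{\otimes N}\otimes W\Omega\rangle$ together with the representation of $U_K H_N^{\rm F}U_K^*$ from Lemma~\ref{lemma representation of H_F} (this is why we need $K\ge\widetilde K$): the transformed Hamiltonian is $H_N^0$ plus interaction terms with form factors that are now square-integrable uniformly, so evaluating in a product-times-coherent state gives an expression of order $N$ with the constants controlled by $\|\psi\|_{H^2}$ and $\|\varphi\|_{L^2_1}$.

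The hard part is $c(\Psi_N)$, i.e. controlling $\|N^{-1}(H_N^{\rm F}-\langle H_N^{\rm F}\rangle)\Psi_N\|^2$. The point is that $\Psi_N\in\mathcal D(H_N^{\rm F})$ precisely because of the Gross dressing, so $H_N^{\rm F}\Psi_N = U_K^*(U_K H_N^{\rm F}U_K^*)(\psi^{\otimes N}\otimes W(\sqrt N\varphi)\Omega)$ and one must apply the \emph{explicit} transformed operator to the dressed Pekar state and then subtract its expectation. I would expand $U_K H_N^{\rm F} U_K^*$ into: the free part $H_N^0$; a one-body term $\sum_j V_{K}(x_j)$ with $V_K$ a bounded (in fact $H^2$-compatible) effective potential of size $O(1)$ coming from the commutators; linear field terms $N^{-1/2}\sum_j(\text{smeared }a_k+a_k^*)$ with the \emph{cutoff-improved} form factor $|k|^{-1}\mathds 1_{|k|<K}$ plus $|k|^{-1}k^2(1+k^2)^{-1}\mathds 1_{|k|\ge K}$, which is now $L^2$ with norm $O(\sqrt K)$; and quadratic field terms of size $O(N^{-1}\cdot N)=O(1)$. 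Applying $-\Delta_j$ to $\psi^{\otimes N}$ and the field operators to $W(\sqrt N\varphi)\Omega$, each term produces a vector whose norm I must bound. The dangerous contributions are: (i) the linear field term acting on $\psi^{\otimes N}\otimes W\Omega$ produces $\sqrt N$ copies of a one-particle vector of field-norm $O(\sqrt K)$, giving $\|\cdot\|^2 = O(N\cdot K)$ after the $N^{-2}$ prefactor this is $O(K/N)$ — wait, one must be careful: the coherent-shifted annihilation operator contributes the $c$-number $\sqrt N\hat\varphi$-piece which combines with the source term in the Landau–Pekar equation, and the genuinely fluctuating piece $a_k W\Omega, a_k^* W\Omega$ has field-norm $O(\sqrt K)$, so summing over $j$ gives an $O(N)$-norm vector, hence $N^{-2}\cdot N\cdot K = O(K/N)$; and (ii) cross terms between different particles, which are where $c(\Psi_N)$ picks up its $K/N^2$ and $K^{-1}$ contributions after carefully tracking how the $N^{-1/2}$ prefactors, the number of pairs $O(N^2)$, and the $K$-dependence of the $L^2$ norms of the cutoff form factors combine. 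The main obstacle, and the place where I expect to spend most of the effort, is keeping these power counts honest — in particular showing that the would-be $O(K)$-sized pieces from the ultraviolet tail of the linear interaction term are suppressed by the Gross dressing down to the $K^{-1}$ claimed in \eqref{INITIAL:BOUND:CN}, which requires using the specific algebraic cancellation built into $B_{K,x}$ (namely that $|k|B_{K,x}(k) + \text{(kinetic commutator)}$ combines to the regularized form factor) rather than bounding term by term; this is the quantitative heart of why the restriction to states of small energy variance is natural here.
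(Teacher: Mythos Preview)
Your approach to $a$, $b$, and the energy bound is sound and close to the paper's, though for $a$ the paper takes a shorter route: since $U_K\Psi_N = \psi^{\otimes N}\otimes W(\sqrt N\varphi)\Omega$ one has $\gamma^{(1,0)}_{U_K\Psi_N}=|\psi\rangle\langle\psi|$ exactly, so $a(\Psi_N,\psi)=\textnormal{Tr}|\gamma^{(1,0)}_{\Psi_N}-\gamma^{(1,0)}_{U_K\Psi_N}|$ and Lemma~\ref{LEMMA:CLOSENESS:GAMMA} gives the $K^{-3/2}$ bound directly, with no need to compute field overlaps.

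The treatment of $c(\Psi_N)$, however, has a genuine gap in the power counting. You claim that the sum over $j$ of the fluctuating part of $N^{-1/2}\Phi(G_{K,x_j})$ acting on $\xi_N=\psi^{\otimes N}\otimes W(\sqrt N\varphi)\Omega$ has norm-squared of order $NK$, yielding a contribution $K/N$ to $c(\Psi_N)$. In fact the norm-squared is of order $K+N$, giving $K/N^2+N^{-1}$. The difference lies in the cross terms $j\neq l$: expanding $\|N^{-1/2}\sum_j a^*(G_{K,x_j})\,\psi^{\otimes N}\otimes\Omega\|^2$ one finds the off-diagonal contribution $(1-N^{-1})\|\langle\psi,G_{K,\cdot}\psi\rangle\|_{L^2}^2$, and this is bounded \emph{uniformly in $K$} only after invoking the Lieb--Yamazaki estimate $|\langle\psi,e^{ik\cdot}\psi\rangle|\le C\|\psi\|_{H^1}(1+|k|)(1+k^2)^{-1}$ (equation~\eqref{eq: Lieb Yamazaki for expectation value of e-ikx} in the paper). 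Without it, the naive bound $|\langle\psi,e^{ik\cdot}\psi\rangle|\le 1$ gives $\int_{|k|\le K}|k|^{-2}\,d^3k=4\pi K$ and you land precisely on the $K/N$ you wrote, which is too weak (for instance, with $K\sim N$ as used for Pekar initial data it gives nothing). The paper carries out the full variance computation of $H_{N,K}^{\rm F}$ on $\xi_N$ (equation~\eqref{eq: variance of H_K computation}) and applies the Lieb--Yamazaki bound at exactly this point; this is the step your proposal does not identify.

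Your account of where the $K^{-1}$ term comes from is also off. There is no ``specific algebraic cancellation'' to exploit: once one passes to $H_{N,K}^{\rm G}=H_{N,K}^{\rm F}+\sum_j A_{K,x_j}+\sum_{j,l}V_K(x_j-x_l)$ via Lemma~\ref{lemma representation of H_F}, the $K^{-1}$ arises simply from the triangle inequality together with the direct bounds $\|A_{K,x_1}\xi_N\|\le CK^{-1/2}$ (since $\|\,|\cdot|B_{K,x}\|_{L^2}\le CK^{-1/2}$) and $|V_K|\le CK^{-1}N^{-1}$. The algebraic design of $B_{K,x}$ is what makes the Gross-transformed Hamiltonian well defined on $\mathcal D(H_N^0)$ in the first place, but after that the variance bound is term-by-term estimation with no further cancellation required.
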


We prove this proposition in Section \ref{sec: initial states}.
As an immediate consequence of Proposition \ref{proposition: initial bounds for gross transformed product} (with $K=cN$) and Theorem \ref{theorem: main theorem} one finds
\begin{equation}
\label{eq: redyced density bound  product state with Gross transform}
\textnormal{Tr}_{L^2(\mathbb{R}^3)} \abs{\gamma^{(1,0)}_{\Psi_{N,t}} - \ket{\psi_t} \bra{\psi_t}}
 \leq   N^{-1/4}   e^{C ( 1+ \abs{t} )^3  }  
\end{equation}
and
\begin{equation}
\frac 1 N \scp{W^*(\sqrt{N} \varphi_t) \Psi_{N,t}}{  {\mathcal N}  \, W^* (\sqrt{N} \varphi_t) \Psi_{N,t}} 
 \leq   N^{-1/2}   e^{C ( 1+ \abs{t} )^3  }
\end{equation}
for initial states of the form $\Psi_N = U_{cN}^* ( \psi^{\otimes N}\otimes W(\sqrt N \varphi)\Omega)$.

Since the quantities $b(\Psi_{N},\varphi)$ and $c(\Psi_{N})$ appearing on the right side of \eqref{MAIN:BOUND:TRACE:NORM:1:0} and \eqref{MAIN:BOUND:TRACE:NORM:0:1} are expectation values of unbounded operators, it is not possible to generalize Theorem \ref{theorem: main theorem} to initial states $\Psi_{N} \notin  \mathcal D( H_N^{\rm F})$ via a simple density argument. Using the Gross transform, however, it is possible to obtain a similar result for initial states in a subset of $\mathcal{D} ( H_N^0 )$. This follows from Theorem \ref{theorem: main theorem} in combination with \eqref{DOMAIN:H:NF} and the fact that $U_{K}$ converges strongly to the identity operator for $K\to \infty$.
The precise statement is as follows. 

\begin{theorem}
\label{theorem: main theorem: 2} Let $K_N \geq   c N^{5/6}$ for some $c >0$. Let $(\psi, \varphi) \in H^2(\mathbb{R}^3) \times L_1^2(\mathbb{R}^3)$ with $\norm{\psi}_{L^2(\mathbb{R}^3)} = 1$, and $\Psi_{N} \in \mathcal{D} (  H_{N}^0 )$ such that $\norm{\Psi_{N}} = 1$ and 
\begin{align}
E_0 = \sup_{N \in \mathbb{N}} \big\vert N^{-1} \scp{\Psi_{N}}{U_{K_N} H_N^{\rm F} U^*_{K_N} \Psi_{N}} \big\vert <\infty.
\end{align}
Let $(\psi_t, \varphi_t)$ be the unique solution of \eqref{eq: Landau Pekar equations} with initial datum $(\psi, \varphi)$ and $\Psi_{N,t}= e^{-iH_N^{\rm F} t}\Psi_{N}$. Then, there exists a constant $C >0$ (depending only on  $c$, $\norm{\varphi}_{L_1^2(\mathbb{R}^3)}$, $\norm{\psi}_{H^2(\mathbb{R}^3)}$ and $E_0$) such that
\begin{align}
\textnormal{Tr}_{L^2(\mathbb{R}^3)} \abs{\gamma^{(1,0)}_{\Psi_{N,t}} - \ket{\psi_t} \bra{\psi_t}} & \le \sqrt{ a( \Psi_{N},\psi)  + b(\Psi_{N},\varphi)  + c (U_{K_N}^* \Psi_{N})  +  N^{-1/2}} e^{C   (1+\abs{t})^3 } ,\label{MAIN:BOUND:TRACE:NORM:1:0:THEOREM:2}  \\[4mm]
& \hspace{-4cm} \scp{W^* (\sqrt{N} \varphi_t) \Psi_{N,t}}{ \sqrt{ \tfrac{\mathcal N}{N} } \, W^* (\sqrt{N} \varphi_t) \Psi_{N,t}} \nonumber \\[1mm]
 &  \le  \sqrt{  a(\Psi_{N},\psi)  + b(\Psi_{N},\varphi)  + c (U_{K_N}^* \Psi_{N})  +  N^{-1/2}}  e^{C   (1+\abs{t})^3 }  \label{MAIN:BOUND:TRACE:NORM:0:1:THEOREM:2}.
\end{align}
In particular, for the Pekar initial state $\Psi_{N} = \psi^{\otimes N} \otimes W(\sqrt{N} \varphi) \Omega$ we have  the bounds
\begin{align}
\label{eq: redyced density bound  product state}
\textnormal{Tr}_{L^2(\mathbb{R}^3)} \abs{\gamma^{(1,0)}_{\Psi_{N,t}} - \ket{\psi_t} \bra{\psi_t}}
& \leq   N^{-1/4}   e^{C ( 1+ \abs{t} )^3  }  , 
\\[1mm]
 \scp{W^* (\sqrt{N} \varphi_t) \Psi_{N,t}}{  \sqrt{ \tfrac{\mathcal N}{N} }  \, W^* (\sqrt{N} \varphi_t) \Psi_{N,t}} 
& \leq   N^{-1/4}   e^{C ( 1+ \abs{t} )^3  }. \label{eq: redyced density bound  product state line 2}
\end{align}
\end{theorem}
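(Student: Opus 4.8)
The plan is to reduce Theorem \ref{theorem: main theorem: 2} to Theorem \ref{theorem: main theorem} by conjugating with the Gross transform $U_{K_N}$. Given $\Psi_N \in \mathcal{D}(H_N^0)$ with the stated energy bound, set $\widetilde\Psi_N = U_{K_N}^*\Psi_N$. By \eqref{DOMAIN:H:NF} (applicable since $K_N \ge cN^{5/6}\to\infty$, so eventually $K_N \ge \widetilde K$) we have $\widetilde\Psi_N \in \mathcal{D}(H_N^{\rm F})$, and by construction $\scp{\widetilde\Psi_N}{H_N^{\rm F}\widetilde\Psi_N} = \scp{\Psi_N}{U_{K_N} H_N^{\rm F} U_{K_N}^* \Psi_N}$, so the hypothesis $E_0 < \infty$ of Theorem \ref{theorem: main theorem} holds for $\widetilde\Psi_N$ with the same constant. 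Crucially, the Fr\"ohlich time evolution commutes with nothing in particular, but the point is that $\Psi_{N,t} = e^{-iH_N^{\rm F}t}\Psi_N = e^{-iH_N^{\rm F}t}U_{K_N}\widetilde\Psi_N$; one would like $e^{-iH_N^{\rm F}t}U_{K_N}\widetilde\Psi_N$ to be close to $U_{K_N}e^{-iH_N^{\rm F}t}\widetilde\Psi_N$. Actually the cleaner route is: apply Theorem \ref{theorem: main theorem} directly to the initial state $\widetilde\Psi_N$, whose evolution is $\widetilde\Psi_{N,t} := e^{-iH_N^{\rm F}t}\widetilde\Psi_N = e^{-iH_N^{\rm F}t}U_{K_N}^*\Psi_N$. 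This does \emph{not} equal $U_{K_N}^*\Psi_{N,t}$, so the reduction is not immediate and the difference $\Psi_{N,t}$ versus $\widetilde\Psi_{N,t}$ must be controlled; this is where the quantitative convergence $U_{K_N}\to \mathds{1}$ enters.

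The key estimate is therefore: $\norm{(U_{K_N} - \mathds{1})\Phi} \le C K_N^{-1/2}\norm{(\mathcal N+1)^{1/2}(\text{something})\Phi}$ or, better suited here, a bound of the form $\norm{(U_{K_N}-\mathds{1})\Phi}$ controlled in terms of $N$, $K_N$ and a fixed power of $H_N^0$ or $\mathcal N$ applied to $\Phi$. Expanding $U_{K_N}$ using $\norm{B_{K,x}}_{L^2}^2 = \int_{|k|\ge K} |k|^{-2}(1+k^2)^{-2}\,d^3k \le C K^{-3}$, one gets that the generator of $U_{K_N}$ has operator norm (relative to $(\mathcal N+1)^{1/2}$) of order $N^{1/2}\cdot N^{-1/2}\cdot K_N^{-3/2} = K_N^{-3/2}$ per particle, hence summing over $j$ a total bound of order $N^{1/2} K_N^{-3/2}$; with $K_N \ge cN^{5/6}$ this is $O(N^{1/2-5/4}) = O(N^{-3/4})$, which decays. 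One then writes $\Psi_{N,t} - \widetilde\Psi_{N,t} = e^{-iH_N^{\rm F}t}(U_{K_N} - \mathds{1})U_{K_N}^*\Psi_N$ and also needs to compare the reduced density matrices and the $\mathcal N$-expectations in the coherent frame of $\Psi_{N,t}$ versus $\widetilde\Psi_{N,t}$, absorbing the difference into the $N^{-1/2}$ term inside the square root (consistent with the square-root form of the right-hand sides of \eqref{MAIN:BOUND:TRACE:NORM:1:0:THEOREM:2}--\eqref{MAIN:BOUND:TRACE:NORM:0:1:THEOREM:2}, weaker than in Theorem \ref{theorem: main theorem}). Note that $a$ and $b$ appearing on the right are evaluated at $\Psi_N$ itself, not $\widetilde\Psi_N$; since $a,b$ are continuous under the strong perturbation $U_{K_N}\to\mathds{1}$ up to errors of the order just computed, the replacement costs at most $O(N^{-1/2})$, again absorbed.

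For the final assertions \eqref{eq: redyced density bound product state}--\eqref{eq: redyced density bound product state line 2}, one takes $\Psi_N = \psi^{\otimes N}\otimes W(\sqrt N\varphi)\Omega$. Then $a(\Psi_N,\psi) = b(\Psi_N,\varphi) = 0$ exactly, while $c(U_{K_N}^*\Psi_N)$ is estimated by Proposition \ref{proposition: initial bounds for gross transformed product} with $K = K_N$: this gives $c(U_{K_N}^*\Psi_N) \le C(K_N^{-1} + N^{-1} + K_N N^{-2})$. With $K_N \asymp N^{5/6}$ the three terms are $O(N^{-5/6})$, $O(N^{-1})$, $O(N^{-7/6})$ respectively, so $c(U_{K_N}^*\Psi_N) + N^{-1/2} = O(N^{-1/2})$, and the square root yields the claimed $N^{-1/4}$ prefactor. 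The exponent $5/6$ in the hypothesis $K_N \ge cN^{5/6}$ is precisely the sweet spot balancing the $K_N^{-1}$ term in Proposition \ref{proposition: initial bounds for gross transformed product} against the $N^{1/2}K_N^{-3/2}$ decay of $U_{K_N}-\mathds{1}$ squared (one checks $K_N^{-1} = N^{-5/6} \ge N^{-1/2}$ is \emph{not} needed to be smaller than $N^{-1/2}$ — rather both are dominated once inside the square root, since $\sqrt{N^{-5/6}} = N^{-5/12} \ge N^{-1/4}$... in fact one wants $K_N^{-1} \lesssim N^{-1/2}$, forcing $K_N \gtrsim N^{1/2}$, together with the transform error forcing $K_N$ not too large; $N^{5/6}$ sits comfortably in this window).

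The main obstacle is the non-commutativity step: $e^{-iH_N^{\rm F}t}$ does not commute with $U_{K_N}$, so one genuinely needs the quantitative strong-convergence bound $\norm{(U_{K_N}-\mathds{1})\Phi}$ with the right power of $N$ and an $N$-independent (or mildly growing) control on $\norm{(\mathcal N+1)^{1/2}\Phi}$, and one must verify that $\widetilde\Psi_{N,t} = e^{-iH_N^{\rm F}t}\widetilde\Psi_N$ is the object to which Theorem \ref{theorem: main theorem} applies while the \emph{conclusions} are stated for $\Psi_{N,t}$; bridging these two with only $O(N^{-1/2})$ loss, uniformly on the time interval with the same triple-exponential factor, is the technical heart of the argument.
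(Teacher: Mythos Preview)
Your strategy is exactly the paper's: set $\widetilde\Psi_N = U_{K_N}^*\Psi_N \in \mathcal D(H_N^{\rm F})$, apply Theorem~\ref{theorem: main theorem} to $\widetilde\Psi_{N,t}=e^{-iH_N^{\rm F}t}\widetilde\Psi_N$, and bridge to $\Psi_{N,t}$ via $\|\Psi_{N,t}-\widetilde\Psi_{N,t}\|=\|(1-U_{K_N})\Psi_N\|$, using Lemma~\ref{LEMMA:CLOSENESS:GAMMA} to trade $a(\widetilde\Psi_N,\psi),b(\widetilde\Psi_N,\varphi)$ for $a(\Psi_N,\psi),b(\Psi_N,\varphi)$ at cost $O(K_N^{-3/2})$.

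Two points deserve correction. First, your count for $\|(1-U_{K_N})\Psi_N\|$ is off by $N^{1/2}$: the generator bound $\|N^{-1/2}\sum_j\Pi(B_{K,x_j})\Phi\|\le 2N^{1/2}\|B_{K,x}\|\,\|(\mathcal N+1)^{1/2}\Phi\|$ together with $\|(\mathcal N+1)^{1/2}\Phi\|\sim N^{1/2}$ (from the energy bound) gives $\|(1-U_{K_N})\Psi_N\|\lesssim N K_N^{-3/2}$, not $N^{1/2}K_N^{-3/2}$. With $K_N\ge cN^{5/6}$ this is $O(N^{-1/4})$, which is precisely $\sqrt{N^{-1/2}}$ and hence tight; this is the actual origin of the threshold $5/6$, and your discussion of the ``sweet spot'' should be adjusted accordingly. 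Second, for \eqref{MAIN:BOUND:TRACE:NORM:0:1:THEOREM:2} the comparison of $\sqrt{\mathcal N/N}$--expectations between $\Psi_{N,t}$ and $\widetilde\Psi_{N,t}$ cannot be done by $\|\Psi_{N,t}-\widetilde\Psi_{N,t}\|$ alone since $\sqrt{\mathcal N/N}$ is unbounded; the paper inserts a Cauchy--Schwarz step and the a~priori bound $\|\sqrt{\mathcal N/N}\,W^*(\sqrt N\varphi_t)\Psi_{N,t}\|\le C(1+\|\varphi_t\|)$, obtained from $\mathcal N\le H_N^0\le 2H_N^{\rm F}+CN$ and energy conservation. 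You allude to this but should make it explicit. (For the Pekar state the paper takes $K_N=cN$, but your choice $K_N\asymp N^{5/6}$ also yields $c(U_{K_N}^*\Psi_N)\le CN^{-5/6}\ll N^{-1/2}$ and the same $N^{-1/4}$ rate.)
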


 The proof is given in Section \ref{section: proof of the proposition for free domain}.

\begin{remark} The restriction $K_N \geq c N^{5/6}$ was chosen in order to minimize the error terms in \eqref{MAIN:BOUND:TRACE:NORM:1:0:THEOREM:2} and \eqref{MAIN:BOUND:TRACE:NORM:0:1:THEOREM:2}. 
\end{remark}

\begin{remark} Note that in \eqref{MAIN:BOUND:TRACE:NORM:0:1:THEOREM:2} we only control the time evolution of $\sqrt{N^{-1} \mathcal N}$, while in \eqref{MAIN:BOUND:TRACE:NORM:0:1} we estimate the operator $N^{-1}\mathcal N$.
\end{remark}

\section{Preliminaries}\label{sec: Notation and basic estimates}

\subsection{Notation and basic estimates}

We introduce the usual bosonic creation and annihilation operators
\begin{align}
a(f) &= \int d^3k  \, \overline{f(k)} a_k, \quad
a^*(f) = \int d^3k \, f(k) a^*_k , \quad   f \in L^2(\mathbb{R}^3),
\end{align}
as well as the field operators
\begin{align}
\Phi(f) &= a(f) + a^*(f) ,
 \quad
\Pi(f) = \Phi(if) = i \big( - a(f) + a^*(f) \big) .
\end{align}
They satisfy the bounds 
\begin{align}
\label{eq: bound for annihialtion and creation operators}
\norm{a(f) \Psi_{N}} & \leq \norm{f}_{L^2(\mathbb R^3)} \norm{\mathcal{N}^{1/2} \Psi_N}, \quad 
\norm{a^*(f) \Psi_{N}} \leq \norm{f}_{L^2(\mathbb R^3)} \norm{\left(\mathcal{N} + 1 \right)^{1/2} \Psi_N},
\end{align}
and
\begin{align}\label{eq: bound for Pi}
\norm{\Phi(f) \Psi_{N}} &\leq 2 \norm{f}_{L^2(\mathbb R^3)} \norm{\left(\mathcal{N} + 1 \right)^{1/2} \Psi_N},  \quad  
\norm{\Pi(f) \Psi_{N}} \leq 2 \norm{f}_{L^2(\mathbb R^3)} \norm{\left(\mathcal{N} + 1 \right)^{1/2} \Psi_N}
\end{align}
for any $\Psi_N\in \mathcal{H}^{(N)}$. 
For $K >0$, we define the classical fields 
\begin{align}
\Phi_{K}(x,t) &=
 \int_{\abs{k}  \, \leq K} d^3k \,    \abs{k}^{-1} 
\left(  e^{ikx} \varphi_t(k)  + e^{-ikx}  \overline{\varphi_t(k)}  \right)  ,
\nonumber \\[1mm]
 \Phi_{\geq K}(x,t) &=
 \int_{\abs{k}  \, \ge K} d^3k \,     \abs{k}^{-1}
\left(  e^{ikx} \varphi_t(k)  + e^{-ikx}  \overline{\varphi_t(k)}  \right).
\end{align}
Moreover, it is useful to define the functions
\begin{align}
\label{eq: definition G-x-j}
G_{x}(k) &= e^{-i k x} \abs{k}^{-1} , 
\quad   \quad
G_{K, x}(k)  = e^{-i k x} \abs{k}^{-1}  \mathds{1}_{\abs{k} \leq K}(k),
\end{align}
and $B_{K,x}(k) =\frac{-1}{\abs{k} (1 + k^2)} e^{-ikx }   \mathds{1}_{\abs{k} \geq K}(k)$ as in \eqref{eq: definition B-K-N}. The bounds 
\begin{align}
\label{eq: bounds for G}
  \norm{G_{K,x}}_{L^2(\mathbb{R}^3)}^2 &= 4 \pi K,\quad 
  \norm{B_{K,x}}_{L^2(\mathbb{R}^3)}^2 \leq 4 \pi K^{-3}
, \quad 
 \norm{ \abs{ \cdot} B_{K,x}}_{L^2(\mathbb{R}^3)}^2 \leq 4 \pi K^{-1}
\end{align}
are straightforward to verify and will be frequently used in the rest of the article. 
We also have
\begin{align}\label{eq: bound for Phi K}
 \abs{\Phi_K(x,t)} \leq \sqrt{32 \pi} \norm{\varphi_t}_{L_1^2(\mathbb{R}^3) },\quad    \norm{ \Phi (G_{K,x_j})  \Psi_N} &\leq \sqrt{16 \pi K}  \norm{\left(\mathcal{N} + 1 \right)^{1/2} \Psi_N}
\end{align}
for $j\in \{1,...,N\}$.\\

\noindent \textbf{Notation:}
The functions $k\mapsto k G_{K,x}(k)$ and $k\mapsto k B_{K,x}(k)$ will frequently be denoted by $k G_{K,x}$ and $kB_{K,x}$, respectively. Depending on the context $\norm{\cdot}$ and $\scp{\cdot}{\cdot}$ will refer to the norm and scalar product either of $\mathcal{H}^{(N)}$ or $L^2(\mathbb{R}^3)$. If the spaces $L_m^2(\mathbb{R}^3)$ and  $H^m(\mathbb{R}^3)$  (with $m \in \mathbb{N}$) appear  as subscripts we will abbreviate them by $L_m^2$ and $H^m$.

\subsection{Weyl operators and Gross transform }

The Weyl operator $W(f)$ defined in  \eqref{eq: Weyl operator} is unitary, i.e., $W^*(f) = W^{-1}(f)$, and satisfies the relations
\begin{align}
\label{eq: Weyl operators product}
W^{-1}(f)= W(-f), \quad W(f) W(g) = W(g) W(f) e^{-2 i \Im \langle f , g\rangle}  = W(f+g) e^{-i \Im \langle f , g \rangle }
\end{align}
as well as the shift property
\begin{align}
\label{eq: Weyl operators shift property}
W^*(f) a_k W(f) = a_k + f(k).
\end{align}
This immediately implies that the Gross transform, as defined in 
\eqref{eq: definition Gross transform}, is unitary. Moreover, it has the properties
\begin{align}\label{eq: product structure Gross transform}
U_K &= W \Big( - N^{-1/2} \sum_{j=1}^N B_{K,x_j} \Big)
=  \prod_{j=1}^N   W \Big( - N^{-1/2}  B_{K,x_j} \Big)
\end{align}
(which holds since $\Im \scp{B_{K,x}}{B_{K,y}}=0$ for all $x,y\in \mathbb R^3$) and 
\begin{align}
\label{eq: Gross transform shift property}
U_K  a_k U^*_K 
&= a_k + N^{-1/2} \sum_{j=1}^N B_{K, x_j}(k) .
\end{align}

\subsection{The Fr\"ohlich Hamiltonian}
\label{section: construction of the Froehlich Hamiltonian}


In \cite{griesemerwuensch}, Griesemer and W\"unsch give an explicit representation of $H_N^{\rm F}$ with the aid of the Gross transform when $N=1$. Below, we state the analogous representation for $N>1$, which will be useful for the proof of our main theorem. Considering $N>1$ does not impose additional difficulties compared to \cite{griesemerwuensch}. 

\begin{definition}
\label{definition: Gross transformed Froehlich Hamiltonian} With $B_{K,x}$ and $G_{K,x}$  defined  in \eqref{eq: definition B-K-N} and \eqref{eq: definition G-x-j}, respectively, we  set
\begin{align}
\label{eq: Definition of V-K-N}
A_{K,x} &=  - 2 i N^{-1/2}  \Big( \nabla_x \cdot
 a \big( k B_{K,x} \big)
 +  a^* \big( k B_{K,x} \big)  \cdot \nabla_x \Big)
 +  N^{-1}  \Phi\big( k B_{K,x} \big)^2,
 \\[2mm]
 V_{K}(x-y)  & =  N^{-1} \Big(  \scp{B_{K,x}}{B_{K,y}}
+ 2 \Re \scp{G_{x}}{B_{K,y}}   \Big) ,
\\[0.5mm]
H_{N,K}^{\rm F} &= \sum_{j=1}^N  \left[ - \Delta_j  +   N^{-1/2} \Phi(G_{K,x_j}) \right] + \mathcal{N},
\end{align} 
and define the Gross transformed Fr\"ohlich Hamiltonian as
\begin{align}\label{GROSS:TRANSFORMED:HAM}
H^{\rm G}_{N,K}
&=  H_{N,K}^{\rm F} + \sum_{j=1}^N A_{K,x_j} + \sum_{j,l=1}^N  V_K(x_j-x_l).
\end{align}
\end{definition}
Note that  \eqref{eq: bounds for G}  immediately implies the bound
\begin{align}\label{eq: bound for V_K}
\vert  V_K(x_j-x_l) \vert \le C K^{-1}N^{-1} 
\end{align}
for suitable $C>0$.
The following result, which is the generalization of \cite[Theorem 3.7]{griesemerwuensch} to $N\ge 2$, justifies denoting $H_{N,K}^{\rm G}$ as Gross transformed Fr\"ohlich Hamiltonian. 

\begin{lemma}\label{lemma representation of H_F}
The operator $H^{\rm G}_{N,K}$ is self-adjoint on $\mathcal{D} (H_N^0 )$ for all $K >0$. Moreover, there exists a $\widetilde{K}\ge 0$ such that for all $K \geq \widetilde{K}$ and $N \in \mathbb{N}$, the self-adjoint operator $H_N^F$  associated to the quadratic form defined by \eqref{eq: Froehlich Hamiltonian quadratic form} has the representation
\begin{align} \label{REPRESENTATION:H:F}
H_N^{\rm F} = U^*_{K} H_{N,K}^{\rm G} U_{K},
\quad \mathcal{D}(H_N^{\rm F}) = U^{*}_{K} \mathcal D(H_N^0) .
\end{align}
\end{lemma}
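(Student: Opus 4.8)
The plan is to reduce the claim to the Griesemer--W\"unsch result for $N=1$ by carefully tracking the conjugation of the formal Fr\"ohlich Hamiltonian under the Gross transform $U_K$, and then to appeal to a standard relative-boundedness argument for self-adjointness on $\mathcal D(H_N^0)$. First I would compute, at least on a suitable core (e.g.\ the algebraic tensor product of $H^2_s(\mathbb R^{3N})$ with the finite-particle subspace of $\mathcal F_s$), the conjugated operator $U_K H_N^{\rm F} U_K^*$. Using the shift property \eqref{eq: Gross transform shift property}, $U_K a_k U_K^* = a_k + N^{-1/2}\sum_j B_{K,x_j}(k)$, one conjugates each term of \eqref{eq: Froehlich Hamiltonian quadratic form}: the number operator $\mathcal N = \int d^3k\, a_k^* a_k$ picks up linear terms $N^{-1/2}\sum_j \Phi(\,\overline{(1+k^2)B_{K,x_j}}\,)$-type contributions (note $(1+k^2)B_{K,x}(k) = -|k|^{-1}e^{-ikx}\mathds 1_{|k|\ge K} = -G_{x}(k)+G_{K,x}(k)$, which is how the high-momentum part of the original interaction gets cancelled) plus the quadratic term $N^{-1}\sum_{j,l}\scp{B_{K,x_j}}{B_{K,x_l}}$; the interaction term $\sqrt{1/N}\sum_j\Phi(G_{x_j})$ produces, besides itself, the cross term $N^{-1}\sum_{j,l} 2\Re\scp{G_{x_j}}{B_{K,x_l}}$; and the Laplacians $-\Delta_j$ conjugate to $-\Delta_j$ plus the first-order-in-$\nabla$ and quadratic field terms collected in $A_{K,x_j}$ (this is where the $-2iN^{-1/2}(\nabla_x\cdot a(kB_{K,x}) + a^*(kB_{K,x})\cdot\nabla_x) + N^{-1}\Phi(kB_{K,x})^2$ structure comes from, via $W^*(-N^{-1/2}B_{K,x_j})(-i\nabla_{x_j})W(-N^{-1/2}B_{K,x_j}) = -i\nabla_{x_j} - iN^{-1/2}\Pi$-type shift). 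Collecting everything, the high-momentum part of $\sum_j\Phi(G_{x_j})$ cancels against the $\mathcal N$-induced linear term, leaving exactly $H_{N,K}^{\rm F} + \sum_j A_{K,x_j} + \sum_{j,l}V_K(x_j-x_l) = H^{\rm G}_{N,K}$.

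Next I would establish that $H^{\rm G}_{N,K}$ is self-adjoint on $\mathcal D(H_N^0)$ for every $K>0$. The terms $H_{N,K}^{\rm F}$ and $\sum_{j,l}V_K(x_j-x_l)$ are easy: the cutoff interaction $N^{-1/2}\sum_j\Phi(G_{K,x_j})$ is $H_N^0$-bounded with relative bound zero by \eqref{eq: bound for Phi K} together with $-\Delta_j \le H_N^0$ and $\mathcal N \le H_N^0$ and an interpolation/Young argument, and $V_K$ is bounded by \eqref{eq: bound for V_K}. The term $\sum_j A_{K,x_j}$ is the delicate one: the first-order piece $-2iN^{-1/2}(\nabla_{x_j}\cdot a(kB_{K,x_j}) + \mathrm{h.c.})$ must be controlled using $\norm{\,|\cdot| B_{K,x}\,}^2 \le 4\pi K^{-1}$ from \eqref{eq: bounds for G}, bounding it by $\varepsilon\sum_j(-\Delta_j) + \varepsilon\mathcal N + C_\varepsilon$ via Cauchy--Schwarz in both the particle gradient and the Fock-space annihilation/creation bound \eqref{eq: bound for annihialtion and creation operators}, while the quadratic piece $N^{-1}\Phi(kB_{K,x_j})^2$ is controlled by $C K^{-1}(\mathcal N+1)$. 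Thus $\sum_j A_{K,x_j}$ is infinitesimally $H_N^0$-bounded uniformly in $N$, and the Kato--Rellich theorem gives self-adjointness on $\mathcal D(H_N^0)$.

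With $H^{\rm G}_{N,K}$ self-adjoint on $\mathcal D(H_N^0)$ and $U_K$ unitary, $U_K^* H^{\rm G}_{N,K} U_K$ is automatically self-adjoint on $U_K^*\mathcal D(H_N^0)$. It then remains to identify this operator with the $H_N^{\rm F}$ defined via the quadratic form \eqref{eq: Froehlich Hamiltonian quadratic form}. Here I would argue that on a form core both operators agree with the sesquilinear form $\langle\Psi, H_N^{\rm F}\Psi\rangle$ — the formal computation above, done at the level of quadratic forms rather than operators, shows that the form of $U_K^* H^{\rm G}_{N,K} U_K$ equals the Lieb--Yamazaki form of \eqref{eq: Froehlich Hamiltonian quadratic form} — and that, since a closed semibounded form determines a unique self-adjoint operator, the two self-adjoint operators coincide. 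The threshold $\widetilde K$ enters only to guarantee that the form remains closed and bounded below with form domain equal to that of $H_N^0$ after the manipulation, i.e.\ that the $K$-dependent error terms generated in the conjugation are small enough relative to the kinetic form for $K$ large; for $N\ge 2$ the relevant smallness is the same as in the $N=1$ case treated in \cite{griesemerwuensch} because all the new terms carry compensating factors of $N^{-1}$ in front of the double sums. The main obstacle I anticipate is making the formal conjugation rigorous — justifying the exchange of $U_K$ with the unbounded operators $-\Delta_j$ and $\mathcal N$ on a suitable dense domain and controlling the resulting unbounded cross terms — so that the identity $U_K H_N^{\rm F} U_K^* = H^{\rm G}_{N,K}$ holds as an operator identity and not merely formally; this is precisely the point where one must invoke the Lieb--Yamazaki-type bounds (Lemma \ref{lemma: bounds for the interaction term}) and the representation-theoretic input from \cite{griesemerwuensch}, adapted to $N>1$ with uniform-in-$N$ constants.
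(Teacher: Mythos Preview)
Your overall strategy matches the paper's exactly: formally conjugate by $U_K$ to obtain $H^{\rm G}_{N,K}$, prove self-adjointness of the latter on $\mathcal D(H_N^0)$ via a Kato--Rellich argument, and then identify the operator with $H_N^{\rm F}$ through the quadratic form. The paper defers to \cite{griesemerwuensch} for all details except one lemma, and the key point it singles out is precisely the $N$-uniformity of the relative bound, which you also flag.

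There is, however, a genuine gap in your self-adjointness step. You bound the first-order piece of $A_{K,x_j}$ by $\varepsilon\sum_j(-\Delta_j)+\varepsilon\mathcal N+C_\varepsilon$ and then invoke Kato--Rellich. But what you have written is a \emph{form} bound (an operator inequality $\pm A_{K,x_j}\le\ldots$), whereas Kato--Rellich requires an \emph{operator-norm} bound $\|(H^{\rm G}_{N,K}-H_N^0)\Psi\|\le\varepsilon\|H_N^0\Psi\|+C\|\Psi\|$. The form bound does not automatically yield this, and in fact the naive attempt fails: writing $\nabla_j\cdot a(kB_{K,x_j})=a(kB_{K,x_j})\cdot\nabla_j+[\nabla_j,a(kB_{K,x_j})]$ produces a commutator term $a(|k|^2B_{K,x_j})$ whose symbol $|k|^2B_{K,x}(k)=-\tfrac{|k|}{1+k^2}e^{-ikx}\mathds 1_{|k|\ge K}$ is \emph{not} in $L^2(\mathbb R^3)$, so the standard estimate $\|a(f)\Psi\|\le\|f\|\,\|\mathcal N^{1/2}\Psi\|$ is unavailable. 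The paper handles exactly this term (Lemma~\ref{Lemma: norm bound of the Gross transformed Hamiltonian}) via the Frank--Schlein type operator inequality
\[
a^*(|k|^2B_{K,x_1})\,a(|k|^2B_{K,x_1})\ \le\ \widetilde C_K\,(1-\Delta_1)\,\mathcal N,
\qquad \widetilde C_K=\sup_{h}\int_{|k|\ge K}\frac{d^3k}{(1+k^2)(1+(h-k)^2)}\xrightarrow[K\to\infty]{}0,
\]
which simultaneously supplies the missing norm bound and the smallness in $K$ uniformly in $N$. This is the ingredient that makes $\widetilde K$ independent of $N$, and it is not a consequence of the bounds in Lemma~\ref{lemma: bounds for the interaction term} that you cite. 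Without it, your argument would give neither the domain identity $\mathcal D(H^{\rm G}_{N,K})=\mathcal D(H_N^0)$ nor the $N$-independent threshold $\widetilde K$.
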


We shall comment on the proof of this lemma in Appendix \ref{appendix: interaction bounds}.

For use below, we also note that there is $\widetilde K,C>0$, such that for all $K\ge \widetilde K$ and $N\ge 1$,
\begin{align}
\frac{1}{2} H_{N}^0  - C N  & \le H_{N}^{\rm F} \le \frac{3}{2} H_{N}^0 + C N \label{eq: bound H^F vs H^0},\\[1mm]
\frac{1}{2} H_{N}^0  - C N  & \le H_{N,K}^{\rm G} \le \frac{3}{2} H_{N}^0 + C N \label{eq: bound H^G vs H^0}
\end{align} 
hold as inequalities on the Hilbert space $L^2(\mathbb R^{3N}) \otimes \mathcal F_s$ without symmetry constraints on the particles. This will be useful later in order to estimate expectation values w.r.t. wave functions that are not permutation symmetric in all particle coordinates, as e.g.\ in \eqref{eq: variance lemma first bound}. The derivation of \eqref{eq: bound H^F vs H^0} and \eqref{eq: bound H^G vs H^0} is postponed to Appendix \ref{appendix: interaction bounds}.

\section{Proof of the Main Theorem}
\label{section: Proof of the Main Theorem}

We first state three preliminary lemmas from which the proof of Theorem \ref{theorem: main theorem} then follows easily. The proofs of the lemmas are postponed to later sections.

If we take the limit $K\to \infty$, the Gross transform has only negligible effect on the one-particle reduced density and the coherent structure of the phonon field. This is quantified in the following lemma, whose proof is given in Sec.~\ref{SEC:PROOF:LEMMA:CLOSENESS:GAMMA}. 

\begin{lemma}
\label{LEMMA:CLOSENESS:GAMMA} Assume $K \ge \widetilde K>0$ such that Lemma \ref{lemma representation of H_F} holds. Let $\varphi \in L^2(\mathbb{R}^3)$, $\Psi_N \in \mathcal  D((H_N^{\rm F})^{1/2})$ with $\norm{\Psi_N}=1$, and the Gross transform $U_{K}$ defined as in \eqref{eq: definition Gross transform}. Then,
\begin{align}
\textnormal{Tr}_{L^2(\mathbb R^3)} \Big\vert \gamma^{(1,0)}_{\Psi_N} - \gamma^{(1,0)}_{U_{K} \Psi_N}  \Big\vert  & \le \frac{C}{K^{3/2}}  \norm{ \Big( \frac{H_N^{\rm F} + C N}{N}\Big)^{1/2} \Psi_N } \label{BOUND:CLOSENESS:GAMMA:1:0}
\end{align}
and 
\begin{align}
& N^{-1} \abs{  \scp{ W^*(\sqrt N \varphi) \Psi_{N} }{ \big( \mathcal N - U_K^* \mathcal N U_K \big)  W^* (\sqrt N \varphi)  \Psi_N } }\nonumber  \\
& \hspace{6cm} \le  \frac{C (1+ \norm{\varphi}_2)}{K^{3/2}}  \norm{ \Big( \frac{H_N^{\rm F} + C N}{N}\Big)^{1/2} \Psi_N }\label{BOUND:CLOSENESS:GAMMA:0:1}
\end{align}
for some $C>0$.
\end{lemma}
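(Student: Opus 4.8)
The plan is to estimate the two differences in \eqref{BOUND:CLOSENESS:GAMMA:1:0} and \eqref{BOUND:CLOSENESS:GAMMA:0:1} by exploiting the explicit product structure \eqref{eq: product structure Gross transform} of the Gross transform, $U_K = \prod_{j=1}^N W(-N^{-1/2} B_{K,x_j})$, together with the smallness of $B_{K,x}$ recorded in \eqref{eq: bounds for G}, namely $\norm{B_{K,x}}_2^2 \le 4\pi K^{-3}$. For the density matrix bound, I would first write $\gamma^{(1,0)}_{\Psi_N} - \gamma^{(1,0)}_{U_K\Psi_N}$ as a telescoping/interpolation expression in terms of the partial trace of $\ketbra{\Psi_N}{\Psi_N} - \ketbra{U_K\Psi_N}{U_K\Psi_N}$, use that the trace norm of a partial trace is bounded by the trace norm of the full operator, and then bound $\norm{\ketbra{\Psi_N}{\Psi_N} - \ketbra{U_K\Psi_N}{U_K\Psi_N}}_{\mathrm{tr}} \le 2\norm{(U_K - \id)\Psi_N}$. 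Alternatively — and this is cleaner for getting the $H_N^{\rm F}$ on the right side — test $\gamma^{(1,0)}_{\Psi_N} - \gamma^{(1,0)}_{U_K\Psi_N}$ against an arbitrary bounded one-particle operator $J$ with $\norm{J}_{\mathrm{op}}\le 1$, obtaining $\tfrac1N\sum_j \scp{\Psi_N}{(J_j - U_K^* J_j U_K)\Psi_N}$, and then commute $J_j$ through the relevant Weyl factor $W(-N^{-1/2}B_{K,x_j})$, which only shifts $a_k \mapsto a_k + N^{-1/2}B_{K,x_j}(k)$ in any part of $J_j$ that involves the field — but $J_j$ acts only on the $j$-th particle and trivially on Fock space, so actually $U_K^* J_j U_K$ differs from $J_j$ only through the $x_j$-dependence of $B_{K,x_j}$, i.e. through commutators $[J_j, B_{K,x_j}(k)]$ hitting the $x_j$-variable.

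Concretely, the key identity I would use is $U_K^* a(f)_{\text{conjugated}}$... more to the point: writing $U_K = W(-N^{-1/2}\sum_j B_{K,x_j})$ and using the shift property, for a one-particle multiplication-type test the difference is governed by $N^{-1/2}$ times sums of terms $a^{\#}(B_{K,x_j})$ and the c-number $N^{-1}\sum_{i,j}\scp{B_{K,x_i}}{B_{K,x_j}}$. Each annihilation/creation operator $a^{\#}(B_{K,x_j})$ contributes, via \eqref{eq: bound for annihialtion and creation operators}, a factor $\norm{B_{K,x_j}}_2 \norm{(\mathcal N+1)^{1/2}\Psi_N} \lesssim K^{-3/2}\norm{(\mathcal N+1)^{1/2}\Psi_N}$; summing $j$ from $1$ to $N$ and dividing by the $N^{-1}$ from the definition of $\gamma^{(1,0)}$ and the $N^{-1/2}$ from $U_K$ yields $N^{-1}\cdot N \cdot N^{-1/2} \cdot K^{-3/2}\norm{(\mathcal N+1)^{1/2}\Psi_N}$; using then that $\mathcal N \le H_N^0 \le C(H_N^{\rm F} + CN)$ by \eqref{eq: bound H^F vs H^0} converts $N^{-1/2}\norm{(\mathcal N+1)^{1/2}\Psi_N}$ into $\norm{((H_N^{\rm F}+CN)/N)^{1/2}\Psi_N}$, giving exactly the claimed $K^{-3/2}$ bound. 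The c-number term is even smaller: $N^{-1}\sum_{i,j}|\scp{B_{K,x_i}}{B_{K,x_j}}| \le N^{-1}\cdot N^2 \cdot 4\pi K^{-3} = 4\pi N K^{-3}$, and after the overall $N^{-1}$ and one more factor from the derivation this is $O(K^{-3})$, hence dominated by $K^{-3/2}$.

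For the second estimate \eqref{BOUND:CLOSENESS:GAMMA:0:1}, the object is $N^{-1}\scp{W^*(\sqrt N\varphi)\Psi_N}{(\mathcal N - U_K^*\mathcal N U_K)W^*(\sqrt N\varphi)\Psi_N}$. Here I would use \eqref{eq: Gross transform shift property}, $U_K a_k U_K^* = a_k + N^{-1/2}\sum_j B_{K,x_j}(k)$, to compute $U_K^*\mathcal N U_K = \int d^3k\,(a_k^* + N^{-1/2}\sum_j \overline{B_{K,x_j}(k)})(a_k + N^{-1/2}\sum_l B_{K,x_l}(k))$, so that $\mathcal N - U_K^*\mathcal N U_K = -N^{-1/2}\sum_j\big(a^*(B_{K,x_j}) + a(B_{K,x_j})\big) - N^{-1}\sum_{j,l}\scp{B_{K,x_j}}{B_{K,x_l}} = -N^{-1/2}\sum_j \Phi(B_{K,x_j}) - N^{-1}\sum_{j,l}\scp{B_{K,x_j}}{B_{K,x_l}}$. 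Then I insert $W^*(\sqrt N\varphi)$, commuting it past $\Phi(B_{K,x_j})$ via \eqref{eq: Weyl operators shift property} which adds a c-number $2\sqrt N\Re\scp{\varphi}{B_{K,x_j}}$ of size $\sqrt N \norm{\varphi}_2 K^{-3/2}$, and estimate the remaining operator parts by \eqref{eq: bound for Pi}, picking up $\norm{B_{K,x_j}}_2\norm{(\mathcal N+1)^{1/2}W^*(\sqrt N\varphi)\Psi_N}$. Counting powers: $N^{-1}\cdot N\cdot N^{-1/2}\cdot K^{-3/2}$ times either $\sqrt N\norm{\varphi}_2$ or $\norm{(\mathcal N+1)^{1/2}W^*(\sqrt N\varphi)\Psi_N}$; the first combination gives $K^{-3/2}\norm{\varphi}_2$ directly, and for the second I again invoke $\mathcal N \lesssim H_N^{\rm F}+CN$ — but now I must first move $W^*(\sqrt N\varphi)$ back, or rather use that $W^*(\sqrt N\varphi)$ is unitary and relate $\norm{(\mathcal N+1)^{1/2}W^*(\sqrt N\varphi)\Psi_N}$ to $\norm{(\mathcal N+1)^{1/2}\Psi_N} + \sqrt N\norm{\varphi}_2\norm{\Psi_N}$ via the Weyl shift on $\mathcal N$, which after the $N^{-1/2}$ prefactor gives $N^{-1/2}\norm{(\mathcal N+1)^{1/2}\Psi_N} + \norm{\varphi}_2$; the first term becomes $\norm{((H_N^{\rm F}+CN)/N)^{1/2}\Psi_N}$ and the second is absorbed into the $(1+\norm{\varphi}_2)$ prefactor on the right side, noting $\norm{\Psi_N}=1$. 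The main obstacle, and the only place requiring genuine care rather than bookkeeping, is the consistent use of the energy bound \eqref{eq: bound H^F vs H^0} to trade $N^{-1/2}(\mathcal N+1)^{1/2}$ for $((H_N^{\rm F}+CN)/N)^{1/2}$ — one has to be sure $\Psi_N \in \mathcal D((H_N^{\rm F})^{1/2})$ (which is part of the hypothesis) makes all these quadratic-form manipulations legitimate, and to handle the Weyl conjugation of $\mathcal N$ and of $H_N^{\rm F}$ in a way that keeps the $\varphi$-dependence linear and the $N$-counting tight; everything else is a routine application of \eqref{eq: bound for annihialtion and creation operators}, \eqref{eq: bound for Pi}, and \eqref{eq: bounds for G}.
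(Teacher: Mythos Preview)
Your plan for \eqref{BOUND:CLOSENESS:GAMMA:0:1} is correct and matches the paper: expand $\mathcal N - U_K^*\mathcal N U_K$ via the shift property, bound the $\Phi(B_{K,x_j})$ terms by \eqref{eq: bound for Pi}, and convert $(\mathcal N+1)/N$ into $(H_N^{\rm F}+CN)/N$ via \eqref{eq: bound H^F vs H^0}. (There is a harmless sign slip in your formula for $U_K^*\mathcal N U_K$.) The paper does one extra cosmetic step, rewriting the expectation in terms of $U_K\Psi_N$ so that \eqref{eq: bound H^G vs H^0} applies directly, but your route through \eqref{eq: bound H^F vs H^0} works equally well.

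For \eqref{BOUND:CLOSENESS:GAMMA:1:0} there is a genuine muddle. Your first idea, bounding the trace norm of the partial trace by $2\norm{(U_K-\id)\Psi_N}$, loses a factor of $N$: the full Gross transform involves all $N$ Weyl factors, so $\norm{(U_K-\id)\Psi_N}\lesssim N^{1/2}K^{-3/2}\norm{(\mathcal N+1)^{1/2}\Psi_N}$, which is $N$ times too large. Your ``Alternatively'' approach via duality is the right one and is essentially what the paper does, but your execution in the ``Concretely'' paragraph goes astray. Once you have reduced to $J_1-U_K^* J_1 U_K = J_1 - U_{K,x_1}^* J_1 U_{K,x_1}$ (only the single factor $U_{K,x_1}=\exp(iN^{-1/2}\Pi(B_{K,x_1}))$ survives, since $J_1$ commutes with the other factors), you cannot expand this difference in terms of $a^{\#}(B_{K,x_j})$: $J_1$ is an arbitrary bounded operator on the \emph{particle} variable, not a polynomial in field operators, so there is no clean shift formula. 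The sums over $j$ and the c-number $N^{-1}\sum_{i,j}\scp{B_{K,x_i}}{B_{K,x_j}}$ you write down do not arise here. What the paper does instead is simply
\[
\textnormal{Tr}\big|\gamma^{(1,0)}_{\Psi_N}-\gamma^{(1,0)}_{U_K\Psi_N}\big|
=\textnormal{Tr}\big|\gamma^{(1,0)}_{\Psi_N}-\gamma^{(1,0)}_{U_{K,x_1}\Psi_N}\big|
\le 2\norm{(U_{K,x_1}-1)\Psi_N},
\]
and then bounds the right side by functional calculus: $|1-e^{is}|\le|s|$ gives $\norm{(U_{K,x_1}-1)\Psi_N}\le N^{-1/2}\norm{\Pi(B_{K,x_1})\Psi_N}\le 2\norm{B_{K,x}}\,\norm{((\mathcal N+1)/N)^{1/2}\Psi_N}$, after which \eqref{eq: bounds for G} and \eqref{eq: bound H^F vs H^0} finish the job. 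Your power counting happens to land on the correct answer, but the mechanism you describe for getting there is not right.
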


Next, we define a functional to compare $U_{K}\Psi_{N,t}$ with the Pekar state $\psi_t^{\otimes N}\otimes W(\sqrt N \varphi_t)\Omega$. To this end, we introduce for $j \in \{1,...,N\} $ the projections $p^\psi_j : L^2(\mathbb{R}^{3N}) \rightarrow L^2(\mathbb{R}^{3N})$ and $q_j^{\psi} : L^2(\mathbb{R}^{3N}) \rightarrow L^2(\mathbb{R}^{3N})$, given by
\begin{align}
p_j^{\psi} f_N(x_1, \ldots , x_N) 
&= \psi (x_j) \int d^3 x_j' \, \overline{ \psi(x_j')} f_N(x_1,\ldots x_{j-1},x_j',x_{j+1},\ldots , x_N)
\end{align}
for $f_N \in L^2(\mathbb{R}^{3N})$, and $q_j^{\psi} = 1 - p_j^{\psi}$. (More compactly, in bracket notation: $p^\psi_j  = \vert \psi   \rangle \langle \psi  \vert_j$).

\begin{definition}
Let $K>0$ and $\left(\psi , \varphi \right) \in L^2(\mathbb{R}^3) \times  L^2(\mathbb{R}^3)$ with $\norm{\psi}=1$ and $\Psi_{N}\in \mathcal{D} \left( H_N^{\rm F} \right)$, $\norm{\Psi_{N}} = 1$. We define $\beta^a_K : \mathcal{D} ( H_N^{\rm F} ) \times L^2(\mathbb{R}^3) \rightarrow \mathbb{R}_0^+$,
$\beta^b_K: \mathcal{D} ( H_N^{\rm F} ) \times L^2(\mathbb{R}^3) \rightarrow \mathbb{R}_0^+$ and 
$\beta^c: \mathcal{D} ( H_N^{\rm F} ) \rightarrow \mathbb{R}_0^+$
by 
\begin{align}
\beta^a_K \left( \Psi_{N }, \psi \right)
&= \scp{\Psi_{N }}{U^*_K \left( q_1^{\psi} \otimes \id_{\mathcal{F}_s} \right) U_K \Psi_{N }} ,
\\[1mm]
\beta^b_{K} \left( \Psi_{N }, \varphi\right)
&=
N^{-1} \scp{ W^*(\sqrt N \varphi) U_K \Psi_{N} }{\mathcal N W^*(\sqrt N \varphi) U_K  \Psi_N }
\\[1mm]
\beta^c(\Psi_{N }) &= \norm{N^{-1} \left(  H_N^{\rm F} - \scp{\Psi_{N }}{ H_N^{\rm F} \Psi_{N }} \right) \Psi_{N }}^2.
\end{align}
Moreover, we define $\beta_{K}  : \mathcal{D} \left( H_N^{\rm F} \right) \times L^2(\mathbb{R}^3) \times L^2(\mathbb{R}^3) \rightarrow \mathbb{R}_0^+$ by
\begin{align}
\beta_K(\Psi_N,\psi,\varphi) & = \beta^a_K(\Psi_N,\psi)   + \beta^b_K(\Psi_N,\varphi)   + \beta^c(\Psi_N) .
\end{align}
For solutions $\Psi_{N,t}$ and $(\psi_t,\varphi_t)$ of the Schr\"odinger equation \eqref{eq: Schroedinger equation} and the Landau--Pekar equations \eqref{eq: Landau Pekar equations}, respectively, we use the shorthand notations
\begin{align*}
\beta_K(t) & = \beta_K(\Psi_{N,t},\psi_t,\varphi_t),\ \
\beta^a_K(t) = \beta^a_K (\Psi_{N,t},\psi_t), \ \  
\beta^b_K(t) = \beta^b_K (\Psi_{N,t},\varphi_t), \ \ 
\beta^c(t) = \beta^c (\Psi_{N,t}) .
\end{align*} 
\end{definition}

\begin{remark}
Note that
\begin{align}
\beta^b_K \left( \Psi_{N }, \varphi\right) = \int d^3k \, \norm{\left( N^{-1/2} a_k - \varphi (k) \right) U_K \Psi_{N }}^2.
\end{align}
\end{remark}

\begin{remark}
$\beta_K(t)$ being small compared to one ensures that
\begin{itemize}
\item the $N$-particle component of $U_K \Psi_{N,t}$ is approximately given by the product $\psi_t^{\otimes N}$ -- more precisely, $\beta^a_K(t)$ measures the relative number of particles not in $\psi_t$,
\item the phonon component of $U_K \Psi_{N,t}$ is close to the coherent state $W(\sqrt N \varphi_t)\Omega$ -- more, precisely, $\beta^b_K(t)$ measures the relative number of excitations w.r.t.\ to the coherent state $W(\sqrt N \varphi_t)\Omega$,
\item the variance of $N^{-1} H_N^{\rm F}$ w.r.t.\ $\Psi_{N,t}$ is small compared to one -- this will be used to control the singular ultraviolet behavior of the phonon field (for a detailed explanation of this point, see the beginning of Section \ref{section: variance of the energy bound}). Also note that $\beta^c (\Psi_{N,t}) = \beta^c(\Psi_{N})$ is a conserved quantity, and thus requiring  $\beta^c$ to be small only poses  a restriction on the initial state. Since $\beta^c(\Psi_{N}) = c(\Psi_N)$, Proposition \ref{proposition: initial bounds for gross transformed product} shows that $\beta^c$ is small for initial states of the form $\Psi_N = U_K^* \psi^{\otimes N}\otimes W(\sqrt N \varphi)\Omega$ with $K=K_N$ large enough.
\end{itemize}
The functional $\beta_K(t)$ can consequently be used to monitor whether the condensate of the particles and the coherent state of the phonons is stable during the time evolution. Its definition is motivated by a previous work on the derivation of the Maxwell-Schr\"odinger equations \cite{leopoldpickl}. In addition it is necessary to include the Gross transform in the definition of $\beta^a_K(t)$ and $\beta^b_K(t)$. This induces correlations between the electron and the phonons and effectively regularizes the interaction. In this sense,  the Gross transform has a similar role as the Bogoliubov transformation in the derivation of the time-dependent Gross-Pitaevskii equation (see for instance \cite{pickl2, jeblickleopoldpickl, benedikterschlein, brenneckeschlein}).
\end{remark}

The trace norm of the difference $\gamma^{(1,0)}_{U_K \Psi_{N,t}} - \vert \psi_t \rangle \langle \psi_t \vert$ and the quantity $\beta^a_K(t)$ are related by
\begin{align}
\label{eq: relation between beta and reduced density matrices 1}
\beta^a_K (t) &\leq  \textnormal{Tr}_{L^2(\mathbb{R}^3)} \abs{\gamma_{U_K \Psi_{N,t}}^{(1,0)} - \ket{\psi_t} \bra{\psi_t}} \leq  4 \sqrt{\beta^a_K(t)} ,
\end{align}
which is the content of the following lemma when $\Psi_N = U_K \Psi_{N,t}$.
\begin{lemma}
\label{lemma: relation between beta and reduced density matrices}
Let $\psi \in L^2(\mathbb{R}^3)$ with $\norm{\psi}=1$ and $\Psi_{N} \in \mathcal{H}^{(N)}$ with $\norm{\Psi_N}=1$. Then,
\begin{align}
\label{eq: relation between beta and reduced density matrices 1}
 \scp{\Psi_{N }}{ \big( q_1^{\psi} \otimes \id_{\mathcal{F}_s} \big)   \Psi_{N }}  &\leq  \textnormal{Tr}_{L^2(\mathbb{R}^3)} \abs{\gamma_{\Psi_{N}}^{(1,0)} - \ket{\psi} \bra{\psi}} \leq  4 \sqrt{\scp{\Psi_{N }}{ \big( q_1^{\psi} \otimes \id_{\mathcal{F}_s} \big)   \Psi_{N }} }  .
\end{align}
\end{lemma}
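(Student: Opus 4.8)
The plan is to establish the two inequalities in \eqref{eq: relation between beta and reduced density matrices 1} separately, using only elementary properties of the trace norm and the fact that $q_1^\psi$ is a rank-one-complement projection on the first particle. Throughout, write $P = \ket{\psi}\bra{\psi}$ and $Q = \id - P = q^\psi$ acting on $L^2(\mathbb R^3)$, so that $\gamma := \gamma_{\Psi_N}^{(1,0)}$ is a positive trace-class operator with $\tr \gamma = 1$, and note that by the definition of the partial trace $\scp{\Psi_N}{(q_1^\psi\otimes\id_{\mathcal F_s})\Psi_N} = \tr(Q\gamma) = \scp{\psi}{(\id - \gamma)\psi}^{\perp}$ — more precisely $\tr(Q\gamma) = 1 - \scp{\psi}{\gamma\psi}$.

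\textbf{Lower bound.} For the left inequality I would simply test the trace norm against the bounded operator $Q$ (which has norm $1$): since $\tr|\gamma - P| \ge |\tr(Q(\gamma - P))|$ and $\tr(QP) = 0$ because $QP = 0$, we get $\tr|\gamma - P| \ge \tr(Q\gamma) = \scp{\Psi_N}{(q_1^\psi\otimes\id_{\mathcal F_s})\Psi_N}$, which is exactly the first inequality. This step is routine.

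\textbf{Upper bound.} For the right inequality, the standard trick is to exploit that $P$ is rank one. Decompose $\gamma - P$ in the block form relative to the splitting $L^2 = \CCC\psi \oplus (\CCC\psi)^\perp$: writing $\lambda = \scp{\psi}{\gamma\psi} = 1 - \tr(Q\gamma)$, one has $P\gamma P = \lambda P$, $P(\gamma-P)P = (\lambda - 1)P$, $Q(\gamma - P)Q = Q\gamma Q$, and the off-diagonal pieces $P(\gamma-P)Q = P\gamma Q$, $Q(\gamma - P)P = Q\gamma P$. Then $\tr|\gamma - P|$ is bounded by the sum of the trace norms of these four blocks. The two diagonal contributions are $|\lambda - 1| = \tr(Q\gamma)$ and $\tr(Q\gamma Q) = \tr(Q\gamma) \le \tr(Q\gamma)$ (using $0\le Q\gamma Q$ and cyclicity, since $Q^2 = Q$). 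For the off-diagonal blocks I would use that $P\gamma Q$ has rank at most one, so its trace norm equals its Hilbert–Schmidt norm, and by Cauchy–Schwarz for the (positive) operator $\gamma$, $\|P\gamma Q\|_{\mathrm{HS}}^2 = \scp{\psi}{\gamma Q\gamma\psi} \le \|\gamma^{1/2}\psi\|^2\,\|\gamma^{1/2}Q\gamma^{1/2}\|_{\mathrm{op}}\cdot\!$ — more cleanly, $\|P\gamma Q\|_{\mathrm{HS}}^2 \le \scp{\psi}{\gamma\psi}\,\tr(Q\gamma Q) \le \tr(Q\gamma)$, using $\scp{\psi}{\gamma\psi}\le 1$. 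Hence each off-diagonal block has trace norm at most $\sqrt{\tr(Q\gamma)}$, and collecting the four terms gives $\tr|\gamma - P| \le 2\tr(Q\gamma) + 2\sqrt{\tr(Q\gamma)} \le 4\sqrt{\tr(Q\gamma)}$, where the last step uses $\tr(Q\gamma)\le 1$. Recalling $\tr(Q\gamma) = \scp{\Psi_N}{(q_1^\psi\otimes\id_{\mathcal F_s})\Psi_N}$ yields the claim.

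\textbf{Main obstacle.} There is no deep obstacle here; the only point requiring a little care is the bound on the off-diagonal blocks — one must justify that a rank-one operator's trace norm is its HS norm and then apply Cauchy–Schwarz with respect to the positive operator $\gamma$ in the right way (equivalently, write $\gamma = \sum_j \mu_j\ket{\chi_j}\bra{\chi_j}$ and estimate $\|P\gamma Q\|_{\mathrm{HS}}$ directly from the spectral expansion). Everything else is bookkeeping with projections and the elementary inequality $x \le \sqrt x$ for $x\in[0,1]$.
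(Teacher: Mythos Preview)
Your proof is correct and follows essentially the same $p/q$ block-decomposition strategy as the paper: both split into four pieces relative to $P=\ket{\psi}\bra{\psi}$ and $Q=1-P$, bound the two diagonal blocks by $\tr(Q\gamma)$ and the two off-diagonal blocks by $\sqrt{\tr(Q\gamma)}$ via Cauchy--Schwarz, and then use $x\le\sqrt{x}$ for $x\in[0,1]$. The only cosmetic difference is that the paper phrases the upper bound through the dual characterization $\tr|\gamma-P|=\sup_{\|A\|_{\op}=1}|\scp{\Psi_N}{A_1\Psi_N}-\scp{\psi}{A\psi}|$ and inserts $p_1^\psi+q_1^\psi$ around $A_1$, whereas you work directly with the trace norm of the blocks of $\gamma-P$; these are equivalent computations.
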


\begin{proof}
The lemma is a consequence of the identity
\begin{align}\label{eq: dual identity}
 \textnormal{Tr}_{L^2(\mathbb{R}^3)} \abs{\gamma_{\Psi_{N}}^{(1,0)} - \ket{\psi} \bra{\psi}}  = \sup_{\norm{A}_{op}=1} \abs{ \scp{\Psi_N}{A_{1} \Psi_N } - \scp{\psi}{A \psi} }
\end{align}
where the supremum is taken over all bounded operators $A:L^2(\mathbb R^3) \to L^2(\mathbb R^3)$ and 
\begin{align}
A_1 = \big( A \otimes \underbrace{\id_{L^2(\mathbb R^3)} \otimes ... \otimes \id_{L^2(\mathbb R^3)}}_{ N -1\ \text{times}} \big) \otimes \id_{\mathcal{F}_s}
\end{align}
acts non-trivially only on the variable $x_1$. (Note that \eqref{eq: dual identity} holds because the space of bounded operators is the dual of the space of trace-class operators). The first bound then follows from\footnote{From now on we omit the product with the identity and write $q_i^\psi$ and $p_i^\psi$ instead of $q_i^\psi \otimes \id_{\mathcal F}$ and $p_i^\psi \otimes \id_{\mathcal F}$.}
\begin{align}
 \scp{\Psi_{N }}{ \big( q_1^{\psi} \otimes \id_{\mathcal{F}_s} \big)   \Psi_{N }} = \scp{ \Psi_N }{ q_1^\psi  \Psi_N} = \abs{ \scp{ \Psi_N}{ p_1^\psi   \Psi_N} - \scp{\psi}{p^\psi \psi}},
\end{align}
while for the second bound, one inserts $1=p_1^\psi+q_1^\psi$ on the left and right of $A_{1}$ and uses 
\begin{align}
p_1^\psi  A_{1} p_1^\psi - \scp{\psi}{A  \psi} = q_1^\psi \scp{\psi}{A \psi}
\end{align}
together with the Cauchy--Schwarz inequality for the remaining terms.
\end{proof}

The main ingredient of the proof of Theorem \ref{theorem: main theorem} is the following estimate for $\beta_K(t)$.

\begin{lemma}
\label{lemma: Gronwall bound for beta}
 Assume $K \ge \widetilde K>0$ such that Lemma \ref{lemma representation of H_F} holds. Let $\Psi_{N,t} = e^{-iH_N^{\rm F} t }\Psi_{N} $ with $\Psi_{N} \in \mathcal{D} ( H_N^{\rm F} )$ such that $\norm{\Psi_{N}} = 1$ and $E_0 = \sup_{N \in \mathbb N} \abs{ N^{-1} \scp{\Psi_{N}}{ H_N^{\rm F} \Psi_{N}} }  <\infty $. Let further $(\psi_t, \varphi_t)$ be a solution of \eqref{eq: Landau Pekar equations} with $(\psi, \varphi)  \in H^2(\mathbb{R}^3) \times L^2_{1}({\mathbb{R}^3})$ such  that $\norm{\psi}=  1$. Then, there exists a constant $C > 0 $ only depending on $\norm{\varphi}_{L_1^2 }$, $\norm{\psi}_{H^2}$, and $E_0$, such that
\begin{align}
\abs{ \frac{d}{dt} \beta_K(t)}
&\leq C \left( 1 + t^2 \right) \left( \beta_K(t )  + K N^{-1} + K^{-1}  \right) . \label{BOUND:BETA:TIME:DERIVATIVE}
\end{align}
\end{lemma}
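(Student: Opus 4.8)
The plan is to differentiate $\beta_K(t)=\beta^a_K(t)+\beta^b_K(t)+\beta^c(t)$ in time and bound each term. The third term is immediate: $\beta^c(t)=\beta^c(\Psi_N)$ is a conserved quantity (it commutes with $e^{-iH_N^{\rm F}t}$), so $\frac{d}{dt}\beta^c(t)=0$. For $\beta^a_K$ and $\beta^b_K$ the strategy is to conjugate everything by the Gross transform. Writing $\widetilde\Psi_{N,t}=U_K\Psi_{N,t}$, one has $i\partial_t \widetilde\Psi_{N,t}=H^{\rm G}_{N,K}\widetilde\Psi_{N,t}$ by Lemma \ref{lemma representation of H_F}, so that
\begin{align*}
\frac{d}{dt}\beta^a_K(t)&= \tfrac{1}{i}\scp{\widetilde\Psi_{N,t}}{[ q_1^{\psi_t}, H^{\rm G}_{N,K}]\widetilde\Psi_{N,t}} + \scp{\widetilde\Psi_{N,t}}{(\partial_t q_1^{\psi_t})\widetilde\Psi_{N,t}},\\
\frac{d}{dt}\beta^b_K(t)&= \tfrac{1}{iN}\scp{W^*(\sqrt N\varphi_t)\widetilde\Psi_{N,t}}{[\mathcal N, W^*(\sqrt N\varphi_t)H^{\rm G}_{N,K}W(\sqrt N\varphi_t)]W^*(\sqrt N\varphi_t)\widetilde\Psi_{N,t}} + (\text{term from }\partial_t\varphi_t),
\end{align*}
and one uses the Landau--Pekar equations \eqref{eq: Landau Pekar equations} to cancel the "free" parts of these commutators against the explicit time derivatives of $\psi_t$ and $\varphi_t$. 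What survives are the genuinely interacting terms, which must be bounded by $\beta_K(t)$ plus the error $KN^{-1}+K^{-1}$.

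The next step is the systematic estimation of these surviving terms. After substituting the Landau--Pekar dynamics, the leading contributions to $\frac{d}{dt}\beta^a_K$ involve the interaction $N^{-1/2}\Phi(G_{K,x_j})$ sandwiched between $p^{\psi_t}$ and $q^{\psi_t}$ projections, paired against fluctuations $N^{-1/2}a_k-\varphi_t(k)$ of the phonon field; these are handled by the standard counting-measure manipulations (moving one annihilation/creation operator onto the coherent-state fluctuation, exchanging particle labels, and using $\|\Phi(G_{K,x})\Psi\|\le\sqrt{16\pi K}\,\|(\mathcal N+1)^{1/2}\Psi\|$ from \eqref{eq: bound for Phi K}) and produce terms of the form $\beta^a_K+\beta^b_K$ up to a constant, together with an error proportional to $K/N$ coming from the diagonal $j=l$ contributions and from commutators generating $K\|G_{K,x}\|_\infty$-type factors. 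The contributions of $A_{K,x_j}$ and $V_K(x_j-x_l)$ in $H^{\rm G}_{N,K}$ are lower order: $V_K$ is bounded by $CK^{-1}N^{-1}$ pointwise by \eqref{eq: bound for V_K}, and $A_{K,x_j}$ carries explicit factors $\|kB_{K,x}\|^2\le 4\pi K^{-1}$ and $N^{-1/2}$ from \eqref{eq: bounds for G}, so after summing over $j$ (and $j,l$) these yield $O(K^{-1})$ and $O(KN^{-1})$ errors. The $\beta^b_K$ computation is analogous, with the roles of particles and phonons partially interchanged. Throughout, the bounds \eqref{eq: bound H^F vs H^0}--\eqref{eq: bound H^G vs H^0} let one replace $\|(\mathcal N+1)^{1/2}\widetilde\Psi_{N,t}\|$ and $\|\nabla_j\widetilde\Psi_{N,t}\|$ by $\|(H^{\rm G}_{N,K}/N+C)^{1/2}\widetilde\Psi_{N,t}\|^2\le C(E_0+1)$, using energy conservation; the polynomial-in-$t$ prefactor $1+t^2$ enters through the Sobolev norms $\|\psi_t\|_{H^2}\le C(1+|t|)$ and $\|\varphi_t\|_{L^2_1}\le C(1+|t|)$ from Proposition \ref{prop: solution theory for LPeq}, which multiply the interaction terms when one estimates commutators with $p^{\psi_t}$ or with the classical source.

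The main obstacle, as the authors flag in the introduction and Section \ref{section: variance of the energy bound}, is controlling the interaction between the ultraviolet modes of the phonon field and the particles \emph{not} in the condensate, i.e., precisely the $q^{\psi_t}$-sector term in $\frac{d}{dt}\beta^a_K$ involving $\Phi(G_{K,x_j})$: a naive bound using only $\|G_{K,x}\|_2^2=4\pi K$ loses a factor of $K$ and gives an error $K\cdot\beta^a_K$ rather than $K/N$. This is where $\beta^c$ must be invoked: one trades a factor of $N^{-1}\Phi(G_{K,x_j})$ for a factor of $N^{-1}(H_N^{\rm F}-\langle H_N^{\rm F}\rangle)$ using an operator bound in the spirit of \cite[Lemma 10]{frankschlein}, so that the problematic term is dominated by $\sqrt{\beta^c}$ times a controlled quantity, which is harmless since $\beta^c$ is conserved and small by hypothesis. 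Once this single dangerous term is tamed, the remaining estimates are routine applications of Cauchy--Schwarz, the commutator relations \eqref{eq: canonical commutation relation}, and the norm bounds \eqref{eq: bound for annihialtion and creation operators}--\eqref{eq: bounds for G}, and assembling them yields \eqref{BOUND:BETA:TIME:DERIVATIVE}.
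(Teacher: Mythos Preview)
Your overall architecture is right: $\beta^c$ is conserved, you differentiate $\beta^a_K$ and $\beta^b_K$ via the Gross-transformed dynamics $i\partial_t\widetilde\Psi_{N,t}=H^{\rm G}_{N,K}\widetilde\Psi_{N,t}$, the Landau--Pekar equations cancel the free parts, and the $A_{K,x_j}$, $V_K$ contributions are subleading for the reasons you give. The $(1+t^2)$ prefactor from Proposition~\ref{prop: solution theory for LPeq} and the use of \eqref{eq: bound H^F vs H^0}--\eqref{eq: bound H^G vs H^0} to control $\mathcal N$ and $\nabla_j$ via energy are also correct.

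However, your treatment of the critical ultraviolet term has a genuine gap. First, a location issue: the dangerous term is not in $\frac{d}{dt}\beta^a_K$ but in $\frac{d}{dt}\beta^b_K$. In the $\beta^a_K$ derivative one only sees terms of the form $p_1^{\psi_t}(\cdots)q_1^{\psi_t}$ (the $q_1^{\psi_t}(\cdots)q_1^{\psi_t}$ part is real and drops out), and these are handled directly by the Frank--Schlein bound $a^*(G_{K,x_1})a(G_{K,x_1})\le C_G(1-\Delta_1)\mathcal N$ acting on $p_1^{\psi_t}$, which supplies the needed regularity. The term that actually diverges in $K$ under a naive estimate is \eqref{eq: time derivative beta-b 4}, namely
\[
-2\Im\int_{|k|\le K}d^3k\,|k|^{-1}\scp{U_K\Psi_{N,t}}{e^{ikx_1}q_1^{\psi_t}(N^{-1/2}a_k-\varphi_t(k))U_K\Psi_{N,t}},
\]
where there is no $p_1^{\psi_t}$ available to absorb the gradient.

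Second, and more importantly, your proposed remedy---``trade $N^{-1}\Phi(G_{K,x_j})$ for $N^{-1}(H_N^{\rm F}-\langle H_N^{\rm F}\rangle)$''---does not match what is actually needed and would not close the estimate as stated. The paper's argument is more indirect: one first symmetrizes over particles to rewrite the bad term via $N^{-1}\sum_j q_j^{\psi_t}e^{-ikx_j}$, then uses the operator inequality $O_{2,1}^*O_{1,2}+O_{1,2}^*O_{2,1}\le O_{1,2}^*O_{1,2}+O_{2,1}^*O_{2,1}$ with $O_{1,2}=(1-\Delta_2)^{-1/2}e^{ikx_2}(1-\Delta_1)^{1/2}q_2^{\psi_t}$ to reduce everything to $\|\nabla_2\, q_1^{\psi_t}U_K\Psi_{N,t}\|^2$. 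The crucial point---which your sketch misses---is that this is $\nabla_2$ acting on a $q_1$-projected state (different particle indices), and \emph{this} quantity, not $\|\nabla_1 q_1^{\psi_t}U_K\Psi_{N,t}\|$, is what Lemma~\ref{lemma: variance of the energy bound} bounds by $\beta_K+N^{-1}K^{-1}+N^{-2}K$. Inside that lemma one writes $q_1^{\psi_t}H^{\rm G}_{N,K}q_1^{\psi_t}=q_1^{\psi_t}H^{\rm G}_{N,K}-q_1^{\psi_t}H^{\rm G}_{N,K}p_1^{\psi_t}$ and it is the first piece, estimated by Cauchy--Schwarz against $(N^{-1}H^{\rm G}_{N,K}-E_N^{\rm F})U_K\Psi_{N,t}$, that finally brings in $\beta^c$. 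Without this $\nabla_1\to\nabla_2$ switch and Lemma~\ref{lemma: variance of the energy bound}, the argument does not close.
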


The proof is given in Section \ref{section: Estimates on the time derivative}.
Putting the above statements together, we obtain the proof of Theorem \ref{theorem: main theorem}.

\begin{proof}[Proof of Theorem \ref{theorem: main theorem}] We first apply Gr\"onwall's argument to \eqref{BOUND:BETA:TIME:DERIVATIVE} in order to obtain
\begin{align}\label{BOUND:BETA}
\beta_K(t ) &\leq e^{C  ( 1+ \abs{t} )^3} \left(  \beta_K(0 ) + K N^{-1} + K^{-1} \right). 
\end{align}
Next, set $K = K_N = \widetilde KN^{1/2}$ with $\widetilde K>0$ as in Lemma \ref{lemma representation of H_F}, and compute
\begin{align}
\text{Tr}_{L^2(\mathbb R^3)} \Big\vert \gamma^{(1,0)}_{\Psi_{N,t}} - \vert \psi_t \rangle \langle \psi_t \vert  \Big\vert  & \le  \text{Tr}_{L^2(\mathbb R^3)} \Big\vert \gamma^{(1,0)}_{U_{K_N}\Psi_{N,t}} - \vert \psi_t \rangle \langle \psi_t \vert  \Big\vert   +  CN^{-3/4}
\nonumber \\[2.5mm]
& \le 4 \sqrt{\beta^a_{K_N}(t)} + C N^{-3/4} 
\nonumber  \\[2.5mm]
& \le 4 \sqrt{ \beta_{K_N}(t)} + C N^{-3/4}
\nonumber  \\[2.5mm]
& \le  \sqrt{\beta_{K_N}(0) +  N^{-1/2}}  e^{C(1+ \abs{t})^3},
\end{align}
where we used inequality \eqref{BOUND:CLOSENESS:GAMMA:1:0} in the first step, Lemma \eqref{lemma: relation between beta and reduced density matrices} in the second and \eqref{BOUND:BETA} in the last one. The estimate \eqref{MAIN:BOUND:TRACE:NORM:1:0} then follows from $\beta^c(0) = c(\Psi_{N})$ and
\begin{align}
\beta^a_{K_N}(0) + \beta^b_{K_N}(0) \le a(\Psi_{N},\psi) + b(\Psi_{N},\varphi) + C N^{-3/4},
\end{align}
which in turn holds because of \eqref{eq: relation between beta and reduced density matrices 1} and Lemma \ref{LEMMA:CLOSENESS:GAMMA}.\medskip

 Using \eqref{BOUND:CLOSENESS:GAMMA:0:1}, we can similarly estimate
\begin{align}
& \hspace{-1cm} N^{-1} \scp{ W^*(\sqrt N \varphi_t)  \Psi_{N,t} }{\mathcal N W^*(\sqrt N \varphi_t) \Psi_{N,t} }
\nonumber  \\[2mm] 
& \le N^{-1} \scp{W^*(\sqrt N \varphi_t) U_{K_N} \Psi_{N,t} }{\mathcal N W^*(\sqrt N \varphi_t) U_{K_N} \Psi_{N,t} }  + C (1+\norm{\varphi_t}) N^{-3/4} 
\nonumber \\[2mm]
& =  \beta^b_{K_N}(t) + C  (1+\norm{\varphi_t}) N^{-3/4}
\nonumber \\[2mm]
& \le \big( \beta_{K_N}(0) + N^{-1/2}\big) e^{C( 1 + \abs{t})^3}
\nonumber \\[2mm]
& \le \big( a(\Psi_{N},\psi) + b(\Psi_{N},\varphi) + c(\Psi_{N}) + N^{-1/2} \big) e^{C( 1+ \abs{t})^3}.
\end{align}
This completes the proof of the theorem.
\end{proof}

 The proofs of Proposition \ref{proposition: initial bounds for gross transformed product} and Theorem \ref{theorem: main theorem: 2} are postponed to Sections \ref{sec: initial states} and \ref{section: proof of the proposition for free domain}, respectively.
 
\section{Bound on $\|  \nabla_2 q_1^{\psi}  U_{K} \Psi_{N}\|$}
\label{section: variance of the energy bound}

In this section, we state and prove a bound that is a crucial ingredient in the proof of Lemma \ref{lemma: Gronwall bound for beta}.

\begin{lemma}
\label{lemma: variance of the energy bound}
Assume $K\ge \widetilde K>0$ such that Lemma \ref{lemma representation of H_F} holds. Let $(\psi, \varphi) \in H^2(\mathbb{R}^3) \times L_1^2(\mathbb{R}^3)$ and $ \Psi_{N} \in \mathcal{D} ( H_N^{\rm F} ) $ with $\norm{\Psi_N}=1$, and set $E_N^{\rm F}(\Psi_N) = N^{-1}    \scp{\Psi_N}{H_N^{\rm F}\Psi_N}$. Then
\begin{align}
\label{eq: variance of the energy bound}
\norm{\nabla_2 q_1^{\psi}  U_{K} \Psi_{N}}^2 
&\leq g(\Psi_N,\psi,\varphi) \left(  \beta_K(\Psi_N,\psi,\varphi)  + N^{-1}  K^{-1} + N^{-2} K \right),
\end{align}
where $g(\Psi_N,\psi,\varphi) = C (\norm{\psi}_{H^2}^2 + \norm{\varphi}_{L_1^2}^2  + \vert  E_N^{\rm F}(\Psi_N) \vert )$ for some $C>0$. 
\end{lemma}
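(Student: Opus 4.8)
### Proof Proposal for Lemma 5.1

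The plan is to bound $\norm{\nabla_2 q_1^{\psi} U_K \Psi_N}^2$ by relating the kinetic energy of the non-condensed particles to the total energy $\scp{\Psi_N}{H_N^{\rm F}\Psi_N}$, controlling the difference by the small quantities entering $\beta_K$. First I would pass to the Gross-transformed picture: writing $\Phi_N = U_K \Psi_N$ and using the representation $H_N^{\rm F} = U_K^* H_{N,K}^{\rm G} U_K$ from Lemma \ref{lemma representation of H_F}, the quantity $\beta^c(\Psi_N) = \norm{N^{-1}(H_N^{\rm F} - E_N^{\rm F}(\Psi_N)N)\Psi_N}^2$ becomes $\norm{N^{-1}(H_{N,K}^{\rm G} - E_N^{\rm F}(\Psi_N)N)\Phi_N}^2$. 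Since $q_1^\psi \nabla_2$ commutes suitably and $\norm{\nabla_2 q_1^\psi \Phi_N}^2 \le \norm{\nabla_2 q_2^\psi q_1^\psi \Phi_N}^2 + \norm{\nabla_2 p_2^\psi q_1^\psi \Phi_N}^2$, and the $p_2^\psi$ term is immediately bounded by $\norm{\nabla\psi}^2 \norm{q_1^\psi \Phi_N}^2 \le \norm{\psi}_{H^1}^2 \beta_K^a$, the real content is the estimate of $\norm{\nabla_2 q_2^\psi q_1^\psi \Phi_N}^2$, i.e.\ the kinetic energy of two excited particles.

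For that term I would use positivity of the full Gross-transformed Hamiltonian. From \eqref{eq: bound H^G vs H^0} one has $H_{N,K}^{\rm G} \ge \tfrac12 H_N^0 - CN = \tfrac12\sum_j(-\Delta_j) + \tfrac12\mathcal N - CN$ on the full (non-symmetrized) space, so $\sum_{j}\norm{\nabla_j \chi}^2 \le 2\scp{\chi}{(H_{N,K}^{\rm G}+CN)\chi}$ for any $\chi$. Applying this with $\chi = q_2^\psi q_1^\psi \Phi_N$ and keeping only the $j=2$ term, it remains to bound $\scp{q_2^\psi q_1^\psi \Phi_N}{(H_{N,K}^{\rm G}+CN) q_2^\psi q_1^\psi \Phi_N}$. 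I would write $(H_{N,K}^{\rm G}+CN) = (H_{N,K}^{\rm G} - E_N^{\rm F} N) + (E_N^{\rm F} + C)N$; the second piece contributes $(E_N^{\rm F}+C)N\norm{q_2^\psi q_1^\psi\Phi_N}^2 \le (|E_N^{\rm F}|+C)N\beta_K^a$ — wait, this carries a factor $N$, so one cannot afford a gross bound here; instead one must exploit that $q_1^\psi q_2^\psi$ projects onto \emph{two} excitations, and a symmetry/combinatorial argument (as in \cite[Sect.~VIII.1]{leopoldpickl}) shows $\norm{q_1^\psi q_2^\psi \Phi_N}^2 \le \norm{q_1^\psi \Phi_N}^2 = \beta_K^a$ is not enough — one genuinely needs the energy estimate to absorb the extensive factor. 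The correct route is: commute $q_1^\psi q_2^\psi$ through $H_{N,K}^{\rm G}$, so that $\scp{q_2^\psi q_1^\psi \Phi_N}{H_{N,K}^{\rm G} q_2^\psi q_1^\psi \Phi_N} = \scp{\Phi_N}{q_1^\psi q_2^\psi H_{N,K}^{\rm G} \Phi_N} + (\text{commutator terms})$, replace $H_{N,K}^{\rm G}\Phi_N$ by $(H_{N,K}^{\rm G}-E_N^{\rm F}N)\Phi_N + E_N^{\rm F}N\Phi_N$; the first summand is controlled by Cauchy--Schwarz and $\beta^c$, the second gives $E_N^{\rm F} N \scp{\Phi_N}{q_1^\psi q_2^\psi\Phi_N}$ which now \emph{does} reduce to $E_N^{\rm F} N \beta_K^a$ up to lower order — but this is still $O(N\beta_K^a)$, not $O(\beta_K^a)$.

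Resolving this requires the standard trick of the Pickl method: one does not project both particles, but rather compares $\scp{\Phi_N}{q_1^\psi q_2^\psi H_{N,K}^{\rm G}\Phi_N}$ with $\scp{\Phi_N}{q_1^\psi q_2^\psi \mathcal E_t \Phi_N}$ where $\mathcal E_t$ is the ``mean-field'' energy built from the one-body Landau--Pekar generator, and uses that the difference $H_{N,K}^{\rm G} - N\,h^{\rm LP}$ (a sum of fluctuation operators) has matrix elements between $q_1 q_2(\cdot)$ and $(\cdot)$ that are $O(1)$, not $O(N)$, after exploiting one factor $q$ per off-diagonal term. Concretely, the interaction and field-coupling terms in $H_{N,K}^{\rm G}$, when hit by $q_1^\psi q_2^\psi$ on one side, produce either a factor $\beta_K^a$ directly or a factor $N^{-1/2}\cdot(\text{number operator})^{1/2}$ that combines with $\beta_K^b$; the ultraviolet-singular pieces $A_{K,x_j}$ and $V_K$ are handled using the bounds $\norm{kB_{K,x}}^2 \le 4\pi K^{-1}$, $\norm{B_{K,x}}^2\le 4\pi K^{-3}$, $|V_K|\le CK^{-1}N^{-1}$ from \eqref{eq: bounds for G}, \eqref{eq: bound for V_K}, together with the operator bound motivated by \cite[Lemma~10]{frankschlein} that trades a factor of $\nabla_2 q_2^\psi$ for $\beta^c$ — this is the only place where $\beta^c$ (equivalently the energy-variance hypothesis) is genuinely needed, and it is precisely what produces the $N^{-1}K^{-1} + N^{-2}K$ error terms. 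I expect this bookkeeping of the singular terms $\sum_j A_{K,x_j}$ — in particular the term $N^{-1/2}\nabla_{x_j}\cdot a(kB_{K,x_j})$, which carries a derivative and must be paired against $\nabla_2 q_2^\psi q_1^\psi\Phi_N$ via a self-consistent absorption argument — to be the main obstacle; everything else (the free kinetic term, the cutoff field $\Phi(G_{K,x_j})$, and the commutators $[q_1^\psi q_2^\psi, \cdot]$, which are either zero or produce extra $p$-projections) is routine. The final step is to collect all contributions, observe that the self-consistent term $\norm{\nabla_2 q_2^\psi q_1^\psi\Phi_N}^2$ appears on the right with a small prefactor (absorbable for $K\ge\widetilde K$ large), and read off the claimed inequality with $g(\Psi_N,\psi,\varphi) = C(\norm{\psi}_{H^2}^2 + \norm{\varphi}_{L_1^2}^2 + |E_N^{\rm F}(\Psi_N)|)$.
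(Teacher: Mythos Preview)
Your proposal has a genuine gap: you miss the key symmetry trick that provides the crucial $N^{-1}$ factor, and the workaround you sketch does not recover it.

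The paper's argument begins by observing that, since $U_K\Psi_N$ is symmetric in $x_2,\dots,x_N$,
\[
\norm{\nabla_2 q_1^\psi U_K\Psi_N}^2
= \frac{1}{N-1}\sum_{j=2}^N \scp{q_1^\psi U_K\Psi_N}{(-\Delta_j)\,q_1^\psi U_K\Psi_N}
\le \frac{2}{N}\,\scp{q_1^\psi U_K\Psi_N}{H_N^0\,q_1^\psi U_K\Psi_N}.
\]
This single step manufactures the factor $N^{-1}$ in front of the full Hamiltonian. After invoking \eqref{eq: bound H^G vs H^0} one is left with $C\beta_K^a + \tfrac{4}{N}\scp{q_1^\psi \Phi_N}{H_{N,K}^{\rm G}\,q_1^\psi\Phi_N}$, and the rest of the proof is a straightforward decomposition $q_1^\psi H_{N,K}^{\rm G} q_1^\psi = q_1^\psi H_{N,K}^{\rm G} - q_1^\psi H_{N,K}^{\rm G} p_1^\psi$: the first piece is handled by inserting $\pm E_N^{\rm F}(\Psi_N)N$ and using Cauchy--Schwarz with $\beta^c$; the second piece involves only operators acting on the \emph{single} coordinate $x_1$ (namely $-\Delta_1$, $N^{-1/2}\Phi(G_{K,x_1})$, $A_{K,x_1}$, and $V_K(x_1-x_2)$), and each of these $q_1^\psi(\cdot)p_1^\psi$ cross terms is bounded individually by the routine estimates \eqref{eq: bounds for G}, \eqref{eq: bound for V_K}, producing the errors $N^{-2}K$ and $K^{-1}$.

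Your route instead inserts $p_2^\psi+q_2^\psi$ and tries to bound $\norm{\nabla_2 q_2^\psi q_1^\psi\Phi_N}^2$ by $\scp{q_2^\psi q_1^\psi\Phi_N}{(H_{N,K}^{\rm G}+CN)\,q_2^\psi q_1^\psi\Phi_N}$. As you yourself notice, the $CN$ piece (and the $E_N^{\rm F}N$ piece) then contributes $O(N)\norm{q_1^\psi q_2^\psi\Phi_N}^2$, and there is no general bound $\norm{q_1^\psi q_2^\psi\Phi_N}^2 \lesssim N^{-1}\beta_K^a$; in fact $\norm{q_1^\psi q_2^\psi\Phi_N}^2$ can be of order $\beta_K^a$. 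Your proposed fix---replacing $H_{N,K}^{\rm G}$ by a mean-field operator and arguing that the difference has $O(1)$ rather than $O(N)$ matrix elements---does not address this: the extensive contribution $CN$ coming from the lower bound \eqref{eq: bound H^G vs H^0} is not an interaction term and will not be cancelled by any mean-field subtraction. Moreover, with two projectors the commutator analysis now involves all terms acting on $x_1$ \emph{or} $x_2$, doubling the bookkeeping with no gain. The ``self-consistent absorption'' you invoke at the end is not needed in the actual proof and would not by itself close the $O(N)$ gap.

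In short: drop $q_2^\psi$ entirely and use the symmetry of $\Psi_N$ in $x_2,\dots,x_N$ to average the kinetic term over all particles; this is what turns an $O(N)$ estimate into an $O(1)$ estimate and makes the remaining commutator analysis a purely one-particle affair.
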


Before we give its proof, we explain the importance of the above estimate.
The main technical difficulty for controlling the time-derivative of $\beta_K(t)$ arises from the singular ultraviolet behavior of the phonon field. In particular, if we want to estimate $\frac{d}{dt}\beta^b_K(t)$, we have to bound the following term (cf.\ Section \ref{subsection time derivative of beta-b})
\begin{align}
\eqref{eq: time derivative beta-b  4}
&= - 2 \Im \scp{U_{K} \Psi_{N,t}}{\int_{\abs{k} \leq K} 
d^3k \, \abs{k}^{-1}  e^{ikx_1} q_1^{\psi_t} \left( N^{-1/2} a_k - \varphi_t(k) \right) U_{K} \Psi_{N,t}} 
\end{align}
by an $N$-independent constant times the functional $\beta_K(t)$. A naive estimate using the Cauchy--Schwarz inequality would give the bound 
\begin{align}\label{eq: naive estimate}
\abs{\eqref{eq: time derivative beta-b  4}}
&\leq C K^{1/2} \sqrt{\beta^a_{K}(t) \beta^b_{K}(t) },
\end{align}
which is not sufficient for $K \gg 1$. The reason for the bad behavior for large $K$ clearly comes from the careless estimate of the form factor $\vert k \vert^{-1}$. 

The most obvious strategy for a better estimate is to apply the well-known commutator method of Lieb and Yamazaki \cite{liebyamazaki}, which utilizes the particle momentum in order to obtain a better ultraviolet behavior of the phonon field. More precisely, one writes the exponential $e^{ik x_1}$ in terms of a commutator with the gradient $i\nabla_{x_1}$,
\begin{align}
e^{ik x_1} = \big( 1 + \abs{k}^{2} \big)^{-1} \left( e^{ik x_1} - k \cdot \big [ i \nabla_{x_1}, e^{ik x_1} \big] \right),
\end{align}
which suggests a better decay for large $\vert k \vert$ provided that one has some control of the regularity of the particle with coordinate $x_1$. Using this identity together with $p_1^{\psi_t} + q_1^{\psi_t} =1$,  we find by a straightforward computation that \eqref{eq: time derivative beta-b  4} can be written as
\begin{align}
- 2 \Im \int_{\abs{k} \leq K} 
d^3k \, \abs{k}^{-1} \big( 1 &+ \abs{k}^{2} \big)^{-1}  \cdot \bigg( 
\scp{e^{-ikx_1} \left( N^{-1/2} a_k^* - \overline{\varphi_t(k)} \right) U_{K} \Psi_{N,t}}{
   q_1^{\psi_t} U_{K} \Psi_{N,t}} 
\nonumber \\
&\; -
\scp{e^{-ikx_1} \left( N^{-1/2} a_k^* - \overline{\varphi_t(k)} \right) k \cdot i \nabla_1 p_1^{\psi_t} U_{K} \Psi_{N,t}}{ q_1^{\psi_t} U_{K} \Psi_{N,t}} 
\nonumber \\
&\; -
\scp{e^{-ikx_1} k \cdot i \nabla_1 q_1^{\psi_t} U_{K} \Psi_{N,t}}{ \left( N^{-1/2} a_k - \varphi_t(k) \right) q_1^{\psi_t} U_{K} \Psi_{N,t}} 
\nonumber \\
&\; + 
\scp{e^{-ikx_1} \left( N^{-1/2} a_k^* - \overline{\varphi_t(k)} \right) U_{K} \Psi_{N,t}}{ k \cdot i \nabla_1 q_1^{\psi_t} U_{K} \Psi_{N,t}} 
\bigg).
\end{align}
With the aid of the Cauchy--Schwarz inequality and the canonical commutation relations this implies the bound
\begin{align}
\abs{\eqref{eq: time derivative beta-b  4}}
&\leq C\left( \norm{\psi_t}_{H^1} \sqrt{\beta^a_K(t)} + \norm{\nabla_1 q_1^{\psi_t} U_{K} \Psi_{N,t}} \right) \sqrt{\beta^b_K(t) + N^{-1}}.
\end{align}
Contrary to \eqref{eq: naive estimate}, there is no more divergence for large $K$. However, the above inequality contains the new term $\| \nabla_1  q_1^{\psi_t} U_{K} \Psi_{N,t} \| $.
Thus if we want to apply Gr\"onwall's inequality we would have to show that this term is  small compared to one or bounded by a constant times $\sqrt{\beta_{K}(t)}$.\footnote{We note that the quantity $\| \nabla_1  q_1^{\psi_t} U_{K} \Psi_{N,t} \|^2$ can be related to the Sobolev trace norm difference between the one-particle reduced density matrix and the condensate wave function (see \cite[Proof of Theorem 2.8]{mitrouskaspetratpickl} 
and \cite[Lemma 7.1]{leopoldpickl2}).}
It is not clear how to derive such a bound, however,  and hence, we are forced to estimate $\abs{\eqref{eq: time derivative beta-b  4}}$ in a different way. 

A possible solution to this problem is to use a combination of the estimates from \cite[Chapter VIII.1]{leopoldpickl} with an operator bound that is motivated by \cite[Lemma 10]{frankschlein} (see Section \ref{section: Estimates on the time derivative} for the detailed argument). In short, we use the symmetry of the wave function and an estimate that is similar in spirit  to the commutator method of Lieb and Yamazaki to obtain 
\begin{align}
\abs{\eqref{eq: time derivative beta-b  4}}
&\leq C \left( \beta^a_{K}(t) + \beta^b_{K}(t) + N^{-1} K + \norm{\nabla_2 q_1^{\psi_t} U_{K} \Psi_{N,t}}^2 \right) .
\end{align}
As shown in Lemma \ref{lemma: variance of the energy bound}, the new quantity $\| \nabla_2 q_1^{\psi_t} U_{K} \Psi_{N,t} \| ^2$ can be bounded by $\beta_{K}(t)$ and errors proportional to $N^{-1}K^{-1}$ and $N^{-2}K$.

\begin{proof}[Proof of Lemma \ref{lemma: variance of the energy bound}] Using the symmetry of $\Psi_{N}$ and $- \Delta_1 \geq 0$, we can bound
\begin{align}\label{eq: variance lemma first bound}
\norm{\nabla_2 q^\psi_1 U_K \Psi_{N}}^2   
& = (N-1)^{-1} \sum_{j=2}^N \scp{q_1^\psi U_K \Psi_{N}}{(-\Delta_j)q_1^\psi U_K \Psi_{N} }\nonumber \\[-1mm]
& \leq  \frac{2}{N} \sum_{j=1}^N \scp{q_1^\psi U_K \Psi_{N}}{(-\Delta_j)q_1^\psi U_K \Psi_{N} }.
\end{align}
With $-\sum_{j=1}^N \Delta_{j} \le H_N^0$ and  \eqref{eq: bound H^G vs H^0}, we thus have 
\begin{align}
\norm{\nabla_2 q_1^\psi U_K \Psi_{N}}^2 \le C  \beta^a_K(\Psi_N,\psi) +  4 N^{-1} \scp{q_1^\psi U_K \Psi_{N}}{  H_{N,K}^{\rm G}  q_1^\psi  U_K  \Psi_{N} } .
\end{align}
By using
$
q_1^\psi H_{N,K }^{\rm G} q_1^\psi = q_1^\psi H_{N,K }^{\rm G} - q_1^\psi H_{N,K}^{\rm{G}} p_1^\psi
$
and  recalling Definition \eqref{definition: Gross transformed Froehlich Hamiltonian}, we get
\begin{subequations}
\begin{align}
\norm{\nabla_2 q_1^\psi U_K \Psi_{N}}^2 
&\leq C  \Big(  \beta^a_K(\Psi_N,\psi) + N^{-1} \abs{ \scp{q_1^\psi U_K \Psi_{N}}{  H_{N,K}^{\rm G} U_K \Psi_{N} }  }   \label{LINE:Q:HG} \\[1.5mm]
&\quad +  N^{-1} \abs{\scp{U_K \Psi_{N}}{q_1^\psi \left( - \Delta_1 \right) p_1^\psi  U_K \Psi_{N}}}\label{LINE:Q:DELTA:P} \\[1.5mm]
&\quad +  N^{-3/2} \abs{\scp{U_K \Psi_{N}}{q_1^\psi  \Phi(G_{K,x_1}) p_1^\psi U_K \Psi_{N}} }  \label{LINE:Q:PHI:P} \\[1.5mm]
&\quad +  N^{-1} \abs{\scp{U_K \Psi_{N}}{q_1^\psi A_{K ,x_1} p_1^\psi U_K \Psi_{N}} }  \label{LINE:Q:V:P}  \\[1.5mm]
&\quad + \abs{\scp{U_K \Psi_{N}}{ q_1^\psi  V_K(x_1-x_2)  p_1^\psi U_K \Psi_{N}} } \Big) .  \label{LINE:Q:C:P} 
\end{align}
\end{subequations}
In the following, we shall bound the various terms on the right hand side. \\ \\
\textbf{Line \eqref{LINE:Q:HG}}. In the second summand in this line, we add and subtract $E_N^{\rm F}(\Psi_N) \beta^a_K(\Psi_N,\varphi)$,
 to obtain 
\begin{align}
&N^{-1}\abs{ \scp{q_1^\psi U_K \Psi_{N}}{  H_{N,K}^{\rm G} U_K \Psi_{N} } } 
\nonumber \\[1mm]
& \quad \quad \le  \abs{ \scp{q_1^\psi U_K \Psi_{N}}{ \left( N^{-1} H_{N,K}^{\rm G}   -  E_N^{\rm F}(\Psi_N) \right)    U_K \Psi_{N} } }  + E_N^{\rm F}(\Psi_N)  \beta^a_K(\Psi_N,\psi).
\end{align}
With the aid of  the Cauchy--Schwarz inequality and \eqref{REPRESENTATION:H:F}, we find
\begin{align}
\abs{\eqref{LINE:Q:HG}} \le C (1+ \vert E_N^{\rm F}(\Psi_N) \vert ) \left( \beta_K^a(\Psi_N,\psi)  + \beta^c(\Psi_N) \right) .
\end{align}
\textbf{Line \eqref{LINE:Q:DELTA:P}.} One readily obtains
\begin{align}
\eqref{LINE:Q:DELTA:P} & \leq  N^{-1} \norm{\left( - \Delta_1 \right) p_1^\psi U_K \Psi_{N} } \norm{q_1^\psi U_K \Psi_{N}} \leq \frac 12 \norm{\psi}_{H^2} \left( \beta^a_K(\Psi_N,\psi) + N^{-2} \right) .
\end{align}
\textbf{Line \eqref{LINE:Q:PHI:P}.} Using \eqref{eq: bound for Phi K}, we find
\begin{align}
\eqref{LINE:Q:PHI:P} & \leq   N^{-3/2} \norm{q_1^\psi U_K \Psi_{N}} \norm{ \Phi \left( G_{K,x_1} \right) p_1^\psi U_K \Psi_{N}}
\nonumber \\[1mm]
&  \leq  C N^{-3/2} K^{1/2} \sqrt{\beta^a_K(\Psi_N,\psi)\,  \scp{U_K \Psi_{N}}{(  \mathcal{N} +1)  U_K \Psi_{N}}} 
\end{align}
and hence, using  \eqref{eq: bound H^G vs H^0},  \eqref{REPRESENTATION:H:F} and $\mathcal N \le H_N^0$,  we have
\begin{align}
\eqref{LINE:Q:PHI:P} & \leq  C  (1 + \vert E_N^{\rm F}(\Psi_N) \vert )  \left( KN^{-2} + \beta^a_K(\Psi_N,\psi) \right) .
\end{align}
\textbf{Line \eqref{LINE:Q:V:P}.} We recall the definition of $A_{K,x}$ in \eqref{eq: Definition of V-K-N} and estimate the term with $a^*(kB_{K,x}) \cdot \nabla_x$ by
\allowdisplaybreaks
\begin{align}
& \abs{ \SCP{U_K \Psi_{N}}{q_1^\psi N^{-1/2} a^*(kB_{K,x_1}) \cdot  \nabla_1  p_1^\psi U_K  \Psi_{N}} } 
\nonumber  \\[2mm]
&\quad  \le  \int d^3k \ \abs{ \SCP{q_1^\psi \Big( N^{-1/2} a_k - \varphi(k) + \varphi(k) \Big) U_K \Psi_{N}}{  k B_{K,x_1}(k)  \cdot \nabla_1 p_1^\psi  U_K \Psi_{N}} }  \nonumber \\[1mm]
&\quad \le   \norm{\nabla_1 p_1^\psi  U_K \Psi_{N}} \int d^3k \left( \abs{ k B_{K,x} (k)}  \norm {\Big( N^{-1/2} a_k - \varphi(k) \Big) U_K \Psi_{N}}  +   \abs{k B_{K,x} (k)  \varphi(k)} \right) \nonumber\\[1mm]
&\quad \le  \norm{\psi}_{H^1 }   \left( \norm{ \abs{\cdot} B_{K,x} } \, \sqrt{\beta_K^b(\Psi_N,\varphi) } +  \norm{ B_{K,x} } \, \norm{\abs{\cdot}\varphi }  \right) \nonumber\\
&\quad \le C \norm{\psi}_{H^1 } \left(  K^{-1/2}  \sqrt{ \beta^b_K(\Psi_N,\varphi)}  + K^{-3/2}   \norm{\varphi}_{L_1^2}   \right).
 \label{eq: nabla a term in AK}
\end{align}
Using $q_1^\psi = 1-p_1^\psi$ and $-\Delta_1\le N^{-1} H_N^0$ as quadratic forms on $L^2_s(\mathbb R^{3N}) \otimes \mathcal{F}_s $, together with \eqref{eq: bound H^G vs H^0}, we find
\begin{align}\label{eq: bound for nabla_1 q_1}
\norm{\nabla_1 q_1^\psi U_K \Psi_{N}}^2 \le 2 \left( \norm{\nabla_1 p_1^\psi U_K \Psi_{N}}^2 + \norm{\nabla_1 U_K \Psi_{N}}^2\right ) \le  C \left( \norm{\psi}_{H^1 }^2 +  \vert E_N^{\rm F} (\Psi_N) \vert + 1\right).
\end{align} 
With this at hand, we can proceed for the term with $\nabla_x \cdot a(kB_{K,x})$ similarly as in \eqref{eq: nabla a term in AK}, with the result that
 \begin{align}
 &   \abs{  \SCP{U_K \Psi_{N} }{q_1^\psi 
\nabla_1 \cdot  N^{-1/2} a(kB_{K,x_1})  p_1^\psi U_K \Psi_{N} } }
\nonumber  \\[1mm]
& \le  \norm{\nabla_1 q_1^\psi U_K \Psi_N} \int d^3k \left(  \abs{kB_{K,x}(k)}    \norm{ \Big( N^{-1/2}a_k - {\varphi(k)} \Big) U_K \Psi_{N} }  + \abs{k B_{K,x}(k) \varphi(k) }  \right)
\nonumber \\[1mm]
& \le  C \sqrt{\norm{\psi}_{H^1 }^2 +  \vert E_N^{\rm F} (\Psi_N)  \vert } \left(  K^{-1/2}  \sqrt{ \beta^b_K(\Psi_N,\varphi)}  + K^{-3/2}   \norm{\varphi}_{L_1^2}   \right).
\end{align}
Next, we estimate the term  in line \eqref{LINE:Q:V:P} with $\Phi(kB_{K,x})^2$,
\begin{align}
\abs{  \SCP{U_K \Psi_{N} }{q_1^\psi N^{-1}   \Phi( kB_{K,x_1} )^2 p_1^\psi U_K \Psi_{N} } }   &  \le N^{-1} \norm{ \Phi( kB_{K,x_1} ) q_1^\psi U_K\Psi_{N}} \,  \norm{ \Phi( kB_{K,x_1} ) p_1^\psi U_K \Psi_{N}}  
\nonumber \\[2mm]
& \le C N^{-1}   \norm{\abs{\cdot} B_{K ,x}}^2 \scp{   U_K \Psi_{N}}{(\mathcal N+1) U_K \Psi_N } \nonumber \\[2.5mm]
& \le C K^{-1} N^{-1} \Big( \scp{U_K \Psi_{N}}{  ( H_{N,K }^{\rm G} +C N )   U_K \Psi_{N} } + 1 \Big)
\nonumber \\[2mm]
& \le C \left( \vert E_N^{\rm F}(\Psi_N) \vert + 1 \right)  K^{-1}  .
\end{align}
By summing  up the terms, we obtain the bound
\begin{align}
&  \abs{\scp{U_K \Psi_N}{ q_1^\psi A_{K,x_1} p_1^\psi U_K \Psi_N} }  
\nonumber \\[1mm]
&\hspace{1cm} \le C \big(\norm{\psi}_{H^1}^2 + \norm{\varphi}_{L_1^2}^2 + \vert E_N^{\rm F}(\Psi_N) \vert \big)    \left( K^{-1} +  \beta_K^b(\Psi_N,\varphi) \right) .\label{eq: bound for pAq}
\end{align}
%
\textbf{Line \eqref{LINE:Q:C:P}.} Using \eqref{eq: bound for V_K},
\begin{align}
\abs{ \scp{U_K\Psi_{N}}{ q_1^\psi V_K(x_1-x_2)  p_1^\psi U_K \Psi_{N}} } & \le \sqrt{ \beta^a_K(\Psi_N, \psi) } \norm{V_K(x_1-x_2)  p_1^\psi U_K \Psi_{N}} \nonumber \\[2mm]
& \le C\big( \beta^a_K(\Psi_N,\psi) +  N^{-2}   K^{-2}\big) .\label{eq: bound for V(x1-x2)}
\end{align}
This completes the proof of the lemma.
\end{proof}

\section{Proof of Lemma \ref{lemma: Gronwall bound for beta}  (time derivative of $\beta_K(t)$)}
\label{section: Estimates on the time derivative}
We first observe that 
\begin{align}
\label{eq: time evolution Gross transformed state}
\frac{d}{dt} U_{K} \Psi_{N,t} 
&= - i U_{K} H^{\rm F}_N \Psi_{N,t}
= - i U_{K} H^{\rm F}_N  U_{K}^* U_{K}   \Psi_{N,t}
= - i H_{N,K}^{\rm G} U_{K} \Psi_{N,t},
\end{align}
from which it follows readily that $\frac{d}{dt}\beta^c(t) = 0$. The time-derivatives of $\beta^a_K(t)$ and $\beta^b_K(t)$ are estimated in the next two sections. Throughout both sections, we use the abbreviation $E_N^{\rm F}(\Psi_{N}) =N^{-1}  \scp{\Psi_N}{H_N^{\rm F}\Psi_N}   $.



\allowdisplaybreaks

 \subsection{Time derivative of $\beta^a_K(t)$}
 
For $q_1^t= q_1^{\psi_t} = 1 - p_1^{\psi_t}$, we have
\begin{align}
\frac{d}{dt} q_1^{t} = - \frac{d}{dt}p_1^{t} = i \big[-\Delta_1 + \Phi(x_1,t) , p_1^{t}\big] = - i \big[-\Delta_1 + \Phi(x_1 ,t) , q_1^{t}\big] .
\end{align}
Using this together with \eqref{eq: time evolution Gross transformed state}, we compute
\begin{align}
\frac{d}{dt}\beta^a_K (t) & = \frac{d}{dt} \SCP{U_K \Psi_{N,t}}{q_1^t U_K\Psi_{N,t}}\nonumber \\[1.5mm]
& = - 2\Im \scp{U_K \Psi_{N,t}}{ \left( H_{N,K}^{\rm G}  + \Delta_1 -  \Phi(x_1,t) \right)  q_1^t  U_K \Psi_{N,t}}\nonumber \\[1.5mm]
& = - 2\Im \scp{U_K \Psi_{N,t}}{ p_1^t\left( H_{N,K}^{\rm G}  + \Delta_1 -  \Phi(x_1,t) \right)  q_1^t  U_K \Psi_{N,t}}  , \label{VI.3}
\end{align}
where we inserted $1= p_1^t+q_1^t$ and used that the term with $q_1^t$ on both sides is real. Recall Definition \ref{definition: Gross transformed Froehlich Hamiltonian}. Using $\Phi(x_1,t) = \Phi_{K}(x_1,t) + \Phi_{\ge K}(x_1,t)$, $p_1^tq_1^t = 0$ and the symmetry of $\Psi_N$, we can rewrite \eqref{VI.3} as
\begin{subequations}
\begin{align}
\frac{d}{dt}\beta^a_K (t) & =  - 2\Im \scp{U_K \Psi_{N,t}}{ p_1^t\left( N^{-1/2} \Phi(G_{K,x_1} ) - \Phi_{K}(x_1,t)  \right)  q_1^t U_K \Psi_{N,t}}   \label{EST:TD:BETA:A:a}\\[2mm]
& \quad + 2\Im  \scp{U_K \Psi_{N,t}}{ p_1^t \Phi_{\ge K}(x_1,t)  q_1^t U_K \Psi_{N,t}} \label{EST:TD:BETA:A:b} \\[3.5mm]
& \quad - 2\Im \scp{U_K \Psi_{N,t}}{ p_1^t  A_{K,x_1} q_1^tU_K \Psi_{N,t}}\label{EST:TD:BETA:A:c}\\[3.5mm]
& \quad - 2\Im \scp{U_K \Psi_{N,t}}{ p_1^t (N-1) V_K(x_1-x_2) q_1^t  U_K \Psi_{N,t}} .\label{EST:TD:BETA:A:d}
\end{align}
\end{subequations}
The various terms will be bounded as follows. \\ \\
\noindent \textbf{Line \eqref{EST:TD:BETA:A:a}.}
We bound 
\begin{align}
\abs{\eqref{EST:TD:BETA:A:a} } & \leq   2  \abs{ \scp{\int_{\vert k \vert \le K } d^3k \, \vert k \vert^{-1} e^{- ikx_1}   \left(  N^{-1/2} a^*_k - \overline{\varphi_t(k)} \right)  p_1^t U_K \Psi_{N,t}}{ q_1^t U_K \Psi_{N,t}}  }
\nonumber \\
& + 2 \abs{ \scp{ \int_{\vert k \vert \le K  }  d^3k \, \abs{k}^{-1}    e^{ikx_1}   \left( N^{-1/2} a_k - \varphi_t(k) \right)  p_1^t U_K\Psi_{N,t}}{ q_1^t U_K \Psi_{N,t}} }
\nonumber \\
& \le  2 \beta^a_K (\Psi_{N,t},\psi_t)
+ \norm{\int_{\vert k \vert \le K } d^3k 
\, \abs{k}^{-1}   e^{-ikx_1} \left( N^{-1/2} a_k^* - \overline{\varphi_t(k)} \right)  p_1^t U_K \Psi_{N,t}}^2 
\nonumber \\
&\quad +  \norm{ \int_{\vert k \vert \le K } d^3k \, \abs{k}^{-1}   e^{i kx_1} \left( N^{-1/2} a_k  -  \varphi_t(k) \right)  p_1^t U_K \Psi_{N,t} }^2.\label{EST:TD:BETA:A:a:SECOND:LINE}
\end{align}
For the second summand, we use
\begin{align}
& \norm{\int_{\vert k \vert \le K }   d^3k \, \abs{k}^{-1}  e^{-ikx_1} \left( N^{-1/2} a_k^* - \overline{\varphi_t(k)} \right)  p_1^t U_K \Psi_{N,t}}^2  \nonumber \\
& \hspace{2.5cm} \le \frac{C K}{N}  +  \norm{ \int_{\vert k \vert \le K } d^3 k \,  \vert k \vert^{-1}   e^{ikx_1} \left( N^{-1/2} a_k  -  \varphi_t(k) \right)  p_1^t U_K  \Psi_{N,t} }^2 ,
\end{align}
which follows directly from the canonical commutation relations. By the shift property \eqref{eq: Weyl operators shift property}, the last summand in \eqref{EST:TD:BETA:A:a:SECOND:LINE} can be written as
\begin{align}
N^{-1} \norm{ a \left( G_{K,x_1} \right)   p_1^t W^*(\sqrt{N} \varphi_t) U_K \Psi_{N,t} }^2 .
\end{align}
In order to estimate this expression, we use \cite[Lemma 10]{frankschlein} which implies the bound
\begin{align}
a^*\left( G_{K,x_1} \right) a \left( G_{K,x_1} \right)
&\leq C_{G}  \, \left( 1 - \Delta_1 \right) \mathcal{N} ,\label{eq: bound for a*a analogous to Fran and Schlein}
\end{align}
with 
\begin{align}\label{eq: integral rearrangement}
C_G = \sup_{p\in \mathbb R^3} \int_{\mathbb R^3} \frac{d^3k}{k^2(1+(p+k)^2)}=\int_{\mathbb R^3} \frac{d^3k}{k^2(1+k)^2} <\infty.
\end{align}
The latter is obtained via a rearrangement inequality. In combination, we thus have 
\begin{align}
\abs{
\eqref{EST:TD:BETA:A:a} }
&\leq C \left(  \beta^a_K (t) + K N^{-1} \right)  +  C_G  \, N^{-1}  \norm{\left( 1 - \Delta_1 \right)^{1/2} p_1^t \mathcal{N}^{1/2}  W^* ( \sqrt{N} \varphi_t ) U_K \Psi_{N,t}}^2
\nonumber \\[1mm]
&\leq C \left(  \beta^a_K(t ) + K N^{-1} \right)  +  C_G  \, \norm{\psi_t}^2_{H^1} N^{-1}   \norm{ \mathcal{N}^{1/2}  W^* ( \sqrt{N} \varphi_t ) U_K \Psi_{N,t}}^2
\nonumber \\[1mm]
&\leq C \norm{\psi_t}^2_{H^1}  \left(  \beta^a_K(t ) +  \beta^b_K(t ) + K N^{-1}  \right).
\end{align}
\noindent \textbf{Line \eqref{EST:TD:BETA:A:b}.} This term  can be estimated as
\begin{align}
\abs{\eqref{EST:TD:BETA:A:b}}  &\leq C
\sup_x \abs{\Phi_{\ge K}(x,t)} \norm{q_1^t U_K \Psi_{N,t}}
\nonumber \\[1.8mm]
&\leq C \sqrt{\beta^a_K (t)} \int_{\abs{k} \geq K} d^3k  \abs{k}^{-1}  \abs{\varphi_t} 
\nonumber \\[-0.8mm]
&\leq C \sqrt{\beta^a_K (t)} \norm{\varphi_t}_{L_1^2}
\left(\int_{\abs{k} \geq K} d^3k  \abs{k}^{-4} \right)^{1/2} \leq C \norm{\varphi_t}_{L_1^2} \sqrt{\beta^a_K (t)} 
K^{-1/2}.
\end{align} 
\noindent \textbf{Line \eqref{EST:TD:BETA:A:c}.} It follows from \eqref{eq: bound for pAq} that 
\begin{align}
\abs{\eqref{EST:TD:BETA:A:c} } \le C \Big( \norm{\psi_t}_{H^1}^2 + \norm{\varphi_t}_{L_1^2}^2 + \vert E_N^{\rm F}(\Psi_{N,t}) \vert \Big)    \left( K^{-1} +  \beta_K^b(t) \right).
\end{align}
\textbf{Line \eqref{EST:TD:BETA:A:d}.} In analogy to  \eqref{eq: bound for V(x1-x2)}  one finds
\begin{align}
\abs{\eqref{EST:TD:BETA:A:d}} \le C   \big( \beta^a_K(t ) + K^{-2}\big) .
\end{align}
In combination, we have thus shown that 
\begin{align}\label{eq: summary time derivative beta a}
\abs{\frac{d}{dt}\beta^a_K (t)} \le C \left( \norm{\psi_t}_{H^1}^2 + \norm{\varphi_t}_{L_1^2}^2 + \vert E^{\rm F}_N(\Psi_{N,t})  \vert \right) \left( \beta_K^a(t) + \beta^b_K(t) + \frac{K}{N} + K^{-1} \right).
\end{align}

\subsection{Time derivative of $\beta^b_K(t)$}
\label{subsection time derivative of beta-b}

From \eqref{eq: time evolution Gross transformed state} we get 
\begin{subequations}
\begin{align}
\label{eq: time derivative of beta-b product rule a}
& \hspace{-0.25cm} \frac{d}{dt} \beta_K^b(\Psi_{N,t},\varphi_t) = \nonumber \\
& =   \int d^3k \, \frac{d}{dt} \scp{\left( N^{-1/2} a_k - \varphi_t(k)\right) U_K \Psi_{N,t}}{\left( N^{-1/2} a_k - \varphi_t(k) \right) U_K \Psi_{N,t}}
\nonumber \\
&= -  2 \Re  \int d^3k \,  \scp{ \left( N^{-1/2} a_k^* - \overline{\varphi_t(k)} \right) \left( N^{-1/2} a_k - \varphi_t(k) \right) U_K \Psi_{N,t}}{ i  H_{N,K}^{\rm G}  U_K \Psi_{N,t}}
\\
\label{eq: time derivative of beta-b product rule b}
&\quad
- 2 \Re \int d^3k \,  \scp{\left( \partial_t \varphi_t(k)  \right) U_K \Psi_{N,t}}{\left( N^{-1/2} a_k - \varphi_t(k) \right) U_K \Psi_{N,t}} ,
\end{align}
\end{subequations}
which is a slightly formal computation, since the use of the product rule of differentiation is not completely obvious here. We clarify the difficulty and justify the above identity in detail in Appendix \ref{section: rigorous derivation of the time derivative}. Next, we write the first line in terms of the commutator as
\begin{align}
\label{eq: use of commutator}
\eqref{eq: time derivative of beta-b product rule a} 
& = - i  \int d^3k \,   \scp{  \big[ H_{N,K}^{\rm G} , \big( N^{-1/2} a_k^* - \overline{\varphi_t(k)} \big) \big(  N^{-1/2} a_k - \varphi_t(k) \big)  \big]   U_K \Psi_{N,t}}{   U_K \Psi_{N,t}}  .
\end{align}
Let us remark that the right hand side is well defined since the commutator of $H_{N,K}^{\rm G}$ and
\begin{align}
L_N = N^{-1} W \big( \sqrt{N} \varphi_t \big) \mathcal{N} W^* \big( \sqrt{N} \varphi_t \big) 
=  \int d^3k \, \big( N^{-1/2} a_k^* - \overline{\varphi_t(k)} \big) \big(  N^{-1/2} a_k - \varphi_t(k) \big)
\end{align}
defines a bounded operator from $\mathcal{D} \big( H_{N,K}^{ \rm G} \big) = \mathcal{D} \big( H_N^{0} \big)$ to $\mathcal{H}^{(N)}$. This fact is a direct consequence of the B.L.T. theorem because $\big[ H_{N,K}^{\rm G}, L_N \big]$ is a bounded operator from $\mathcal{D} \big(  (H_N^0 )^2 \big)$ to $\mathcal{H}^{(N)}$ and the estimate
\begin{align}
\norm{\big[ H_{N,K}^{\rm G}, L_N \big] \Psi_N} \leq C_{N,K} \norm{\big(1 + H_{N}^0 \big) \Psi_N}
\end{align}
holds for all $\Psi_N \in \mathcal{D} \big( (H_N^0)^2 \big)$ and some constant $C_{N,K}$. The latter is straightforward to verify with the aid of
\begin{align}
\label{eq: commutator H a}
\left[ H_{N,K}^{\rm G} , a_k   \right] 
&= 2 N^{-1/2} \sum_{j=1}^N B_{K,x_j}(k)
k \cdot \left( i \nabla_j - N^{-1/2} \Phi \big( k B_{K,x_j} \big)  \right)
\nonumber \\
&\quad  - a_k 
- N^{-1/2} \sum_{j=1}^N \abs{k}^{-1} \id_{\abs{k} \leq K}(k) e^{-ikx_j}
\end{align}
and the basic estimates from Section \ref{sec: Notation and basic estimates}.

Hence, we can proceed by using \eqref{eq: use of commutator} and \eqref{eq: commutator H a} together with the Landau--Pekar equations \eqref{eq: Landau Pekar equations} and the symmetry of the many-body wave function in order to obtain
\small
\begin{align}
\label{eq: derivative of beta-b link}
&\frac{d}{dt}  \beta^b_K (\Psi_{N,t}, \varphi_t)
\nonumber \\
&\quad= - 2 \int_{\abs{k} \leq K} d^3k \, \abs{k}^{-1}  \Im \scp{e^{-ikx_1} U_K \Psi_{N,t}}{\left( N^{-1/2} a_k - \varphi_t(k)  \right) U_K \Psi_{N,t}}
\nonumber \\
&\qquad + 2 \int d^3k \, \abs{k}^{-1}    \Im \scp{ \int d^3y \, e^{-iky} \abs{\psi_t(y)}^2 U_K \Psi_{N,t}}{\left( N^{-1/2} a_k - \varphi_t(k)  \right) U_K \Psi_{N,t}}
\nonumber \\
&\qquad + 4 \int d^3k \,   \Im \scp{ B_{K,x_1}(k) k \cdot
\left( i \nabla_1 - N^{-1/2} \Phi(k B_{K,x_1})  \right) U_K \Psi_{N,t}}{\left( N^{-1/2} a_k - \varphi_t(k)  \right) U_K \Psi_{N,t}}  .
\end{align}
\normalsize Finally, inserting the identity $e^{ikx_1} = p_1^t e^{ikx_1} p_1^t  + q_1^t e^{ikx_1} p_1^t + e^{ikx_1} q_1^t$ leads to
\footnotesize
\begin{subequations}
\begin{align}
& \frac{d}{dt}  \beta^b_K(\Psi_{N,t}, \varphi_t) =
\nonumber \\
\label{eq: time derivative beta-b  1}
&\quad \hspace{-0.4cm}  = - 2 \int_{\abs{k} \leq K} d^3k \, \abs{k}^{-1}  \Im \scp{ U_K \Psi_{N,t}}{ \Big( p_1^t e^{ikx_1} p_1^t  - \int d^3y \, e^{iky} \abs{\psi_t(y)}^2 \Big) \left( N^{-1/2} a_k - \varphi_t(k)  \right) U_K \Psi_{N,t}}
\\
\label{eq: time derivative beta-b  2}
&\qquad \hspace{-0.4cm}  +
2 \int_{\abs{k} \geq K} d^3k \, \abs{k}^{-1} 
\Im \scp{U_K \Psi_{N,t}}{\int d^3y \, e^{iky} \abs{\psi_t(y)}^2 \left( N^{-1/2} a_k - \varphi_t(k) \right) U_K \Psi_{N,t}}
\\
\label{eq: time derivative beta-b  3}
&\qquad \hspace{-0.4cm}  - 2 \int_{\abs{k} \leq K} d^3k \,   \abs{k}^{-1} 
\Im \scp{U_K \Psi_{N,t}}{q_1^t e^{ikx_1} p_1^t \left( N^{-1/2} a_k - \varphi_t(k) \right) U_K \Psi_{N,t}}
\\
\label{eq: time derivative beta-b  4}
&\qquad\hspace{-0.4cm}  - 2 \int_{\abs{k} \leq K} 
d^3k \, \abs{k}^{-1} \Im \scp{U_K \Psi_{N,t}}{e^{ikx_1} q_1^t \left( N^{-1/2} a_k - \varphi_t(k) \right) U_K \Psi_{N,t}}
\\
\label{eq: time derivative beta-b  5}
&\qquad \hspace{-0.4cm}  +
4 \int d^3k \,  \Im
\scp{B_{K,x_1}(k) k \cdot \left( i \nabla_1 - N^{-1/2} \Phi \big( k B_{K,x_1} \big) \right) U_K \Psi_{N,t}}{\left( N^{-1/2} a_k - \varphi_t(k) \right) U_K \Psi_{N,t}}.
\end{align}
\end{subequations}
\normalsize
In the following, we estimate each term on the r.h.s. separately.\\
\\
\noindent \textbf{Line \eqref{eq: time derivative beta-b  1}.} The first term is the most important because it is the one where the particle  density cancels the source term of the Landau--Pekar equations.
We first note that
\begin{align}
p_1^t e^{ikx_1} p_1^t - \int d^3y \, e^{iky} \abs{\psi_t(y)}^2
= \left( p_1^t - 1 \right)  \int d^3y \, e^{iky} \abs{\psi_t(y)}^2
= - q_1^t  \scp{\psi_t}{ e^{ ik \cdot} \psi_t}.
\end{align}
We then use
$
e^{ikx} = \frac{1 - i  (k \cdot \nabla_x)}{1 + k^2} e^{ikx}
$
and  integrate by parts to obtain the bound
\begin{align}
\label{eq: Lieb Yamazaki for expectation value of e-ikx}
\abs{\scp{\psi_t}{e^{ik \cdot} \psi_t}}
& \leq  \abs{\scp{\frac{1 - i ( k \cdot \nabla) }{1 + k^2} \psi_t}{e^{ik \, \cdot \,} \psi_t} }
+ \abs{\scp{\psi_t}{e^{ik \, \cdot \,} \frac{ i  ( k \cdot \nabla)}{1 + k^2} \psi_t} }
\nonumber \\[1mm]
&  \leq 2 (1 + k^2)^{-1} \left( 1 + \abs{k} \norm{\nabla \psi_t} \right) 
\leq 2 \norm{\psi_t}_{H^1} (1 + k^2)^{-1} \left( 1 + \abs{k} \right) .
\end{align} 
Hence,
\begin{align}
\abs{ \eqref{eq: time derivative beta-b  1} }
&\leq 4 \norm{\psi_t}_{H^1} \int d^3k \, \frac{(1+ \abs{k})}{\abs{k} ( 1 + k^2)}
\abs{\scp{q_1^t U_K  \Psi_{N,t}}{\left( N^{-1/2} a_k - \varphi_t(k) \right) U_K \Psi_{N,t}}}
\nonumber \\
&\leq  4 \norm{\psi_t}_{H^1} \sqrt{\beta^a_K(t)} \Big( \int d^3k \, \frac{(1+ \abs{k})^2}{\abs{k}^2 (1+ k^2)^2} \Big)^{1/2}
\Big( \int d^3k \,  \norm{\left( N^{-1/2} a_k - \varphi_t(k) \right) U_K \Psi_{N,t}}^2 \Big)^{1/2}
\nonumber \\[1mm]
&\leq C \norm{\psi_t}_{H^1} \beta_K(t) .
\end{align}

\noindent \textbf{Line \eqref{eq: time derivative beta-b  2}.} We again use \eqref{eq: Lieb Yamazaki for expectation value of e-ikx}
and estimate
\begin{align}
\eqref{eq: time derivative beta-b  2}
&\leq 2 \int_{\abs{k} \geq K} d^3k \, \abs{k}^{-1} 
\abs{\scp{\psi_t}{e^{ik \cdot} \psi_t}} \abs{\scp{U_K \Psi_{N,t}}{\left( N^{-1/2} a_k - \varphi_t(k) \right) U_K \Psi_{N,t}}}
\nonumber \\
&\leq C \norm{\psi_t}_{H^1} \int_{\abs{k} \geq K} d^3k \,   \frac{(1+ \abs{k})}{\abs{k} ( 1 + k^2)} 
 \norm{\left( N^{-1/2} a_k - \varphi_t(k) \right) U_K \Psi_{N,t}}
\nonumber \\
&\leq C \norm{\psi_t}_{H^1}  \left( \beta_K^b(t) + 
\int_{\abs{k} \geq K} d^3k \,   \frac{(1+ \abs{k})^2}{\abs{k}^2 ( 1 + k^2)^2}  \right)
\nonumber \\
&\leq C \norm{\psi_t}_{H^1}  \left( \beta_K^b(t) +  K^{-1} \right) .
\end{align}

\noindent \textbf{Line \eqref{eq: time derivative beta-b  3}.} Writing \eqref{eq: time derivative beta-b  3} as
\begin{align}  
2 \int_{\abs{k} \leq K} d^3k \, \abs{k}^{-1} \Im \scp{e^{ikx_1} \left( N^{-1/2} a_k - \varphi_t(k) \right) p_1^t U_K \Psi_{N,t}}{q_1^t U_K \Psi_{N,t}}
\end{align}
shows that this is exactly the same expression as the second line in \eqref{EST:TD:BETA:A:a:SECOND:LINE}.
We consequently have 
\begin{align}
\abs{\eqref{eq: time derivative beta-b  3}} 
&\leq C \norm{\psi_t}_{H^1}^2 \left( \beta^a_K (t) + \beta^b_K (t) \right).
\end{align}
\noindent \textbf{Line \eqref{eq: time derivative beta-b  4}.} To find a suitable bound for \eqref{eq: time derivative beta-b  4} is the most difficult step in the proof.
We start by estimating 
\allowdisplaybreaks
\begin{align}
\abs{ \eqref{eq: time derivative beta-b  4} }
&\leq
2 \int_{\abs{k} \leq K} 
d^3k \, \abs{k}^{-1} \Big| \scp{U_K \Psi_{N,t}}{e^{ikx_1} q_1^t \left( N^{-1/2} a_k - \varphi_t(k) \right) U_K \Psi_{N,t}} \Big|
\nonumber \\
& = 2 \int_{\abs{k} \leq K} d^3k \,  \abs{k}^{-1} 
\Big| \scp{U_K \Psi_{N,t}}{N^{-1} \sum_{j=1}^N e^{ikx_j} q_j^t \left( N^{-1/2} a_k - \varphi_t(k) \right) U_K \Psi_{N,t}} \Big|
\nonumber \\
&\leq 2 \int_{\abs{k} \leq K} d^3k \, \abs{k}^{-1} 
\Big\| N^{-1} \sum_{j=1}^N q_j^t e^{-ikx_j} U_K \Psi_{N,t} \Big\|
\norm{\left( N^{-1/2} a_k - \varphi_t(k) \right) U_K \Psi_{N,t}}
\nonumber \\
&\leq  \beta^b_K(t)   + \int_{\abs{k} \leq K} d^3k \, \abs{k}^{-2} 
\Big\| N^{-1} \sum_{j=1}^N q_j^t e^{-ikx_j} U_K \Psi_{N,t} \Big\|^2 .
\end{align}
The last term is bounded by 
\begin{align}
&\int_{\abs{k} \leq K}  d^3k \, \abs{k}^{-2} 
\Big\| N^{-1} \sum_{j=1}^N q_j^t e^{-ikx_j} U_K  \Psi_{N,t} \Big\|^2
\nonumber \\
&\quad \le  4 \pi N^{-1} K  +  \int_{\abs{k} \leq K} d^3k \, \abs{k}^{-2} 
\abs{ \Re \, \scp{q_2^t e^{-ikx_2} U_K \Psi_{N,t}}{q_1^t e^{- ik x_1} U_K  \Psi_{N,t}}  } .
\end{align}
With
 $O_{1,2} = \left( 1- \Delta_2 \right)^{-1/2} e^{ikx_2} \left( 1 - \Delta_1 \right)^{1/2} q_2^t $ one
has
\begin{align}
e^{ ikx_2} q_2^t q_1^t e^{- ikx_1} 
+ e^{ ikx_1} q_1^t q_2^t e^{- ikx_2} 
&= O_{2,1}^* O_{1,2}
+ O_{1,2}^* O_{2,1}
\leq O_{1,2}^* O_{1,2} + O_{2,1}^* O_{2,1}.
\end{align}
Thus, using the symmetry of the wave function and  $e^{-ikx_2} ( 1 - \Delta_2 )^{-1} e^{ikx_2} = ( ( - i \nabla_2 +  k )^2 + 1 )^{-1}$, we obtain the bound 
\begin{align}
&\int_{\abs{k} \leq K} d^3k \, \abs{k}^{-2} 
\abs{ \Re \, \scp{q_2^t e^{-ikx_2} U_K \Psi_{N,t}}{q_1 e^{- ik x_1} U_K  \Psi_{N,t}}  }
\nonumber \\
&\quad \leq 
\int_{\abs{k} \leq K} d^3k \, \abs{k}^{-2} 
\scp{U_K \Psi_{N,t}}{q_2^t \left( 1 - \Delta_1 \right)^{1/2} e^{- ikx_2} \left( 1 - \Delta_2 \right)^{-1} e^{ ikx_2} \left( 1 - \Delta_1 \right)^{1/2} q_2^t U_K \Psi_{N,t}}
\nonumber \\
&\quad =
\scp{  \left( 1 - \Delta_1 \right)^{1/2} q_2^t U_K \Psi_{N,t}}{ \int_{\abs{k} \leq K} d^3k \, \abs{k}^{-2}  \left( \left( - i \nabla_2 + k \right)^2 + 1 \right)^{-1}  \left( 1 - \Delta_1 \right)^{1/2} q_2^t U_K \Psi_{N,t}}.
\end{align}
In combination with
\begin{align}
\norm{\int_{\abs{k} \leq K} d^3k \, \abs{k}^{-2}  \left( \left( - i \nabla_2 + k \right)^2 + 1 \right)^{-1}}_{\op}
&= \sup_{p\in \mathbb R^3} \int_{\abs{k} \leq K} d^3k \, \abs{k}^{-2}  \left( \left( p + k \right)^2 + 1 \right)^{-1}
 < \infty
\end{align}
(compare with \eqref{eq: integral rearrangement}) and Lemma \ref{lemma: variance of the energy bound} this gives 
\begin{align}
\abs{ \eqref{eq: time derivative beta-b  4} }
& \leq C \left( \beta^b_K(t)  +  N^{-1} K +
\norm{(1 - \Delta_2)^{1/2} q_1^t U_K \Psi_{N,t}}^2
\right)
\nonumber \\
&\leq C \left( \beta^a_K (t) + \beta^b_K (t) + N^{-1} K + \norm{\nabla_2 q_1^t U_K \Psi_{N,t}}^2 \right)
\nonumber \\[2mm]
&\leq C \left( \norm{\psi_t}_{H^2}^2 
+ \norm{\varphi_t}_{L_1^2}^2  + \vert E_N^{\rm{F}}(\Psi_{N,t}) \vert \right)
\left(  \beta_K(t) + N^{-1} K^{-1} + N^{-1} K \right) .
\end{align}
\noindent \textbf{Line \eqref{eq: time derivative beta-b  5}.} We have 
\begin{align}
\abs{\eqref{eq: time derivative beta-b  5}}
&\leq 4 \int d^3k \,  \abs{k}  \abs{B_{K,x}(k)}
\norm{\left( N^{-1/2} a_k - \varphi_t(k) \right) U_K \Psi_{N,t}}   
\nonumber \\
&\qquad  \times  \norm{ \left( i \nabla_1 - N^{-1/2} \Phi \big( k B_{K,x_1} \big) \right) U_K \Psi_{N,t}}
\nonumber \\[1mm]
&\leq C  \sqrt{\beta^b_K(t)} \norm{ \left( i \nabla_1 - N^{-1/2} \Phi \big( k B_{K,x_1} \big) \right) U_K \Psi_{N,t}} 
\| \abs{\cdot}  B_{K,x} \| .
\end{align}
By using \eqref{eq: bounds for G}, the symmetry of the wave function and \eqref{eq: bound H^G vs H^0}, we get
\begin{align}
\abs{\eqref{eq: time derivative beta-b  5}}
&\leq C \left( \norm{\nabla_1 U_K \Psi_{N,t}} + \norm{\abs{\cdot} B_{K,x_1}} \norm{N^{-1/2} \left( \mathcal{N} +1 \right) U_K \Psi_{N,t}} \right) \left( \beta^b_K(t) + K^{-1} \right)
\nonumber \\[1mm]
&\leq C \left( \scp{U_K \Psi_{N,t}}{N^{-1} H_N^0  U_K \Psi_{N,t}}^{1/2} + 1 \right)
\left( \beta^b_K(t) + K^{-1} \right)
\nonumber \\[1mm]
&\leq C \left(    \scp{U_K \Psi_{N,t}}{ ( N^{-1} H_{N,K}^{\rm G} +C)  U_K \Psi_{N,t}}^{1/2} + 1 \right) \left( \beta^b_K(t) + K^{-1} \right).
\end{align}
In total, we thus arrive at 
\begin{align}\label{eq: summary time derivative beta b}
\abs{\frac{d}{dt} \beta^b_K(\Psi_{N,t}, \varphi_t)} 
&\leq C \left( \norm{\psi_t}_{H^2}^2 
+ \norm{\varphi_t}_{L_1^2}^2  + \vert E_N^{\rm{F}}(\Psi_{N,t}) \vert \right)
\left(  \beta_K(t) + K^{-1} + \frac{K}{N} \right). 
\end{align}
\noindent \textbf{Conclusion:}\ We combine $\frac{d}{dt}\beta^c(t) = 0$, \eqref{eq: summary time derivative beta a} and \eqref{eq: summary time derivative beta b} with Proposition \eqref{prop: solution theory for LPeq} and $\vert E_N^{\rm F}(\Psi_{N,t}) \vert = \vert E_N^{\rm F}(\Psi_N) \vert \le E_0$ in order to obtain \eqref{BOUND:BETA:TIME:DERIVATIVE}.\hfill$\square$

\section{Remaining proofs\label{sec: remaining proofs}}

\subsection{Proof of Lemma \ref{LEMMA:CLOSENESS:GAMMA}}
\label{SEC:PROOF:LEMMA:CLOSENESS:GAMMA}

To show the  inequality \eqref{BOUND:CLOSENESS:GAMMA:1:0}, we use $\gamma^{(1,0)}_{U_K\Psi_N} = \gamma^{(1,0)}_{U_{K,x_1}\Psi_N}$ with $U_{K,x_1} = \exp(i N^{-1/2} \Pi(B_{K ,x_1}))$, which follows directly from \eqref{eq: product structure Gross transform} and the definition of the reduced density matrix.
Hence, 
\begin{align}
\text{Tr}_{L^2(\mathbb R^3)} \Big\vert \gamma^{(1,0)}_{\Psi_N} - \gamma^{(1,0)}_{U_K\Psi_N}  \Big\vert  \le 2 \norm{(U_{K,x_1}-1) \Psi_{N}}.
\end{align}
Using $\norm{(U_{K,x_1}-1) \Psi_{N}} = \norm{(U_{K,x_1}-1) U_K \Psi_{N}}$, we obtain \eqref{BOUND:CLOSENESS:GAMMA:1:0} from the bound
\begin{align}\label{BOUND:STRONG:CONV:GROSS:TRAFO}
\norm{ ( U_{K,x_1} -1)  U_K \Psi_N } \le 2   \norm{B_{K,x}} \, \norm{\Big(\frac{\mathcal N+1}{N} \Big)^{1/2} U_K \Psi_N }
\end{align}
together with $\mathcal N \le H_N^0$, \eqref{eq: bound H^G vs H^0} and \eqref{REPRESENTATION:H:F}. Inequality \eqref{BOUND:STRONG:CONV:GROSS:TRAFO} follows from the spectral calculus for self-adjoint operators, using $1-U_{K,x}=f(N^{-1/2}\Pi(B_{K,x}))$ with $f(s) = 1-\exp(i s)$ in combination with $\vert f(s) \vert \le \vert s \vert $.

Using the properties of the Weyl operator  (in particular \eqref{eq: Weyl operators product} together with \eqref{eq: product structure Gross transform}) and
\begin{align}
U_K\mathcal{N} U^*_K 
&= \mathcal{N} + N^{-1} \sum_{i,j=1}^N \scp{B_{K,x_i}}{B_{K,x_j}}
+ N^{-1/2} \sum_{j=1}^N \left( a  ( B_{K, x_j}  ) + a^*  ( B_{K, x_j}  ) \right),
\end{align}
we have
\begin{align} \nonumber
&N^{-1} \abs{  \scp{ W^*(\sqrt N \varphi) \Psi_{N} }{ \big( \mathcal N - U_K^* \mathcal N U_K \big)  W^* (\sqrt N \varphi)  \Psi_N } } \\[1.5mm]
&\quad =
  N^{-1} \Big| \scp{ W^*(\sqrt N \varphi) U_K \Psi_{N} }{\big( U_K \mathcal N U^*_K -\mathcal N \big) W^*(\sqrt N \varphi) U_K \Psi_N }  \nonumber \\
&\quad\leq \norm{B_{K,x}}^2
+ 2 N^{-3/2} \sum_{j=1}^N \norm{a\left( B_{K,x_j} \right) W^*(\sqrt{N} \varphi) U_K \Psi_N}
\nonumber \\
&\quad \leq 
  \norm{B_{K,x}}^2 
+ 2    \norm{B_{K,x}} \norm{\varphi} 
+ 2 N^{-1/2}   \norm{B_{K,x}} \norm{\mathcal{N}^{1/2} U_K \Psi_N} .
\label{eq:  bound for the difference of the coherent states calculation}
\end{align}
An application of \eqref{eq: bounds for G} and \eqref{eq: bound H^G vs H^0} then leads to
\begin{align}
\eqref{eq:  bound for the difference of the coherent states calculation}
&\leq C  (1+\norm{\varphi})   \norm{B_{K,x}}
\big( 1 + \scp{U_K \Psi_N}{N^{-1} H_N^0 U_K \Psi_N}^{1/2} \big)
\nonumber \\[2mm]
&\leq C   K^{-3/2}  ( 1+\norm{\varphi}) \scp{U_K \Psi_N}{ ( N^{-1} H_{N,K}^{\rm G} + C ) U_K \Psi_N }^{1/2}.
\end{align}
In combination with \eqref{REPRESENTATION:H:F}, this shows \eqref{BOUND:CLOSENESS:GAMMA:0:1}.
\hfill$\square$

\subsection{Proof of Proposition \ref{proposition: initial bounds for gross transformed product} \label{sec: initial states}}

Throughout this section, we set $\xi_N = \psi^{\otimes N}\otimes W(\sqrt N \varphi) \Omega$.
The bound on the energy follows from
\begin{align}
\scp{\Psi_{N}}{H_N^{\rm F} \Psi_N} = \scp{\xi_{N}}{H_{N,K}^{\rm G}  \xi_{N} } 
= \scp{\xi_{N}}{ \Big( H_{N,K}^{\rm F} + \sum_{j=1}^N A_{K,x_j}  + \sum_{j,l=1}^N V_{K}(x_j-x_l) \Big)  \xi_{N}}
\end{align}
in combination with \eqref{eq: bound for V_K}, \eqref{eq: bound for A_K} and
\begin{align}\label{eq: bound for exp val wrt xi}
 \abs{ \scp{\xi_{N}}{ H_{N,K}^{\rm F}   \xi_{N}} }
 & = N \abs{ \scp{\psi}{\left( - \Delta + \Phi_{K}(\cdot,0) \right) \psi} + \norm{\varphi}^2  } \leq C N \left( \norm{\psi}_{H^1}^2 + \norm{\varphi}^2 \right).
\end{align}
In \eqref{eq: bound for exp val wrt xi}, we used the shift property of the Weyl operators \eqref{eq: Weyl operators shift property}.

For the bound on $a(\Psi_{N},\psi)$, we note that 
\begin{align}
\tr_{L^2(\mathbb{R}^3)} \abs{\gamma^{(1,0)}_{\Psi_{N}} - \ket{\psi} \bra{\psi}} = \tr_{L^2(\mathbb{R}^3)} \abs{\gamma^{(1,0)}_{\Psi_{N}} - \gamma^{(1,0)}_{ U_K \Psi_{N}}  }
\end{align}
since $\ket{\psi} \bra{\psi} = \gamma^{(1,0)}_{\xi_{N}}$ and $\xi_{N} = U_K\Psi_{N}$. Applying Lemma \eqref{LEMMA:CLOSENESS:GAMMA}, we obtain the stated estimate. 
The bound on $b(\Psi_{N},\varphi)$ follows readily from 
\eqref{eq: Gross transform shift property},
\begin{align}
b(\Psi_{N},\varphi ) &= N^{-1} \int d^3k \, \norm{a_k \, U^{*}_K ( \psi^{\otimes N}  \otimes  \Omega ) }^2
\leq    \norm{B_{K,x}}^2 \leq C K^{-3} .
\end{align}
We are thus left with the bound for $c(\Psi_{N})$, which we write with the aid of  \eqref{REPRESENTATION:H:F} as
\begin{align}
c(\Psi_{N})  = \norm{ N^{-1} \left( H_{N,K}^{\rm G} - \scp{\xi_{N}}{ H_{N,K}^{\rm G} \xi_{N}} \right) \xi_{N}}^2.
\end{align}
Recalling Definition \ref{definition: Gross transformed Froehlich Hamiltonian} and using the triangle inequality, we get
\begin{align}
\norm{ \left( H_{N,K}^{\rm G} - \scp{\xi_{N}}{ H_{N,K}^{\rm G} \xi_{N}} \right) \xi_{N}} &\le   \norm{\left( H_{N,K}^{\rm F} - \scp{\xi _{N}}{ H_{N,K}^{\rm F} \xi_{N}} \right) \xi_{N}} 
\nonumber \\[2mm]
&  + 2   \norm{  \sum_{j=1}^N A_{K,x_j}  \xi_{N}}  + 2 \norm{\sum_{j,l=1}^NV_K(x_j-x_l) \xi_N} .
\end{align}
After a lengthy but straightforward computation, using the shift property \eqref{eq: Weyl operators shift property} and the fact that $\Delta_x$ commutes with $W(\sqrt N \varphi)$, we find that  
\allowdisplaybreaks
\begin{align}
\label{eq: variance of H_K computation}
& \frac 1 N  \norm{ \left( H_{N,K}^{\rm F} - \scp{\xi_{N}}{  H_{N,K}^{\rm F} \xi_{N}} \right) \xi_{N}}^2
\nonumber \\[2mm]
&  =  \norm{\varphi}^2 + \scp{\psi}{(-\Delta)^2\psi} - \scp{\psi}{(-\Delta)\psi}^2 + N^{-1} \scp{\psi}{\norm{G_{K,x}}^2\psi}  \nonumber \\[2mm]
& +(1-N^{-1}) \norm{ \scp{\psi}{G_{K,x}\psi} }^2  + 4 \scp{\psi}{ (\Re\scp{G_{K,x}}{\varphi})^2\psi} - 4  \scp{\psi}{ \Re  \scp{G_{K,x}}{\varphi}\psi}^2  \nonumber \\[4mm]
& +2  \Re \scp{\psi}{\scp{\varphi}{G_{K,x}}\psi}   + 2 \big( \scp{\psi}{(-\Delta_x)\Re \scp{G_{K,x}}{\varphi} \psi} + \text{c.c.}  \big)
- 4 \scp{\psi}{(-\Delta)\psi} \scp{\psi}{\Re \scp{G_{K,x}}{\varphi} \psi} .
\end{align}
We shall show that the right hand side is bounded from above by a constant times $1+K N^{-1}$,  with the constant depending only on $\norm{\psi}_{H^2}$ and $\norm{\varphi}_{L^2_1}$. For the first four summands (i.e., the terms in the first line), this is obvious (recall \eqref{eq: bounds for G}). In the fifth summand, we can use  \eqref{eq: Lieb Yamazaki for expectation value of e-ikx} to conclude that 
$\norm{ \scp{\psi}{G_{K,x}\psi} }^2\le C$ independently of $K$. For each of the remaining terms on the right side of \eqref{eq: variance of H_K computation}, we use
\begin{equation}
\abs{\scp{G_{K,x}}{\varphi}}    \le  \norm{ (1+ \abs{\,\cdot \, })^{-1} G_{K,x} }_2 \, \norm{ (1+ \abs{\,\cdot\,}) \varphi},
\end{equation}
which is bounded by $C \norm{\varphi}_{L_1^2}$. Hence, we find
\begin{align}
\norm{N^{-1} \left( H_{N,K}^{\rm F} - \scp{\xi_{N}}{ H_{N,K}^{\rm F} \xi_{N}} \right) \xi_{N}}^2    \le C\big( N^{-1} + KN^{-2} \big) .
\end{align}
Next, we use
\begin{align}
\frac{1}{N} \norm{\sum_{j=1}^N A_{K,x_j} \xi_N} \le \norm{A_{K,x_1}\xi_N},
\end{align}
and recalling \eqref{eq: Definition of V-K-N}  we estimate (with $\widetilde \Psi_N = \psi^{\otimes N}\otimes \Omega$)
\begin{align}
 & \norm{  N^{-1/2}\nabla_1 \cdot
 a \big( k B_{K,x_1} \big)   W (\sqrt{N} \varphi)  \widetilde \Psi_N}  \nonumber \\[2mm]
 & =   \norm{  \nabla_1 \cdot
 \int dk\, k \overline{B_{K,x_1}(k) }   \varphi(k) \widetilde \Psi_N} \nonumber \\[2mm]
  & \le    \norm{  \int dk \, k^2  \overline{B_{K,x_1}(k) }  \varphi(k) \widetilde \Psi_N} +    \norm{  \int dk\, k  \overline{B_{K,x_1}(k) } \varphi(k) \nabla_1 \widetilde \Psi_N} \nonumber \\[4mm]
  & \le  \norm{ \abs{\cdot}  B_{K, x } }_2  \big( \norm{\abs{\cdot}  \varphi }  +   \norm{\varphi} \norm{\psi}_{H^1} \big) \le CK^{-1/2}.
\end{align}
Similarly also
\begin{align}
 \norm{  N^{-1/2} 
 a^* \big( k B_{K,x_1} \big) \cdot \nabla_1   W (\sqrt{N} \varphi)  \widetilde \Psi_N} & =  \norm{    
 \int dk \,k \, B_{K_N,x_1}(k) \big( N^{-1/2}a_k^* + \overline{\varphi(k)}\big) \cdot \nabla_1 \widetilde \Psi_N} \nonumber \\[3mm]
& \le  \norm{ \abs{\cdot}  B_{K,x}} \big( N^{-1/2}  \norm{\psi}_{H^1} +  \norm{\varphi } \norm{\nabla \psi}\big) \le C  K^{-1/2} .
\end{align}
In order to estimate the term containing $\Phi(kB_{K,x})^2$, consider
\begin{align}
W^*(\sqrt N \varphi) \Phi(k B_{K ,x_1}) W(\sqrt N \varphi) & = \Phi(k B_{K ,x_1}) + 2 \sqrt N\Re \scp{kB_{K ,x_1}}{\varphi},
\end{align}
and thus
\begin{align}
& \norm{ N^{-1 }\Phi(k B_{K,x_1})^2  W(\sqrt{N}\varphi) \widetilde \Psi_N} 
\nonumber \\[2mm]
& =  N^{-1} \norm{\Big( \Phi(k B_{K ,x_1}) + 2 \sqrt N\Re \scp{ k B_{K ,x_1}} {\varphi}\Big)^2 \widetilde \Psi_N} 
\nonumber \\[2mm] 
& \le 2 N^{-1 }\norm{ \Phi(k B_{K ,x_1})^2  \widetilde \Psi_N} 
  + 8 \vert \scp{  {|} k B_{K  ,x_1}  {|} } { {|} \varphi  {|}} \vert^2.
\end{align}
In the last line, we use  \eqref{eq: bounds for G} to obtain  
\begin{equation}
\norm{ \Phi(k B_{K, {x_1}})^2  \widetilde \Psi_N} = \sqrt{3}   \norm{k B_{K , {x}}}^2 \leq C K^{-1}. 
\end{equation}
Finally, using \eqref{eq: bound for V_K}, we estimate 
\begin{align}
N^{-1} \norm {\sum_{j,l=1}^N V_K(x_j-x_l) \xi_N } \le N^{-1} \sum_{j,l=1}^N\norm{V_K(x_j-x_l) \xi_N} \le C K^{-1},
\end{align}
which completes the proof of the proposition.
\hfill$\square$

\subsection{Proof of Theorem \ref{theorem: main theorem: 2}}
\label{section: proof of the proposition for free domain}

Given Theorem \ref{theorem: main theorem}, \eqref{MAIN:BOUND:TRACE:NORM:1:0:THEOREM:2} follows from \eqref{BOUND:CLOSENESS:GAMMA:1:0} 
 together with the bound
\begin{align}\label{eq: strong convergence of U}
\norm{(1-U_K) \Psi_{N}} \le \frac{C N}{ K^{3/2} } \norm{\Big( \frac{H_{N,K}^{\rm G} + C N }{N}\Big)^{1/2} \Psi_{N}},\quad \Psi_N \in \mathcal{D} ( H_N^0 ).
\end{align}
The latter follows from $\mathcal N \le H^0_N$, \eqref{eq: bound H^G vs H^0} and the functional calculus for self-adjoint operators, using $1-U_K = f(  N^{-1/2} \sum_{j=1}^N \Pi(B_{K,x_j} ) )$ with $f(s) = 1-\exp(is)$ and the bound $\vert f(s) \vert \le \vert s \vert$. In more detail, let $\Psi_{N,t}$ as in Theorem \ref{theorem: main theorem: 2} and denote $\Phi_{N,t} = e^{- i H_N^{\rm F} t} U^*_K \Psi_{N,0}$. Then, using \eqref{eq: strong convergence of U},
\begin{align}
\textnormal{Tr}_{L^2(\mathbb{R}^3)} \abs{ \gamma^{(1,0)}_{\Psi_{N,t}} - \gamma^{(1,0)}_{\Phi_{N,t} } } \le 2 \norm{e^{-iH_N^{\rm F} t } (\Psi_{N,0} - \Phi_{N,0} )} = 2 \norm{(1-U_K) \Psi_{N,0}} \le \frac{CN}{K^{3/2}}, 
\end{align}
and the triangle inequality,
\begin{align}
\textnormal{Tr}_{L^2(\mathbb{R}^3)} \abs{\gamma^{(1,0)}_{\Psi_{N,t}} - \ket{\psi_t} \bra{\psi_t} } \le \frac{C N}{K^{3/2}}  + \textnormal{Tr}_{L^2(\mathbb{R}^3)} \abs{\gamma^{(1,0)}_{\Phi_{N,t} } - \ket{\psi_t} \bra{\psi_t} }.
\end{align}
Since $\Phi_{N,0} \in \mathcal{D} ( H_N^{\rm F} )$ and 
$E_0 = \sup_{N \in \mathbb{N}} \vert N^{-1} \scp{\Phi_{N,0}}{H_N^{\rm F} \Phi_{N,0}} \vert < \infty$ by assumption, we infer with Theorem \ref{theorem: main theorem} that
\begin{align}
& \textnormal{Tr}_{L^2(\mathbb{R}^3)} \abs{\gamma^{(1,0)}_{\Phi_{N,t}} - \ket{\psi_t} \bra{\psi_t} }  \nonumber \\[1mm]
&\quad  \quad \le  \sqrt{a (\Phi_{N,0 },\psi) +  b (\Phi_{N,0},\varphi) + c (\Phi_{N,0}) + N^{-1/2} }  e^{C(1 + \abs{t})^3}.
\end{align}
Using Lemma \ref{LEMMA:CLOSENESS:GAMMA}, we have 
\begin{align}\label{eq: bounds a,b,c}
  a(\Phi_{N,0},\psi) \leq a(\Psi_{N,0 },\psi) + C K^{-3/2} , \quad b(\Phi_{N,0 },\varphi ) \leq  b(\Psi_{N,0} ,\varphi )  + CK^{-3/2},
\end{align}
which proves the first bound in Theorem \ref{theorem: main theorem: 2} if we set $K=K_N\ge c N^{5/6}$.

In order to prove \eqref{MAIN:BOUND:TRACE:NORM:0:1:THEOREM:2}, we estimate
\begin{subequations}
\begin{align}
& \scp{W^*  (\sqrt{N} \varphi_t) \Psi_{N,t}}{\sqrt{ \tfrac{\mathcal N}{N} } \, W^* (\sqrt{N} \varphi_t) \Psi_{N,t}} \nonumber \\[2mm]
& \quad \quad \quad \le \big\vert \scp{W^{*} (\sqrt{N} \varphi_t) \Psi_{N,t}}{ \sqrt{ \tfrac{\mathcal N}{N} }  \, W^{*} (\sqrt{N} \varphi_t) \Phi_{N,t}} \big\vert \\[1mm] 
& \quad \quad \quad + \big\vert \scp{W^{*} (\sqrt{N} \varphi_t)  \Psi_{N,t}}{\sqrt{ \tfrac{\mathcal N}{N} }  \, W^{*} (\sqrt{N} \varphi_t)  (\Psi_{N,t}-\Phi_{N,t} )  } \big\vert \label{eq: proof of sqrt N line 2}
\end{align}
\end{subequations}
with $\Phi_{N,t}$ defined as above. In the first line, we use the Cauchy--Schwarz inequality and apply Theorem \ref{theorem: main theorem} to $\Phi_{N,t}$, i.e.,
\begin{align}
\norm{\sqrt{ \frac{\mathcal N}{N} } W^*(\sqrt N \varphi_t) \Phi_{N,t} }^2 & \le \Big( a (\Phi_{N,0},\psi) +  b (\Phi_{N,0},\varphi) + c(\Phi_{N,0}) + N^{-1/2} \Big) e^{C(1 + \abs{t})^3} 
\nonumber \\
& \le \Big( a (\Psi_{N,0},\psi) +  b (\Psi_{N,0},\varphi) + c(\Phi_{N,0}) +  K^{-3/2} +  N^{-1/2} \Big)   e^{C(1 + \abs{t})^3},
\end{align}
where we made use of \eqref{eq: bounds a,b,c} in the second step. In  \eqref{eq: proof of sqrt N line 2}, we estimate
\begin{align}
\norm{\Psi_{N,t} - \Phi_{N,t}} \le \norm{(1-U_{K}) \Psi_{N,0}} \le \frac{C N}{ K^{3/2} } \norm{\Big( \frac{H_{N,K}^{\rm G} +C  N }{N}\Big)^{1/2} \Psi_{N,0}},
\end{align}
together with
\begin{align}\label{eq: bound sqrt N}
\norm{\sqrt{ \frac{\mathcal N}{N} } W^*(\sqrt N \varphi) \Psi_{N,t}} \le C\Bigg(  \norm{\Big( \frac{H_{N,K}^{\rm G} +C N}{N} \Big)^{1/2} \Psi_{N,0}} + \norm{\varphi_t}\Bigg),
\end{align}
which for $K=K_N\ge c N^{5/6}$ proves \eqref{MAIN:BOUND:TRACE:NORM:0:1:THEOREM:2}. In order to show \eqref{eq: bound sqrt N}, we use the commutation relations \eqref{eq: Weyl operators shift property} and $2 \Phi(\sqrt N \varphi_t) \le \mathcal N + N\norm{\varphi_t}^2$, in order to find
\begin{align}
\scp{W^*(\sqrt N \varphi_t) \Psi_{N,t}}{ \mathcal N W^*(\sqrt N \varphi_t) \Psi_{N,t}} \le 2 \big( \scp{ \Psi_{N,t}}{ \mathcal N \Psi_{N,t}} + N \norm{ \varphi_t }^2  \big).
\end{align}
Using \eqref{eq: bound H^F vs H^0} in combination with \eqref{eq: bound H^G vs H^0} leads to
\begin{align}
\mathcal N \le  H_N^0   \le  2 H_N^{\rm F} + C N  & =  e^{- iH_N^{ \rm F} t} (2  H_N^{\rm F} + C N)  e^{i H_N^{\rm F} t }  \le C e^{- iH_N^{ \rm F} t} ( H_{N,K}^{\rm G} + C N)  e^{i H_N^{\rm F} t }.
\end{align}
It remains to show \eqref{eq: redyced density bound  product state} and \eqref{eq: redyced density bound  product state line 2}: For the Pekar state $\Psi_{N} = \psi^{\otimes N} \otimes W(\sqrt{N} \varphi) \Omega$, we have 
\begin{align}
 a(\Psi_{N},\psi ) = 0, \quad  b(\Psi_{N},\varphi ) =0, \quad c( U_{K}^* \Psi_{N} ) \leq  C \Big(N^{-1} + K^{-1} + \frac{K}{N^2}  \Big),
\end{align}
where the last bound was proven in Proposition \ref{proposition: initial bounds for gross transformed product}. Thus, if we choose $K= K_N = c N$, we obtain \eqref{eq: redyced density bound  product state} and \eqref{eq: redyced density bound  product state line 2}.
\hfill$\square$

\appendix

\section{Auxiliary bounds}\label{appendix: interaction bounds}

In this appendix, we collect bounds on the interaction terms of the Hamiltonians $H_N^{\rm F}$ and $H_{N,K}^{\rm G}$ and derive the frequently used inequalities \eqref{eq: bound H^F vs H^0} and \eqref{eq: bound H^G vs H^0}. After that we comment on the proof of Lemma \ref{lemma representation of H_F}.

\begin{lemma}
\label{lemma: bounds for the interaction term}
For every $\varepsilon >0$,  $K \in (0,\infty]$, $N \in \mathbb{N}$ and $j \in \{1,\ldots,N\}$, we have 
\begin{align}\label{eq: bound interaction vs free Ham}
\pm N^{-1/2} \Phi(G_{K,x_j}) 
& \leq \varepsilon \left( - \Delta_j + \frac{  \mathcal{N}+1}N  \right) +2  \frac {(16\pi)^2}{\varepsilon^3} 
\end{align}
on $L^2(\mathbb R^{3N}) \otimes \mathcal F_s$. 
Moreover, with $A_{K,x}$  defined  in Definition \ref{definition: Gross transformed Froehlich Hamiltonian}, 
\begin{align}
\pm A_{K,x_j} &\leq \sqrt{\frac{64  \pi }{K}} \left( - \Delta_j + N^{-1} \mathcal{N} \right) + \frac{16\pi}{N K}
\label{eq: bound for A_K}.
\end{align}
\end{lemma}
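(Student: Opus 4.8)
The plan is to obtain both estimates from the classical commutator method of Lieb and Yamazaki \cite{liebyamazaki} combined with Young's inequality. It suffices to verify the corresponding inequalities for the associated quadratic forms on a dense set of vectors $\Psi$ that are smooth in the particle variables and lie in the range of a spectral projection of $\mathcal N$; the general case follows by approximation.

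For \eqref{eq: bound interaction vs free Ham} I would first fix a momentum scale $\Lambda>0$ (optimized only at the end) and split $G_{K,x_j}=G^{<}_{x_j}+G^{>}_{x_j}$ into the parts supported in $\{\abs{k}\le\min(\Lambda,K)\}$ and $\{\min(\Lambda,K)<\abs{k}\le K\}$. The low part is harmless since $\norm{G^{<}_{x_j}}^2=4\pi\min(\Lambda,K)\le 4\pi\Lambda$, so \eqref{eq: bound for Pi} gives $\abs{\scp{\Psi}{\Phi(G^{<}_{x_j})\Psi}}\le 2\sqrt{4\pi\Lambda}\,\norm{(\mathcal N+1)^{1/2}\Psi}\norm{\Psi}$. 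For the high part I would exploit $\abs{k}^{-1}=\sum_{l=1}^{3}(-ik_l)\tfrac{ik_l}{\abs{k}^{3}}$ to write $a(G^{>}_{x_j})=\sum_{l=1}^{3}\big[\partial_{x_j^l},a(f^l_{x_j})\big]$ with $f^l_{x_j}(k)=\tfrac{ik_l}{\abs{k}^{3}}e^{-ikx_j}\mathds{1}_{\min(\Lambda,K)<\abs{k}\le K}$, for which $\sum_l\norm{f^l_{x_j}}^2=\int_{\min(\Lambda,K)<\abs{k}\le K}\abs{k}^{-4}\,d^3k\le 4\pi/\Lambda$. Inserting this into the expectation value and integrating by parts in $x_j$ turns each commutator into $-\scp{\partial_{x_j^l}\Psi}{a(f^l_{x_j})\Psi}-\scp{a^*(f^l_{x_j})\Psi}{\partial_{x_j^l}\Psi}$; since $\norm{a(f^l_{x_j})\Psi}$ and $\norm{a^*(f^l_{x_j})\Psi}$ are both $\le\norm{f^l_{x_j}}\norm{(\mathcal N+1)^{1/2}\Psi}$, Cauchy--Schwarz in $l$ yields $\abs{\scp{\Psi}{\Phi(G^{>}_{x_j})\Psi}}\le 8\sqrt{\pi/\Lambda}\,\norm{\nabla_j\Psi}\norm{(\mathcal N+1)^{1/2}\Psi}$. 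Dividing by $\sqrt N$ and writing $X=\norm{((\mathcal N+1)/N)^{1/2}\Psi}$, $Y=\norm{\nabla_j\Psi}$, the high-momentum term is $\le 8\sqrt{\pi/\Lambda}\,XY$ and the low-momentum term $\le 2\sqrt{4\pi\Lambda}\,X\norm{\Psi}$. Choosing $\Lambda\sim\varepsilon^{-2}$ makes the first $\le\varepsilon(X^2+Y^2)$ via $2XY\le X^2+Y^2$, after which Young's inequality applied to $\sqrt{\Lambda}\,X\norm{\Psi}\sim\varepsilon^{-1}X\norm{\Psi}$ produces $\le\varepsilon X^2+C\varepsilon^{-3}\norm{\Psi}^2$; a final rescaling of $\varepsilon$ collects the coefficients into the claimed form. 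The cubic power of $1/\varepsilon$ is exactly the cost of the scaling $\Lambda\sim\varepsilon^{-2}$.

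For \eqref{eq: bound for A_K} no momentum splitting is needed, since the form factor $\abs{\cdot}B_{K,x_j}$ already decays fast enough: by \eqref{eq: bounds for G}, $\norm{\abs{\cdot}B_{K,x_j}}^2\le 4\pi/K$. The quadratic term $N^{-1}\Phi(kB_{K,x_j})^2$ is nonnegative, and by \eqref{eq: bound for Pi} its expectation is $\le\tfrac{16\pi}{K}\scp{\Psi}{\tfrac{\mathcal N}{N}\Psi}+\tfrac{16\pi}{NK}\norm{\Psi}^2$, which supplies precisely the additive constant. For the linear term $L_j=-2iN^{-1/2}\big(\nabla_j\cdot a(kB_{K,x_j})+a^*(kB_{K,x_j})\cdot\nabla_j\big)$ I would integrate by parts in $x_j$ to rewrite $\scp{\Psi}{L_j\Psi}=-4N^{-1/2}\,\Im\sum_{l=1}^{3}\scp{\partial_{x_j^l}\Psi}{a(k_lB_{K,x_j})\Psi}$, and then use Cauchy--Schwarz in $l$, \eqref{eq: bound for annihialtion and creation operators} and $\sum_l\norm{k_lB_{K,x_j}}^2=\norm{\abs{\cdot}B_{K,x_j}}^2\le 4\pi/K$ to get $\abs{\scp{\Psi}{L_j\Psi}}\le 8\sqrt{\pi/K}\,\norm{\nabla_j\Psi}\norm{(\mathcal N/N)^{1/2}\Psi}\le 4\sqrt{\pi/K}\big(\scp{\Psi}{-\Delta_j\Psi}+\scp{\Psi}{\tfrac{\mathcal N}{N}\Psi}\big)$ via $2ab\le a^2+b^2$. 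Adding the two contributions and noting $4\sqrt{\pi/K}=\tfrac12\sqrt{64\pi/K}$ gives \eqref{eq: bound for A_K} in the range of $K$ (namely $K\ge\widetilde K$) in which it is applied.

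The step I expect to be the main obstacle is the high-momentum estimate in \eqref{eq: bound interaction vs free Ham}: the commutator decomposition and the subsequent integration by parts must be arranged so that every term produced carries exactly one particle gradient $\partial_{x_j^l}\Psi$ paired against a \emph{fixed} $L^2$ form factor, and never the mixed quantity $\norm{\mathcal N^{1/2}\nabla_j\Psi}$ (equivalently $\scp{\Psi}{(-\Delta_j)\mathcal N\Psi}$), which is not controlled by the right-hand side $-\Delta_j+(\mathcal N+1)/N$. Keeping each annihilation operator as an annihilation operator where it meets a gradient, and using the $a^*$-bound only against the fixed functions $f^l_{x_j}$, is what makes this work; the remaining items — the integrations by parts, the density reduction, and the numerical optimization over $\Lambda$ and the Young parameters — are routine.
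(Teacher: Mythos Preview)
Your proposal is correct and follows essentially the same route as the paper. For \eqref{eq: bound interaction vs free Ham} the paper also splits at a momentum scale $K'=16\pi/\varepsilon^2$ (your $\Lambda$), bounds the low part directly via \eqref{eq: bound for Pi}, and writes the high part as $\scp{\Psi_N}{[\nabla_j,\Phi(g_{x_j})]\Psi_N}$ with $g_x(k)=ik\abs{k}^{-3}e^{-ikx}\mathds{1}_{K'\le\abs{k}\le K}$, obtaining exactly your estimate $4\norm{\nabla_j\Psi_N}\norm{g_x}\norm{\sqrt{\mathcal N+1}\Psi_N}$; for \eqref{eq: bound for A_K} the paper likewise bounds the linear and quadratic pieces of $A_{K,x_j}$ separately using $\norm{\abs{\cdot}B_{K,x}}^2\le 4\pi/K$, arriving at the same intermediate inequality $\sqrt{16\pi/K}(-\Delta_j+N^{-1}\mathcal N)+\tfrac{16\pi}{K}\tfrac{\mathcal N+1}{N}$, so your observation that the stated constant $\sqrt{64\pi/K}$ absorbs the extra $\tfrac{16\pi}{K}\tfrac{\mathcal N}{N}$ only for $K\ge 16\pi$ applies equally to the paper's argument.
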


\begin{proof} To prove \eqref{eq: bound interaction vs free Ham}, we use again the commutator method by Lieb and Yamazaki \cite{liebyamazaki}. Using \eqref{eq: bound for Pi} and \eqref{eq: bounds for G}, we have 
\begin{align}\label{eq: Phi bound appendix}
\abs{ \scp{\Psi_N}{N^{-1/2}\Phi(G_{K,x_j})   \Psi_N} } 
& \le \frac{4\pi K  }{\varepsilon} \norm{\Psi_N}^2 + {\varepsilon}N^{-1}  \scp{\Psi_N}{(\mathcal N+1) \Psi_N},
\end{align}
which proves \eqref{eq: bound interaction vs free Ham} for  $K\le 64 \pi/\varepsilon^2$. In the case $K>64 \pi/\varepsilon^2$, we write $\Phi (G_{K,x_j}) =   \Phi (G_{K',x_j})  + (  \Phi(G_{K,x_j})  -  \Phi(G_{K',x_j}) )$ with $K' = 16 \pi/\varepsilon^2 $. For the first summand, we use \eqref{eq: Phi bound appendix} with $K$ replaced by $K'$ and $\varepsilon$ replaced by $\varepsilon/2$,
while for the remainder, write
\begin{equation}
  \scp{\Psi_N}{( \Phi(G_{K,x_j}) -\Phi(G_{K',x_j})  )\Psi_N} 
 =   \scp{\Psi_N}{ [\nabla_j, \Phi(g_{x_j}) ] 	 \Psi_N} 
 \end{equation}
with $g_x(k) = i k \vert k \vert^{-3}e^{-ikx} \id_{K'\le \vert k \vert \le K}(k)$.
The absolute value of the last expression is bounded from above by
\begin{equation}
4   \norm{\nabla_j \Psi_N} \| g_x\| \norm{\sqrt{ \mathcal N+1}  \Psi_N}  \le 2  N^{1/2} \norm{g_x} \scp{\Psi_N}{\Big(-\Delta_j + \frac{\mathcal N + 1}{N} \Big) \Psi_N }.
\end{equation}
Using $\norm{g_x}\le \sqrt{4\pi / K'} = \frac{\varepsilon}{4}$ shows \eqref{eq: bound interaction vs free Ham}. 

To show \eqref{eq: bound for A_K}, we use $\norm{\abs{\cdot } B_{K,x_j}}^2 \leq 4 \pi K^{-1}$ and 
\begin{align}
& \hspace{-0.25cm}\abs{\scp{\Psi_N}{A_{K,x_j} \Psi_N}}\nonumber \\[2mm]
&\leq 4 N^{-1/2} \abs{\scp{\nabla_j \Psi_N}{a \big( k B_{K,x_j} \big) \Psi_N}} + N^{-1} \norm{\Phi \big( k B_{K,x_j} \big) \Psi_N }^2
\nonumber \\[2mm]
&\leq 4 N^{-1/2} \norm{\nabla_j \Psi_N} \norm{\abs{\cdot } B_{K,x}} \norm{\mathcal{N}^{1/2} \Psi_N}   + 4 N^{-1}  \norm{\abs{\cdot} B_{K,x}}^2 \norm{\left( \mathcal{N} + 1 \right)^{1/2} \Psi_N}^2
\nonumber \\[0.5mm]
&\leq \sqrt{\frac{16 \pi}{K}} \scp{\Psi_N}{\left( - \Delta_j + N^{-1} \mathcal{N}  \right) \Psi_N} + \frac{16 \pi}{K}   \scp{\Psi_N}{\frac{\mathcal N + 1}{N}\Psi_N}  .
\end{align}
\end{proof}

The previous lemma readily implies the validity of the bounds \eqref{eq: bound H^F vs H^0} and \eqref{eq: bound H^G vs H^0}.
With 
\begin{align}
H_N^{\rm F} = H_N^0 + N^{-1/2}\sum_{j=1}^N \Phi (G_{\infty,x_j}),
\end{align}
we can use \eqref{eq: bound interaction vs free Ham} with $\varepsilon = 1/2$ in order to infer \eqref{eq: bound H^F vs H^0}. 
Using in addition \eqref{eq: bound for V_K} and \eqref{eq: bound for A_K}, one similarly obtains \eqref{eq: bound H^G vs H^0}.\hfill $\square$\medskip

\noindent \textbf{Comment on the proof of Lemma \ref{lemma representation of H_F}.} As  already explained, Lemma \ref{lemma representation of H_F} was stated and proved in \cite{griesemerwuensch} for the case $N=1$. Since the statement $N\ge 2$ can be proven by almost literal adaption of the argument from \cite{griesemerwuensch} (with obvious minor modifications), we omit all details except for the proof of the following lemma. The bound given in the lemma is one of the main ingredients in the proof, and in particular its $N$-dependence is crucial since it guarantees that we can choose $\widetilde K$ in Lemma \ref{lemma representation of H_F} independently of $N$.

\begin{lemma}
\label{Lemma: norm bound of the Gross transformed Hamiltonian} For any $\varepsilon >0$ there are $K_{\varepsilon} >0$ and $C_{\varepsilon} >0$ such that for all $N \in \mathbb{N}$, $K \geq K_{\varepsilon}$ and any $\Psi_N \in \mathcal{D} ( H_N^0 )$,
\begin{align}
\label{eq: norm bound of the Gross transformed Hamiltonian}
\norm{\left( H_{N,K}^{\rm G} - H_N^0 \right) \Psi_N} \leq \varepsilon \norm{H_N^0 \Psi_N} + C_\varepsilon K N \norm{\Psi_N}.
\end{align}
\end{lemma}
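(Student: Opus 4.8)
The plan is to estimate the difference $H_{N,K}^{\rm G} - H_N^0$ term by term, using the explicit decomposition from Definition \ref{definition: Gross transformed Froehlich Hamiltonian}, namely
\[
H_{N,K}^{\rm G} - H_N^0 = N^{-1/2}\sum_{j=1}^N \Phi(G_{K,x_j}) + \sum_{j=1}^N A_{K,x_j} + \sum_{j,l=1}^N V_K(x_j-x_l).
\]
For each of the three groups of terms, the goal is a bound of the form $\varepsilon_K \norm{H_N^0\Psi_N} + C_K N\norm{\Psi_N}$ where $\varepsilon_K\to 0$ as $K\to\infty$. The bound on the potential $\sum_{j,l}V_K(x_j-x_l)$ is immediate from \eqref{eq: bound for V_K}: there are $N^2$ summands, each of size $CK^{-1}N^{-1}$, so this contributes at most $CK^{-1}N\norm{\Psi_N} \le CKN\norm{\Psi_N}$, which is absorbed into the second term on the right-hand side of \eqref{eq: norm bound of the Gross transformed Hamiltonian}.

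The substantive part is the operator norm estimate of $N^{-1/2}\sum_j \Phi(G_{K,x_j})$ and $\sum_j A_{K,x_j}$ relative to $H_N^0$. For the field term, I would write $G_{K,x_j} = \abs{k}^{-1}e^{-ikx_j}\mathds 1_{\abs k\le K}$ and split the momentum integral at a threshold that will be sent to infinity with $K$, using the Lieb--Yamazaki commutator trick exactly as in the proof of Lemma \ref{lemma: bounds for the interaction term}: write $e^{ikx_j}=(1+k^2)^{-1}(e^{ikx_j}-k\cdot[i\nabla_j,e^{ikx_j}])$ so that the high-momentum part of $\Phi(G_{K,x_j})$ gains two extra powers of $\abs k$ at the cost of one factor of $\nabla_j$. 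Commuting $\nabla_j$ to the left produces $-\Delta_j$ (controlled by $H_N^0$) times $\mathcal N^{1/2}$ (also controlled by $H_N^0$, since $\mathcal N\le H_N^0$), while the remaining $L^2$-norms of the (weighted) form factors are finite uniformly in $x_j$. Summing over $j$ and using $-\sum_j\Delta_j\le H_N^0$ and $\mathcal N\le H_N^0$ gives a bound of the form $c K^{-1/2}\norm{H_N^0\Psi_N} + C_K N\norm{\Psi_N}$, where the low-momentum cutoff contribution is handled crudely by \eqref{eq: bound for Phi K} (giving the $C_K N$ piece). Making $c K^{-1/2} < \varepsilon$ fixes $K_\varepsilon$. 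The $A_{K,x_j}$ term is treated the same way, but is actually easier: from \eqref{eq: bounds for G} one has $\norm{\abs\cdot B_{K,x}}^2\le 4\pi K^{-1}$ and $\norm{B_{K,x}}^2\le 4\pi K^{-3}$, so the gradient-coupled pieces $N^{-1/2}(\nabla_j\cdot a(kB_{K,x_j}) + a^*(kB_{K,x_j})\cdot\nabla_j)$ are bounded by $CK^{-1/2}(-\Delta_j + N^{-1}\mathcal N)$ and the quadratic piece $N^{-1}\Phi(kB_{K,x_j})^2$ by $CK^{-1}N^{-1}(\mathcal N+1)$ (as in the proof of \eqref{eq: bound for A_K}); summing over $j$ and again using $-\sum_j\Delta_j\le H_N^0$, $\mathcal N\le H_N^0$ gives $CK^{-1/2}\norm{H_N^0\Psi_N} + CK^{-1}N\norm{\Psi_N}$, with no need for any momentum splitting here since $kB_{K,x}$ is already square-integrable with the right decay.

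Combining the three bounds, choosing $K_\varepsilon$ so that the sum of the $H_N^0$-coefficients is below $\varepsilon$, and collecting all the $N\norm{\Psi_N}$-errors into a single constant $C_\varepsilon K N\norm{\Psi_N}$ (using $K^{-1}\le K$ and bounding the low-momentum cutoff error of the field term, which scales like $K$, by $C_\varepsilon K$) yields \eqref{eq: norm bound of the Gross transformed Hamiltonian}. The main obstacle is the field term $N^{-1/2}\sum_j\Phi(G_{K,x_j})$: the naive estimate via \eqref{eq: bound for Phi K} produces an error of order $KN$ but \emph{no} smallness in the $H_N^0$-coefficient, so one genuinely needs the Lieb--Yamazaki commutator argument to extract a $K$-small relative bound, and one must be careful that commuting $\nabla_j$ past the annihilation/creation operators and past $e^{ikx_j}$ only generates terms already controlled by $H_N^0$ and that all $x_j$-dependent $L^2$ norms of form factors are uniform in $x_j$ (which they are, by translation invariance). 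The $N$-uniformity of $K_\varepsilon$ is automatic from this scheme because every relative-bound coefficient depends only on $K$, the factors of $N^{-1/2}$ exactly compensating the $N$ summands in the kinetic and number operators.
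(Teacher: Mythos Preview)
Your overall decomposition and plan are right, and the treatment of $\sum_{j,l}V_K(x_j-x_l)$ is fine. But the assessment of where the difficulty lies is inverted, and the argument for $A_{K,x_j}$ has a real gap.

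First, the field term $N^{-1/2}\sum_j\Phi(G_{K,x_j})$ is actually the \emph{easy} part, and no Lieb--Yamazaki argument is needed. From \eqref{eq: bound for Pi} and \eqref{eq: bounds for G} one has $N^{-1/2}\sum_j\|\Phi(G_{K,x_j})\Psi_N\|\le C\sqrt{KN}\,\|(\mathcal N+1)^{1/2}\Psi_N\|$, and since $\|(\mathcal N+1)^{1/2}\Psi_N\|^2\le \|H_N^0\Psi_N\|\,\|\Psi_N\|+\|\Psi_N\|^2$, Young's inequality with a free parameter $\delta$ already yields $\delta\|H_N^0\Psi_N\|+C\delta^{-1}KN\|\Psi_N\|$. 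Thus this term imposes no constraint on $K$ at all.

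Second, the genuine obstacle is $\sum_j A_{K,x_j}$, and the argument you sketch does not close. What you invoke is the quadratic form bound \eqref{eq: bound for A_K}, but a form bound $\pm A\le B$ does not give $\|A\Psi\|\le\|B\Psi\|$. If instead you try a direct norm estimate of $N^{-1/2}\nabla_j\cdot a(kB_{K,x_j})$, you must commute $\nabla_j$ past $a(kB_{K,x_j})$, and the commutator is $a(|k|^2 B_{K,x_j})$. The form factor $|k|^2 B_{K,x}(k)=-|k|(1+k^2)^{-1}e^{-ikx}\mathds 1_{|k|\ge K}$ is \emph{not} in $L^2(\mathbb R^3)$ (the radial integrand behaves like $1$ at infinity), so the standard estimate $\|a(f)\Psi\|\le\|f\|\,\|\mathcal N^{1/2}\Psi\|$ is unavailable. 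Your remark that ``$kB_{K,x}$ is already square-integrable with the right decay'' is therefore beside the point: it is $|k|^2 B_{K,x}$ that matters after commutation. The paper handles exactly this term with a Frank--Schlein--type operator bound,
\[
a^*(|k|^2 B_{K,x_1})\,a(|k|^2 B_{K,x_1})\ \le\ \widetilde C_K\,(1-\Delta_1)\,\mathcal N,
\qquad \widetilde C_K=\sup_{h}\int_{|k|\ge K}\frac{d^3k}{(1+k^2)(1+(h-k)^2)}\xrightarrow[K\to\infty]{}0,
\]
which trades the missing $L^2$-decay of the form factor for one power of $(1-\Delta_1)$. This is the step that actually forces $K\ge K_\varepsilon$ and is missing from your outline.
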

\begin{proof}
We estimate each term in
\begin{align}
H_{N,K}^{\rm G} - H_N^0 &= \sum_{j=1}^N \left( N^{-1/2} \Phi(G_{K,x_j})   + A_{K,x_j} \right) 
+ \sum_{j,l=1}^N V_{K}(x_j-x_l)  
\end{align}
separately. Using \eqref{eq: bound for Pi}, $\norm{G_{K,x}} \leq C \sqrt{K}$ and $\norm{\mathcal{N}^{1/2} \Psi_N}^2 \leq \norm{H_N^0 \Psi_N} \norm{\Psi_N}$,
we have
\begin{align}
N^{-1/2} \sum_{j=1}^N \norm{ \Phi \left( G_{K,x_j} \right) \Psi_N}
&\leq C \sqrt{K N} \left( \norm{\mathcal{N}^{1/2} \Psi_N} + \norm{\Psi_N} \right)
\nonumber \\
&\leq \delta \norm{H_N^0 \Psi_N} + C K N \left( 1 + \delta^{-1} \right) \norm{\Psi_N}  
\end{align}
for any $\delta >0$. The norm of $\sum_{j,l =1}^N V_{K}(x_j-x_l) \Psi_N$ can be bounded using \eqref{eq: bound for V_K}. From \eqref{eq: Definition of V-K-N} we see that the remaining terms to estimate are the following:
\begin{align}
N^{-1} \sum_{j=1}^N \norm{\Phi \left( k B_{K, x_j} \right)^2 \Psi_N}
&\leq C K^{-1} \norm{\left( \mathcal{N} + 1 \right) \Psi_N}
\nonumber \\
&\leq C K^{-1}  \left(  \norm{H_N^0 \Psi_N} + \norm{\Psi_N}  \right),
\end{align}
where we have used \eqref{eq: bound for Pi} and \eqref{eq: bounds for G}. 
Similarly, we have
\begin{align}
2 N^{-1/2} \sum_{j=1}^N \norm{a^* \left( k B_{K, x_j} \right) \nabla_j \Psi_N }
&\leq C \sqrt{N K^{-1}} \norm{\sqrt{\mathcal{N} + 1} \nabla_j \Psi_N}
\nonumber \\
&\leq C K^{-1/2} \left(  \norm{H_N^0 \Psi_N} + \norm{\Psi_N}  \right) 
\end{align}
and 
\begin{align}
2 N^{-1/2} \sum_{j=1}^N \norm{\nabla_j a ( k B_{K, x_j} ) \Psi_N }
&\leq 2 N^{-1/2} \sum_{j=1}^N \left(  \norm{ a ( k B_{K, x_j} ) \nabla_j  \Psi_N }
+ \norm{ a ( \abs{k}^2 B_{K, x_j} ) \Psi_N } \right)
\nonumber \\
&\leq C K^{-1/2} \norm{H_N^0 \Psi_N} + 2 N^{1/2} \norm{a ( \abs{k}^2 B_{K,x_1} ) \Psi_N}.
\end{align}
In order to estimate the second summand, we use
\begin{align}\label{eq: a*a bound in appendix}
a^* ( \abs{k}^2 B_{K ,x_1} ) a ( \abs{k}^2 B_{K,x_1} )  \le \widetilde C_K (1-\Delta_1) \mathcal N ,
\end{align}
where  
\begin{align}
\label{eq: constant in appendix}
\widetilde{C}_K =  \sup_{h \in \mathbb{R}^3}  \int_{\abs{k} \geq K} d^3k \, \frac{ 1 }{ ( 1 + \abs{k}^2 ) \left( 1 +  ( h - k \right)^2 ) } .
\end{align}
The bound \eqref{eq: a*a bound in appendix} is analogous to \eqref{eq: bound for a*a analogous to Fran and Schlein} and can be proven in the same way as  \cite[Lemma~10]{frankschlein} (see also \cite[Lemma B.5]{griesemer}). If one estimates the integral in \eqref{eq: constant in appendix}  using the Cauchy--Schwarz inequality one sees that $\widetilde C_K\to 0$ for $K\to \infty$. We thus have
\begin{align}
2 N^{1/2} \norm{a ( \abs{k}^2 B_{K,x_j} ) \Psi_N}
&\leq 2 \widetilde{C}_K N^{1/2} \norm{\left( 1 - \Delta_j \right)^{1/2} \mathcal{N}^{1/2} \Psi_N}
\nonumber \\
&= 2  \widetilde{C}_K  \scp{\Psi_N}{  \mathcal{N} \sum\nolimits_{j=1}^N \left( 1 - \Delta_j \right) \Psi_N}^{1/2} 
\nonumber \\
&\leq  \widetilde{C}_K \left( \norm{H_N^0 \Psi_N} + N \norm{\Psi_N} \right),
\end{align}
and hence,
\begin{align}
2 N^{-1/2} \sum_{j=1}^N \norm{\nabla_j a \left( k B_{K , x_j} \right) \Psi_N }
&\leq  \left(  \widetilde{C}_K + C K^{-1/2} \right) \left( \norm{H_N^0 \Psi_N} +  N  \norm{\Psi_N} \right).
\end{align}
Choosing $K$ large enough and $\delta$ sufficiently small completes the proof of the lemma.
\end{proof}

\section{Time derivative of $\beta^b_K(t)$}

\label{section: rigorous derivation of the time derivative}

Because of the unboundedness of the annihilation operator, it is not directly obvious that one can use the product rule of differentiation to obtain \eqref{eq: time derivative of beta-b product rule a} and \eqref{eq: time derivative of beta-b product rule b}. Its rigorous justification relies on the estimate (for $\chi_N \in \mathcal{D}(H_N^0)$)
\begin{align}
\label{eq: pull through formula}
\left \|\mathcal{N} \chi_N \right \|
&\leq
\left \| H_N^0 \chi_N \right \|
\leq 2  \left \| H_{N,K}^{\rm{G}} \chi_N \right \| + C K N \left\| \chi_N \right \| ,
\end{align}
which follows from Lemma \ref{Lemma: norm bound of the Gross transformed Hamiltonian}. Since $U_K \Psi_{N,t} = e^{-i H_{N,K}^{\rm{G}} t} U_K  \Psi_{N,0}$, this together with the strong continuity of $e^{-i H_{N,K}^{\rm{G}} t}$ implies
\begin{align}
\label{eq: convergence with number operator}
\lim_{h \rightarrow 0} \left \| \left( \mathcal{N} + 1 \right) U_K \left(  \Psi_{N,t+h} -  \Psi_{N,t} \right) \right \| = 0
\quad \text{for all} \;  U_K  \Psi_{N,0} \in \mathcal{D} \left( H_N^0 \right).
\end{align}
Note that 
\begin{subequations}
\begin{align}
\label{eq: rigirous derivative beta-b A}
\beta_K^b(\Psi_{N,t+h}, \varphi_{t+h}) - \beta_K^b(\Psi_{N,t}, \varphi_t)
&=    \beta_K^b(\Psi_{N,t+h}, \varphi_t) -  \beta_K^b(\Psi_{N,t}, \varphi_t)
\\
\label{eq: rigirous derivative beta-b B}
&\quad +  \beta_K^b(\Psi_{N,t+h}, \varphi_{t+h}) - \beta_K^b(\Psi_{N,t+h}, \varphi_{t})
\end{align}
\end{subequations}
and that the first line is given by
\begin{align}
\eqref{eq: rigirous derivative beta-b A} &= 2  N^{-1} \Re \scp{W ( \sqrt{N} \varphi_t ) \mathcal{N} W^*(\sqrt{N} \varphi_t) U_K \Psi_{N,t}}{ U_K \left( \Psi_{N,t+h} - \Psi_{N,t} \right)}
\nonumber \\
&\quad + N^{-1} \scp{W(\sqrt{N} \varphi_t) \mathcal{N} W^*(\sqrt{N} \varphi_t )  U_K \left( \Psi_{N,t+h} - \Psi_{N,t} \right)}{ U_K \left( \Psi_{N,t+h} - \Psi_{N,t} \right)}.
\end{align}
In the second line we use that 
\begin{align}
\label{eq: convergence with number operator and weyl operators}
&\norm{ \mathcal{N} W^*(\sqrt{N} \varphi_t) U_K \left( \Psi_{N,t+h} - \Psi_{N,t} \right)}
\nonumber \\
&\quad \leq C  \left( 1 + N \norm{\varphi_t}^2 \right)  \norm{\left( \mathcal{N} + 1 \right) U_K \left( \Psi_{N,t+h} - \Psi_{N,t} \right)}
\rightarrow 0 
\end{align}
as $h \rightarrow 0$ because of \eqref{eq: convergence with number operator} and obtain
\begin{align}
&\lim_{h \rightarrow 0}  h^{-1} \left( \beta_K^b(\Psi_{N,t+h}, \varphi_t) -  \beta_K^b(\Psi_{N,t}, \varphi_t) \right)
\nonumber \\
&\quad = -2  N^{-1} \Re \scp{W(\sqrt{N} \varphi_t) \mathcal{N} W^*(\sqrt{N} \varphi_t) U_K \Psi_{N,t}}{i H_{N,K}^{\rm{G}}  U_K \Psi_{N,t} }
\end{align}
with Stone's theorem. Next, we consider
\begin{subequations}
\begin{align}
\label{eq: rigirous derivative beta-b  B 1}
\eqref{eq: rigirous derivative beta-b B}
&= - 2 \Re \int d^3k \,
\scp{\left( \varphi_{t+h}(k) - \varphi_t(k) \right) U_K \Psi_{N,t+h}}{\left( N^{-1/2} a_k - \varphi_t(k) \right) U_K \Psi_{N,t+h}}
\\
\label{eq: rigirous derivative beta-b B 2}
&\qquad 
+ 2  \Re \int d^3k \,
\scp{\left( \varphi_{t+h}(k) - \varphi_t(k) \right) U_K \Psi_{N,t}}{\left( N^{-1/2} a_k - \varphi_t(k) \right) U_K \Psi_{N,t}}
\\
&\qquad +  \norm{\varphi_{t+h} - \varphi_t}^2
\\
&\qquad -  2 \Re \int d^3k \, 
\scp{\left( \varphi_{t+h}(k) - \varphi_t(k) \right) U_K \Psi_{N,t}}{\left( N^{-1/2} a_k - \varphi_t(k) \right) U_K \Psi_{N,t}}.
\end{align}
\end{subequations}
Using the Cauchy--Schwarz inequality we estimate the first two terms by
\begin{align}
&\abs{\eqref{eq: rigirous derivative beta-b B 1} + \eqref{eq: rigirous derivative beta-b B 2}}
\nonumber \\
&\quad \leq 2 \int d^3k \, \abs{\varphi_{t+h}(k) - \varphi_t(k)}
\bigg( \abs{\scp{U_K \Psi_{N,t+h}}{\left( N^{-1/2} - \varphi_t(k) \right) U_K \left( \Psi_{N,t+h} - \Psi_{N,t} \right)}}
\nonumber \\
&\qquad  + \abs{\scp{U_K \left( \Psi_{N,t+h} - \Psi_{N,t} \right)}{\left( N^{-1/2} a_k - \varphi_t(K) \right) U_K \Psi_{N,t}}} \bigg)
\nonumber \\
&\quad \leq 2 N^{-1/2} \norm{\left( \varphi_{t+h} - \varphi_t \right)} 
\bigg(  \norm{\mathcal{N}^{1/2} W^*(\sqrt{N} \varphi_t) U_K \left( \Psi_{N,t+h} - \Psi_{N,t} \right)}
\nonumber \\
&\qquad +
\norm{U_K \left( \Psi_{N,t+h} - \Psi_{N,t} \right)} \norm{\mathcal{N}^{1/2} W^*(\sqrt{N} \varphi_t) U_K \Psi_{N,t}}
\bigg).
\end{align}
Stone's theorem and \eqref{eq: convergence with number operator and weyl operators} then lead to
\begin{align}
&\lim_{h \rightarrow 0} h^{-1} \left( \beta^b(\Psi_{N,t+h}, \varphi_{t+h})  - \beta_K^b(\Psi_{N,t+h}, \varphi_t) \right)
\nonumber \\
&\quad =   - 2 \Re \int d^3k \, 
\scp{ \left( \partial_t \varphi_t(k) \right) U_K \Psi_{N,t}}{\left( N^{-1/2} a_k - \varphi_t(k) \right) U_K \Psi_{N,t}} .
\end{align}
In combination this shows \eqref{eq: time derivative of beta-b product rule a} and \eqref{eq: time derivative of beta-b product rule b}.\medskip


\section*{Acknowledgments}

Financial support by the European Research Council (ERC) under the European Union's Horizon 2020 research and innovation programme (grant agreement No 694227; N.L and R.S.), 
the SNSF Eccellenza project PCEFP2 181153 (N.L) and the Deutsche Forschungsgemeinschaft (DFG) through the Research Training Group 1838: Spectral Theory and Dynamics of Quantum Systems (D.M.) is gratefully acknowledged. 
N.L. gratefully acknowledges support from the NCCR SwissMAP and would like to thank Simone Rademacher and Benjamin Schlein for interesting discussions about the time-evolution of the polaron at strong coupling. D.M. thanks Marcel Griesemer and Andreas W\"unsch for extensive discussions about the Fr\"ohlich polaron.


{}


\begin{thebibliography}{11}

\addcontentsline{toc}{chapter}{Bibliography}


\bibitem{ammarifalconi}
Z.~Ammari and M.~Falconi,
Bohr's correspondence principle for the renormalized Nelson model.
\emph{SIAM J. Math. Anal.} 49(6), 5031--5095 (2017).

\bibitem{benedikterschlein}
N.~{Benedikter}, G.~{de Oliviera}, and B.~{Schlein},
Quantitative derivation of the Gross-Pitaevskii equation.
\emph{Comm. Pure Appl. Math.} 68(8), 1399--1482 (2015).


\bibitem{brenneckeschlein}
C.~Brennecke and B.~Schlein,
Gross-Pitaevskii dynamics for Bose--Einstein condensates.
\emph{Anal. PDE} 12(6), 1513--1596 (2019).

\bibitem{CCFO}
R.~Carlone, M.~Correggi, M.~Falconi  and M.~Olivieri,
Microscopic Derivation of Time-dependent Point Interactions.
\emph{Preprint}, \href{https://arxiv.org/abs/1904.11012}{arXiv:1904.11012} (2019). 

\bibitem{CFO}
M.~Correggi, M.~Falconi  and M.~Olivieri,
Quasi-Classical Dynamics.
\emph{Preprint}, \href{https://arxiv.org/abs/1909.13313}{arXiv:1909.13313} (2019). 


\bibitem{davies}
E.\,B.~Davies,
Particle-boson interactions and the weak coupling limit.
\emph{J. Math. Phys.} 20, 345--351 (1979).



\bibitem{falconi}
M.~Falconi,
Classical limit of the Nelson model with cutoff.
\emph{J. Math. Phys.} 54(1), 012303 (2013).


\bibitem{frankgang}
R.\,L.~Frank and Z.~Gang,
Derivation of an effective evolution equation for a strongly coupled polaron.
\emph{Anal. PDE} 10(2), 379--422 (2017).

\bibitem{frankgang2}
R.\,L.~Frank and Z.~Gang,
A non-linear adiabatic theorem for the one-dimensional Landau--Pekar equations.
\emph{J. Funct. Anal.} 279(7), 108631  (2020).


\bibitem{frankschlein}
R.\,L.~Frank and B.~Schlein,
Dynamics of a strongly coupled polaron.
\emph{Lett. Math. Phys.} 104, 911--929 (2014).

\bibitem{frankseiringer}
R.\,L.~Frank and R.~Seiringer,
Quantum corrections to the Pekar asymptotics of a strongly coupled polaron.
\emph{Preprint}, \href{https://arxiv.org/abs/1902.02489}{arXiv:1902.02489} (2019), Comm. Pure Appl. Math (in press).


\bibitem{ginibrenironivelo}
J.~Ginibre, F.~Nironi, and G.~Velo,
Partially classical limit of the Nelson model.
\emph{Ann. H. Poincar\'e} 7, 21--43 (2006).

\bibitem{ginibrevelo}
J.~Ginibre and G.~Velo,
The classical field limit of scattering theory for nonrelativistic many-boson systems I and II.
\emph{Commun. Math. Phys.} 66(1), 37--76 (1979) and 68(1), 45--68 (1979).

\bibitem{griesemer}
M.~Griesemer,
On the dynamics of polarons in the strong-coupling limit.
\emph{Rev. Math. Phys.} 29(10), 1750030 (2017).
 
\bibitem{griesemerwuensch}
M.~Griesemer and A.~W\"unsch,
Self-adjointness and domain of the Fr\"ohlich Hamiltonian.
\emph{J. Math. Phys.} 57(10), 021902 (2016).
 
\bibitem{gross}
E.\,P.~Gross,
Particle-like solutions in field theory,
\emph{Ann. Phys.} 19, 219--233 (1962).
 

\bibitem{hepp}
K.~Hepp, 
The classical limit for quantum mechanical correlation functions.
\emph{Commun. Math. Phys.} 35, 265--277 (1974). 

\bibitem{hiroshima}
F.~Hiroshima,
Weak coupling limit with a removal of an ultraviolet cutoff for a Hamiltonian of particles interacting with a massive scalar field.
\emph{Infin. Dimens. Anal. Qu.} 1, 407--423 (1998).

\bibitem{jeblickleopoldpickl}
M.~Jeblick, N.~Leopold, and P.~Pickl,
Derivation of the Time Dependent Gross-Pitaevskii Equation in Two Dimensions.
\emph{Commun. Math. Phys.}, 372, 1--69 (2019).  


 
\bibitem{landaupekar}
L.\,D.~Landau and S.\,I~Pekar,
Effective mass of a polaron.
\emph{Zh. Eksp. Teor. Fiz.} 18(5), 419--423 (1948).

\bibitem{leopoldpetrat}
N.~Leopold and S.~Petrat,
Mean-Field Dynamics for the Nelson Model with Fermions.
\emph{Ann. H. Poincar\'e} 20(10), 3471--3508 (2019).


\bibitem{leopoldpickl}
N.~Leopold and P.~Pickl,
Derivation of the Maxwell-Schr\"odinger equations from the Pauli--Fierz Hamiltonian. 
\emph{SIAM J. Math. Anal.} 52(5), 4900--4936 (2020).


\bibitem{leopoldpickl2}
N.~Leopold and P.~Pickl,
Mean-field limits of particles in interaction with quantized radiation fields.
In: D.~Cadamuro, M.~Duell, W.~Dybalski, and S.~Simonella (eds) \emph{Macroscopic Limits of Quantum Systems}, volume 270 of Springer Proceedings in Mathematics \& Statistics, 185--214 (2018).

\bibitem{LRSS}
N.~Leopold, S.~Rademacher, B.~Schlein and R.~Seiringer,
The Landau--Pekar equations: Adiabatic theorem and accuracy.
\emph{Preprint}, \href{https://arxiv.org/abs/1904.12532}{	arXiv:1904.12532} (2019), Anal. \& PDE (in press).


\bibitem{liebyamazaki}
E.\,H.~Lieb and K.~Yamazaki,
Ground-state Energy and Effective Mass of the Polaron.
\emph{Phys. Rev.} 111, 728--733 (1958).

\bibitem{mitrouskaspetratpickl}
D.~Mitrouskas, S.~Petrat and P.~Pickl,  
Bogoliubov corrections and trace norm convergence for the Hartree dynamics.
\emph{Rev. Math. Phys.} 31(8) (2019). 


\bibitem{nelson}
E.~Nelson,
Interaction of nonrelativistic particles with a quantized scalar field.
\emph{J. Math. Phys.} 5(9), 1190--1197 (1964).


\bibitem{pickl1}
P.~Pickl,
A simple derivation of mean field limits for quantum systems.
\emph{Lett. Math. Phys.} 97, 151--164 (2011).

\bibitem{pickl2}
P.~Pickl,
Derivation of the time dependent Gross-Pitaevskii equation with external fields.
\emph{Rev. Math. Phys.} 27(1), 1550003, 45 pp. (2015).

\bibitem{rodnianskischlein}
I.~Rodnianski, B.~Schlein,
Quantum fluctuations and rate of convergence towards mean field dynamics.
\emph{Commun. Math. Phys.} 291(1), 31--61 (2009).

 
\bibitem{teufel}
S.~Teufel, 
Effective $N$-body dynamics for the massless Nelson model and adiabatic decoupling without spectral gap.
\emph{Ann. H. Poincar\'e} 3, 939--965 (2002).
\end{thebibliography}
\end{document}